\documentclass[11pt]{article} % For LaTeX2e
\usepackage{times}
\usepackage[margin=1in]{geometry}
\usepackage{booktabs}
\usepackage{algorithmic}
\usepackage[ruled,vlined]{algorithm2e}

% Optional math commands from https://github.com/goodfeli/dlbook_notation.
%%%%% NEW MATH DEFINITIONS %%%%%

\usepackage{amsmath,amsfonts,bm}

% Mark sections of captions for referring to divisions of figures

% Highlight a newly defined term

% Figure reference, lower-case.

% Figure reference, capital. For start of sentence

% Section reference, lower-case.

% Section reference, capital.

% Reference to two sections.

% Reference to three sections.

% Reference to an equation, lower-case.
\def\eqref#1{equation~\ref{#1}}
% Reference to an equation, upper case

% A raw reference to an equation---avoid using if possible

% Reference to a chapter, lower-case.

% Reference to an equation, upper case.

% Reference to a range of chapters

% Reference to an algorithm, lower-case.

% Reference to an algorithm, upper case.

% Reference to a part, lower case

% Reference to a part, upper case

\def\1{\bm{1}}

\def\eps{{\epsilon}}

% Random variables

% rm is already a command, just don't name any random variables m

% Random vectors

% Elements of random vectors

% Random matrices

% Elements of random matrices

% Vectors

\def\va{{\bm{a}}}

\def\ve{{\bm{e}}}

\def\vu{{\bm{u}}}
\def\vv{{\bm{v}}}
\def\vw{{\bm{w}}}
\def\vx{{\bm{x}}}
\def\vy{{\bm{y}}}
\def\vz{{\bm{z}}}

% Elements of vectors

% Matrix
\def\mA{{\bm{A}}}

\def\mK{{\bm{K}}}

% Tensor
\DeclareMathAlphabet{\mathsfit}{\encodingdefault}{\sfdefault}{m}{sl}
\SetMathAlphabet{\mathsfit}{bold}{\encodingdefault}{\sfdefault}{bx}{n}

% Graph

\def\gD{{\mathcal{D}}}

\def\gF{{\mathcal{F}}}

\def\gL{{\mathcal{L}}}

% Sets

% Don't use a set called E, because this would be the same as our symbol
% for expectation.

\def\sH{{\mathbb{H}}}

\def\sN{{\mathbb{N}}}

% Entries of a matrix

% entries of a tensor
% Same font as tensor, without \bm wrapper

% The true underlying data generating distribution

% The empirical distribution defined by the training set

% The model distribution

% Stochastic autoencoder distributions

 % Laplace distribution

\newcommand{\E}{\mathbb{E}}

\newcommand{\R}{\mathbb{R}}
\newcommand{\N}{\mathbb{N}}

% Wolfram Mathworld says $L^2$ is for function spaces and $\ell^2$ is for vectors
% But then they seem to use $L^2$ for vectors throughout the site, and so does
% wikipedia.

 % See usage in notation.tex. Chosen to match Daphne's book.

\DeclareMathOperator{\sign}{sign}

\usepackage[backref,colorlinks,citecolor=blue,bookmarks=true]{hyperref}
\usepackage{mathtools, amssymb, amsthm, bbm}
\numberwithin{equation}{section}

\usepackage{parskip}
\setlength{\parskip}{5pt}

\SetAlCapSkip{1em}
\usepackage{mathtools, amssymb, amsthm, bbm}
\usepackage[nameinlink,capitalize]{cleveref}

\theoremstyle{plain}
\newtheorem{theorem}{Theorem}[section]
\newtheorem{lemma}[theorem]{Lemma}
\newtheorem{corollary}[theorem]{Corollary}
\newtheorem{proposition}[theorem]{Proposition}
\newtheorem{fact}[theorem]{Fact}

\newtheorem{assumption}[theorem]{Assumption}
\newtheorem*{claim}{Claim}

\theoremstyle{definition}
\newtheorem{definition}[theorem]{Definition}

\theoremstyle{remark}
\newcommand{\normtwoinf}[1]{\norm{#1}_{2}^{\infty}}

\newcommand{\twonorm}[1]{\|#1\|_{2}}
\newcommand{\W}{\mathbf{W}}
\newcommand{\mkl}{\mathsf{MK}_{\ell}}
\newcommand{\hkml}{\sH_{\mkl}}
\newcommand{\norm}[1]{\|#1\|}
\newcommand{\Kernel}{\mathcal{K}}
\newcommand{\inprod}[2]{\langle #1, #2\rangle}
\newcommand{\x}{\vx}
\newcommand{\vell}{\boldsymbol{\ell}}

\newcommand{\cmkl}[1]{\mathsf{MK}_{\vell}^{(#1)}}
\newcommand{\cpsi}[1]{\psi_{\vell}^{(#1)}}
\newcommand{\Dtrain}{\gD}
\newcommand{\Dtest}{\gD'}
\newcommand{\Dtrainx}{\gD_{\x}}
\newcommand{\Dtestx}{\gD_{\x}'}

\newcommand{\ind}{\mathbbm{1}}

\newcommand{\opt}{\mathrm{opt}}
\newcommand{\clip}{\mathrm{cl}}

\def\Abs#1{\left| #1 \right|}

\def\Norm#1{\left\| #1 \right\|}
\def\Paren#1{\left( #1 \right)}
\def\Brack#1{\left[ #1 \right]}

\newcommand{\tv}{\mathrm{d}_\mathrm{tv}}

\newcommand{\mindex}{\alpha}
\newcommand{\moment}{\mathrm{M}}
\newcommand{\momentempirical}{\widehat{\moment}}

\newcommand{\pbound}{B}
\newcommand{\mslack}{\Delta}
\newcommand{\degbound}{\ell}

\renewcommand{\Pr}{\mathbf{Pr}}
\newcommand{\pr}{\mathbf{Pr}}

\newcommand{\mtrain}{m_{\mathrm{train}}}
\newcommand{\mtest}{m_{\mathrm{test}}}

\newcommand{\poly}{\mathrm{poly}}
\newcommand{\phM}{\clip_M(\hat p)}
\newcommand{\psM}{\clip_M(p^{*})}
\newcommand{\fsM}{\clip_M(f^{*})}

\newcommand{\psMx}{\clip_M(p^{*}(\x))}
\newcommand{\fsMx}{\clip_M(f^{*}(\x))}
\usepackage{url}
\usepackage[title]{appendix}

\title{Learning Neural Networks with Distribution Shift: \\Efficiently Certifiable Guarantees}

\author{
     Gautam Chandrasekaran\thanks{\texttt{gautamc@cs.utexas.edu}. Supported by the NSF AI Institute for Foundations of Machine Learning (IFML).} \\ UT Austin
     \and Adam R. Klivans\thanks{\texttt{klivans@cs.utexas.edu}. Supported by NSF award AF-1909204 and the NSF AI Institute for Foundations of Machine Learning (IFML).} \\
	 UT Austin
     \and Lin Lin Lee \thanks{\texttt{llee3@utexas.edu}. Supported by the NSF AI Institute for Foundations of Machine Learning (IFML).} \\ UT Austin
	 \and Konstantinos Stavropoulos\thanks{\texttt{kstavrop@cs.utexas.edu}. Supported by the NSF AI Institute for Foundations of Machine Learning (IFML) and by scholarships from Bodossaki Foundation and Leventis Foundation.} \\
	 UT Austin
}
\date{\today}

\begin{document}

\maketitle
\begin{abstract}
We give the first provably efficient algorithms for learning neural networks with distribution shift. We work in the Testable Learning with Distribution Shift  framework (TDS learning) of \cite{klivans2023testable}, where the learner receives labeled examples from a training distribution and unlabeled examples from a test distribution and must either output a hypothesis with low test error or reject if distribution shift is detected.  No assumptions are made on the test distribution. 

All prior work in TDS learning focuses on classification, while here we must handle the setting of nonconvex regression. Our results apply to real-valued networks with arbitrary Lipschitz activations and work whenever the training distribution has strictly sub-exponential tails. For training distributions that are bounded and hypercontractive, we give a fully polynomial-time algorithm for TDS learning one hidden-layer networks with sigmoid activations. We achieve this by importing classical kernel methods into the TDS framework using data-dependent feature maps and a type of kernel matrix that couples samples from both train and test distributions. 

\end{abstract}

\newpage
\section{Introduction}

Understanding when a model will generalize from a known training distribution to an unknown test distribution is a critical challenge in trustworthy machine learning and domain adaptation.  Traditional approaches to this problem prove generalization bounds in terms of various notions of distance between train and test distributions \cite{ben2006analysis,ben2010theory,mansour2009domadapt} but do not provide efficient algorithms.  Recent work due to \cite{klivans2023testable} departs from this paradigm and defines the model of Testable Learning with Distribution Shift (TDS learning), where a learner may reject altogether if significant distribution shift is detected.  When the learner accepts, however, it outputs a classifier and a proof that the classifier has nearly optimal test error.   

A sequence of works has given the first set of efficient algorithms in the TDS learning model for well-studied function classes where no assumptions are taken on the test distribution \cite{klivans2023testable,klivans2024learning,chandrasekaran2024efficient,goel2024tolerant}. These results, however, hold for classification and therefore do not apply to (nonconvex) regression problems and in particular to a long line of work giving provably efficient algorithms for learning simple classes of neural networks under natural distributional assumptions on the training marginal \cite{goel2019learning,diakonikolas2020approximation,diakonikolas2020algorithms,diakonikolas2022learning,chen2022learning,chen2023learning,wang2023robustly,gollakota2024agnostically,diakonikolas2024efficiently}.

The main contribution of this work is the first set of efficient TDS learning algorithms for broad classes of (nonconvex) regression problems.  Our results apply to neural networks with arbitrary Lipschitz activations of any constant depth.  As one example, we obtain a fully polynomial-time algorithm for learning one hidden-layer neural networks with sigmoid activations with respect to any bounded and hypercontractive training distribution.  For bounded training distributions, the running times of our algorithms match the best known running times for ordinary PAC or agnostic learning (without distribution shift).  We emphasize that unlike all prior work in domain adaptation, we make no assumptions on the test distribution.  

%More recent work departs from this paradigm and considers scenarios where a classifier may abstain or the learner may reject altogether if significant distribution shift is detected (for example, PQ learning \cite{goldwasser2020beyond,kalai2021efficient,goel2024tolerant} or Testable Learning with Distribution Shift (TDS learning) ).

\noindent\textbf{Regression Setting.} We assume the learner has access to labeled examples from the training distribution and unlabeled examples from the marginal of the test distribution. We consider the squared loss $\gL_{\gD}(h) = \sqrt{\E_{(\x,y)\sim \gD}[(y-h(\x))^2]}$. The error benchmark is analogous to the benchmark for TDS learning in classification \cite{klivans2023testable} and depends on two quantities: the optimum training error achievable by a classifier in the learnt class, $\opt = \min_{f\in\gF}[\gL_{\Dtrain}(f)]$, and the best joint error achievable by a single classifier on both the training and test distributions, $\lambda = \min_{f'\in\gF}[\gL_{\Dtrain}(f')+\gL_{\Dtest}(f')]$. Achieving an error of $\opt + \lambda$ is the standard goal in domain adaptation \cite{ben2006analysis,blitzer2007learning,mansour2009domadapt}. We now formally define the TDS learning framework for regression.

\begin{definition}[Testable Regression with Distribution Shift]
    For $\eps,\delta\in(0,1)$ and a function class $\gF\subseteq\{\R^d\to \R\}$, the learner receives iid labeled examples from some unknown training distribution $\Dtrain$ over $\R^d\times \R$ and iid unlabeled examples from the marginal $\Dtestx$ of another unknown test distribution $\Dtest$ over $\R^d\times \R$. The learner  either rejects, or it accepts and outputs hypothesis $h:\R^d\to \R$ such that the following are true.
    \begin{enumerate}
        \item (Soundness) With probability at least $1-\delta$, if the algorithm accepts, then the output $h$ satisfies $\gL_{\Dtest}(h) \le \min_{f\in\gF}[\gL_{\Dtrain}(f)] + \min_{f'\in\gF}[\gL_{\Dtrain}(f')+\gL_{\Dtest}(f')] + \eps$.
        \item (Completeness) If $\Dtrainx = \Dtestx$, then the algorithm accepts with probability at least $1-\delta$.
    \end{enumerate}
\end{definition}

\subsection{Our Results} 

    Our results hold for classes of Lipschitz neural networks. In particular, we consider functions $f$ of the following form. Let $\sigma:\R\to\R$ be an activation function. Let $\W=\left(W^{(1)},\ldots W^{(t)}\right)$ with $W^{(i)}\in \R^{s_i\times s_{i-1}}$ be the tuple of weight matrices. Here, $s_0=d$ is the input dimension and $s_{t}=1$. Define recursively the function $f_i:\R^{d}\to \R^{s_i}$ as $f_i(\x)=W^{(i)}\cdot \sigma\bigl(f_{i-1}(\x)\bigr)$ with $f_1(\x)=W^{(1)}\cdot\x$. The function $f:\R^d \to \R$ computed by the neural network $(\W,\sigma)$ is defined as 
    $f(\x)\coloneq f_{t}(\x)$. %We denote $\norm{\W}_1=\sum_{k=2}^{t}\sum_{(i,j)\in [s_{i}]\times [s_{i-1}]}|W^{(k)}_{i,j}|$. 
    The depth of this network is $t$. 

    We now present our main results on TDS learning for neural networks. 

% {
% \color{red}

% \paragraph{Results:} (maybe try tables)
% \begin{enumerate}
%     \item Bounded+Hypercontractive:
%     \begin{enumerate}
%         \item Sigmoid, one hidden layer: fully poly
%         \item ReLU: poly up to error $1/\log d$
%         \item Nets with Lipschitz neurons: ...
%         \item More statements in Section blah
%     \end{enumerate}
%     \item Strictly Subexponential:
%     \begin{enumerate}
%         \item Sigmoid nets
%         \item Lipschitz nets
%     \end{enumerate}
%     \item Brief discussion about the labels (refer to the corresponding appendix).
%     \item A word about single-index models (different Lipschitz activations).
% \end{enumerate}

% \color{black}
% }
\renewcommand{\arraystretch}{2}
\begin{table}[ht]
\centering
\begin{tabular}{|c|c|c|}
\hline
Function Class &  Runtime (Bounded) & Runtime (Subgaussian)\\
\hline
\hline
One hidden-layer Sigmoid Net & $\poly(d,M,1/\epsilon)$& $d^{ \poly(k\log(M/\epsilon))}$\\
\hline
Single ReLU & $\poly(d,M)\cdot 2^{O(1/\epsilon)}$& $d^{ \poly(k\log M/\epsilon)}$\\
\hline
Sigmoid Nets &$\poly(d,M) \cdot 2^{O\left((\log(1/\epsilon))^{t-1}\right)}$& $d^{ \poly(k\log M(\log(1/\epsilon)^{t-1}))}$\\
\hline
$1$-Lipschitz Nets & $\poly(d,M) \cdot 2^{\tilde{O}(k\sqrt{k}2^{t-1}/\epsilon)}$& $d^{\poly(k2^{t-1}\log M/\epsilon)}$\\
\hline
\end{tabular}
\caption{In the above table, $k$ denotes the number of neurons in the first hidden layer. $M$ denotes a bound on the labels of the train and test distributions. One hidden-layer Sigmoid nets refers to depth $2$ neural networks with sigmoid activation. The bounded distributions considered in the above table have support on the unit ball. We assume that all relevant parameters of the neural network are bounded by constants. For more detailed statements and proofs, see (1) \Cref{clry:polytime_tds_sigmoid_appendix,clry:polytime_tds_relu_appendix,thm:tds_learning_sigmoid_appendix,thm:tds_learning_lipschitz_appendix}  for the bounded case, and (2) \Cref{thm:tds_learning_sigmoid_subexp_appendix,thm:tds_learning_lipschitz_subexp_appendix} for the Subgaussian case.} \label{table:main-results}
\end{table}

From the above table, we highlight that in the cases of bounded distributions with (1) one hidden-layer Sigmoid Nets, and (2) Single ReLU with $\epsilon<1/\log d$, we obtain TDS algorithms that run in polynomial time in all parameters. Moreover, for the last row, regarding Lipschitz Nets, each neuron is allowed to have a different and unknown Lipschitz activation. Therefore, in particular, our results capture the class of single-index models (see, e.g., \cite{glmtron,gollakota2024agnostically}).

In the results of \Cref{table:main-results}, we assume bounded labels for both the training and test distributions. This assumption can be relaxed to a bound on any moment whose degree is strictly higher than $2$ (see \Cref{corollary:label-moment-bound-assumption-suffices}). In fact, such an assumption is necessary, as we show in \Cref{proposition:bounded-labels-necessary}.

\subsection{Our Techniques}

\noindent\textbf{TDS Learning via Kernel Methods.} 
The major technical contribution of this work is devoted to importing classical kernel methods into the TDS learning framework.  A first attempt at testing distribution shift with respect to a fixed feature map would be to form two corresponding covariance matrices of the expanded features, one from samples drawn from the training distribution and the other from samples drawn from the test distribution, and test if these two matrices have similar eigendecompositions. This approach only yields efficient algorithms for linear kernels, however, as here we are interested in spectral properties of covariance matrices in the feature space corresponding to low-degree polynomials, whose dimension is too large. 

%kernel matrices, one from samples drawn from the training distribution and the other from samples drawn from the test distribution, and test if these two matrices have similar eigendecompositions.  This approach only works for linear kernels, however, and here we are interested in spectral properties of covariance matrices in the feature space corresponding to low-degree polynomials. 

%This approach does not work, however, as the feature map used for training-- in particular the standard monomial basis in our setting-- may not be the correct basis for an arbitrary test distribution.

Instead we form a new data-dependent and concise reference feature map $\phi$, that depends on examples from both $\Dtrainx$ and $\Dtestx$. We show that this feature map approximately represents the ground truth, i.e., some function with both low training and test error (this is due to the representer theorem, see \Cref{proposition:representer-theorem}). To certify that error bounds transfer from $\Dtrainx$ to $\Dtestx$, we require {\em relative error} closeness between  covariance matrix $\Phi' = \E_{\x\sim \Dtestx}[\phi(\x)\phi(\x)^\top]$ of the feature expansion $\phi$ over the test marginal with the corresponding matrix $\Phi = \E_{\x\sim \Dtrainx}[\phi(\x)\phi(\x)^\top]$ over the training marginal. We draw fresh sets of verification examples and show how the kernel trick can be used to efficiently achieve these approximations even though $\phi$ is a nonstandard feature map. We provide a more detailed technical overview and a formal proof in \Cref{section:tds-kernel}.

By instantiating the above results using a type of polynomial kernel, we can reduce the problem of TDS learning neural networks to the problem of obtaining an appropriate polynomial approximator.  Our final {\em training} algorithm (as opposed to the testing phase) will essentially be kernelized polynomial regression. 

%Overall, we obtain TDS learning algorithms that match the runtime of the corresponding classical (distribution shift-free) kernel-based algorithms. 

\noindent\textbf{TDS Learning and Uniform Approximation.} Prior work in TDS learning has established connections between polynomial approximation theory and efficient algorithms in the TDS setting. In particular, the existence of low-degree sandwiching approximators for a concept class is known to imply dimension-efficient TDS learning algorithms for binary classification. The notion of sandwiching approximators for a function $f$ refers to a pair of low-degree polynomials $p_{\mathrm{up}}, p_{\mathrm{down}}$ with two main properties: (1) $p_{\mathrm{down}} \le f\le p_{\mathrm{up}}$ everywhere and (2) the expected absolute distance between $p_{\mathrm{up}}$ and $p_{\mathrm{down}}$ over some reference distribution is small. The first property is of particular importance in the TDS setting, since it holds everywhere and, therefore, it holds for any test distribution unconditionally.

Here we make the simple observation that the incomparable notion of uniform approximation suffices for TDS learning.  A uniform approximator is a polynomial $p$ that approximates a function $f$ pointwise, meaning that $|p-f|$ is small in every point within a ball around the origin (there is no known direct relationship between sandwiching and uniform approximators). In our setting, uniform approximation is more convenient, due to the existence of powerful tools from polynomial approximation theory regarding Lipschitz and analytic functions.

Contrary to the sandwiching property, the uniform approximation property cannot hold everywhere if the approximated function class contains high-(or infinite-) degree functions. 
%However, we are able to obtain TDS learning results even for unbounded distributions, using uniform (and not necessarily sandwiching) approximators. This is possible due to the quantitative relationship between the degree of approximation and the radius of the ball where the uniform approximation property holds. In particular, for each of the function classes we consider, the degree scales linearly to the radius of approximation. 
When the training distribution has strictly sub-exponential tails, however, the expected error of approximation outside the radius of approximation is negligible. Importantly, this property can be certified for the test distribution by using a moment-matching tester. See \Cref{section:tds-via-uniform} for a more detailed technical overview and for the full proof.

\subsection{Related Work}

\noindent\textbf{Learning with Distribution Shift.} The field of domain adaptation has been studying the distribution shift problem for almost two decades \cite{ben2006analysis,blitzer2007learning,ben2010theory,mansour2009domadapt,david2010impossibility,mousavi2020minimax,redko2020survey,kalavasis2024transfer,hanneke2019value,hanneke2024more,awasthi2024best}, providing useful insights regarding the information-theoretic (im)possibilities for learning with distribution shift. The first efficient end-to-end algorithms for non-trivial concept classes with distribution shift were given for TDS learning in \cite{klivans2023testable,klivans2024learning,chandrasekaran2024efficient} and for PQ learning, originally defined by \cite{goldwasser2020beyond}, in \cite{goel2024tolerant}. These works focus on binary classification for classes like halfspaces, halfspace intersections, and geometric concepts. In the regression setting, we need to handle unbounded loss functions, but we are also able to use Lipschitz properties of real-valued networks to obtain results even for deeper architectures. For the special case of linear regression, efficient algorithms for learning with distribution shift are known to exist (see, e.g., \cite{lei2021near}), but our results capture much broader classes. 

Another distinction between the existing works in TDS learning and our work, is that our results require significantly milder assumptions on the training distribution. In particular, while all prior works on TDS learning require both concentration and anti-concentration for the training marginal \cite{klivans2023testable,klivans2024learning,chandrasekaran2024efficient}, we only assume strictly subexponential concentration in every direction. This is possible because the function classes we consider are Lipschitz, which is not the case for binary classification.

\noindent\textbf{Testable Learning.} More broadly, TDS learning is related to the notion of testable learning \cite{rubinfeld2022testing,gollakota2022moment,gollakota2023efficient,diakonikolas2023efficient,gollakota2024tester,diakonikolas2024testable,slot2024testably}, originally defined by \cite{rubinfeld2022testing} for standard agnostic learning, aiming to certify optimal performance for learning algorithms without relying directly on any distributional assumptions. The main difference between testable agnostic learning and TDS learning is that in TDS learning, we allow for distribution shift, while in testable agnostic learning the training and test distributions are the same. Because of this, TDS learning remains challenging even in the absence of label noise, in which case testable learning becomes trivial \cite{klivans2023testable}.

\noindent\textbf{Efficient Learning of Neural Networks.} Many works have focused on providing upper and lower bounds on the computational complexity of learning neural networks in the standard (distribution-shift-free) setting \cite{reliable_goel2017,goel2019learning,goel2020superpolynomial,goel2020statistical,diakonikolas2020approximation,diakonikolas2020near,diakonikolas2020algorithms,diakonikolas2022learning,chen2022hardness,chen2022learning,chen2023learning,wang2023robustly,gollakota2024agnostically,diakonikolas2024efficiently,li2020learning,gao2019learning,zhang2019learning,vempala2019gradient,allen2019learning,bakshi2019learning,manurangsi2018computational,ge2019learning,ge2018learning,du2018convolutional,goel2018learning,tian2017analytical,li2017convergence,brutzkus2017globally,zhong2017recovery,zhang2016l1,janzamin2015beating}. The majority of the upper bounds either require noiseless labels and shallow architectures or work only under Gaussian training marginals. Our results not only hold in the presence of distribution shift, but also capture deeper architectures, under any strictly subexponential training marginal and allow adversarial label noise.

The upper bounds that are closest to our work are those given by \cite{reliable_goel2017}. They consider ReLU as well as sigmoid networks, allow for adversarial label noise and assume that the training marginal is bounded but otherwise arbitrary. Our results in \Cref{section:bounded} extend all of the results in \cite{reliable_goel2017} to the TDS setting, by assuming additionally that the training distribution is hypercontractive (see \Cref{definition:hypercontractivity-standard}). This additional assumption is important to ensure that our tests will pass when there is no distribution shift. For a more thorough technical comparison with \cite{reliable_goel2017}, see \Cref{section:bounded}.

In \Cref{section:unbounded}, we provide upper bounds for TDS learning of Lipschitz networks even when the training marginal is an arbitrary strictly subexponential distribution. In particular, our results imply new bounds for standard agnostic learning of single ReLU neurons, where we achieve runtime $d^{\poly({1/\eps})}$. The only known upper bounds work under the Gaussian marginal \cite{diakonikolas2020approximation}, achieving similar runtime. In fact, in the statistical query  framework \cite{kearns1998efficient}, it is known that $d^{\poly(1/\eps)}$ runtime is necessary for agnostically learning the ReLU, even under the Gaussian distribution \cite{diakonikolas2020near,goel2020statistical}.

% \subsubsection*{Acknowledgments}
% Use unnumbered third level headings for the acknowledgments. All
% acknowledgments, including those to funding agencies, go at the end of the paper.

\section{Preliminaries}

We use standard vector and matrix notation. We denote with $\R, \sN$ the sets of real and natural numbers accordingly. We denote with $\gD$ labeled distributions over $\R^d\times \R$ and with $\gD_\x$ the marginal of $\gD$ on the features in $\R^d$. For a set $S$ of points in $\R^d$, we define the empirical probabilities (resp. expectations) as $\Pr_{\x\sim S}[E(\x)] = \frac{1}{|S|}\sum_{\x\in S} \ind\{E(\x)\}$ (resp. $\E_{\x\sim S}[f(\x)] = \frac{1}{|S|}\sum_{\x\in S}f(\x)$). We denote with $\bar{S}$ the labeled version of $S$ and we define the clipping function $\clip_M:\R\to [-M,M]$, that maps a number $t\in\R$ either to itself if $t\in[-M,M]$, or to $M\cdot\sign(t)$ otherwise.

\noindent\textbf{Loss function.} Throughout this work, we denote with $\gL_{\gD}(h)$ the squared loss of a hypothesis $h:\R^d\to \R$ with respect to a labeled distribution $\gD$, i.e., $\gL_{\gD}(h) = \sqrt{\E_{(\x,y)\sim \gD}[(y-h(\x))^2]}$. Moreover, for any function $f:\R^d \to \R$, we denote with $\|f\|_{\gD}$ the quantity $\|f\|_{\gD} = \sqrt{\E_{\x\sim \gD_{\x}}[(f(\x))^2]}$. For a set of labeled examples $\bar{S}$, we denote with $\gL_{\bar S}(h)$ the empirical loss on $\bar S$, i.e., $\gL_{\bar S}(h) = \sqrt{\frac{1}{|\bar S|}\sum_{(\x,y)\in \bar S}(y-h(\x))^2}$ and similarly for $\|f\|_{S}$.

\noindent\textbf{Distributional Assumptions.} In order to obtain efficient algorithms, we will either assume that the training marginal $\Dtrainx$ is bounded and hypercontractive (\Cref{section:bounded}) or that it has strictly subexponential tails in every direction (\Cref{section:unbounded}). We make no assumptions on the test marginal $\Dtestx$.

Regarding the labels, we assume some mild bound on the moments of the training and the test labels, e.g., (a) that $\E_{y\sim\Dtrain_y}[y^4], \E_{y\sim\Dtest_y}[y^4] \le M$ or (b) that $y\in[-M,M]$ a.s. for both $\Dtrain$ and $\Dtest$. Although, ideally, we want to avoid any assumptions on the test distribution, as we show in \Cref{proposition:bounded-labels-necessary}, a bound on some constant-degree moment of the test labels is necessary.

\section{Bounded Training Marginals}\label{section:bounded}

We begin with the scenario where the training distribution is known to be bounded. In this case, it is known that one-hidden-layer sigmoid networks can be agnostically learned (in the classical sense, without distribution shift) in fully polynomial time and single ReLU neurons can be learned up to error $O(\frac{1}{\log(d)})$ in polynomial time \cite{reliable_goel2017}. These results are based on a kernel-based approach, combined with results from polynomial approximation theory. While polynomial approximations can reduce the nonconvex agnostic learning problem to a convex one through polynomial feature expansions, the kernel trick enables further pruning of the search space, which is important for obtaining polynomial-time algorithms. Our work demonstrates another useful implication of the kernel trick: it leads to efficient algorithms for testing distribution shift.
%give tests doe. In particular, certifying low test error can be reduced to checking some simple property of the test distribution that is relative to a data-dependent reference feature map of low dimension, which is provided by the kernel trick.

We will require the following standard notions:

%A positive definite, symmetric (PSD) kernel is a function defined as follows.
\begin{definition}[Kernels \cite{Mercer1909FunctionsOP}]\label{definition:kernel}
     A function $\Kernel: \R^d \times \R^d \to \R$ is a kernel. If for any set of $m$ points $\vx_1,\dots,\vx_m$ in $\R^d$, the matrix $(\Kernel(\vx_i,\vx_j))_{(i,j)\in[m]}$ is positive semidefinite, we say that the kernel $\Kernel$ is positive definite. The kernel $\Kernel$ is symmetric if for all $\vx,\vx'\in\R^d$, $\Kernel(\vx,\vx') = \Kernel(\vx',\vx)$.
\end{definition}
Any PSD kernel is associated with some Hilbert space $\sH$ and some feature map from $\R^d$ to $\sH$.
\begin{fact}[Reproducing Kernel Hilbert Space]\label{fact:rkhs}
    For any positive definite and symmetric (PDS) kernel $\Kernel$, there is a Hilbert space $\sH$, equipped with the inner product $\inprod{\cdot}{\cdot}:\sH \times \sH \to \R$ and a function $\psi:\R^d \to \sH$ such that $\Kernel(\x,\x') = \inprod{\psi(\x)}{\psi(\x')}$ for all $\x,\x'\in\R^d$. We call $\sH$ the reproducing kernel Hilbert space (RKHS) for $\Kernel$ and $\psi$ the feature map for $\Kernel$.
\end{fact}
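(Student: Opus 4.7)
The plan is to use the classical Moore--Aronszajn construction: I would build a pre-Hilbert space $\sH_0$ of real-valued functions on $\R^d$ whose inner product realizes $\Kernel$ on ``kernel sections'' $\Kernel(\cdot,\x)$, then complete it to obtain $\sH$. Concretely, let $\sH_0$ be the vector space of all finite linear combinations $f = \sum_{i=1}^n c_i \Kernel(\cdot, \x_i)$, viewed as functions $\R^d\to\R$, take the candidate feature map to be $\psi(\x) := \Kernel(\cdot,\x)$, and define the candidate bilinear form on representatives by
\[
\inprod{\textstyle\sum_i c_i \Kernel(\cdot,\x_i)}{\textstyle\sum_j d_j \Kernel(\cdot,\x'_j)} \;:=\; \sum_{i,j} c_i d_j\, \Kernel(\x_i,\x'_j).
\]

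The key verifications, in order, would be: (i) well-definedness via the identity $\inprod{f}{g} = \sum_j d_j f(\x'_j) = \sum_i c_i g(\x_i)$, which depends only on $f$ and $g$ as functions rather than on the chosen representations; (ii) symmetry and bilinearity, immediate from symmetry of $\Kernel$; (iii) positive semidefiniteness $\inprod{f}{f} = \sum_{i,j} c_i c_j \Kernel(\x_i,\x_j) \ge 0$, which is precisely the PSD hypothesis on $\Kernel$; and (iv) the reproducing-type bound $|f(\x)|^2 \le \inprod{f}{f}\,\Kernel(\x,\x)$, obtained by noting that $t \mapsto \inprod{f + t\,\Kernel(\cdot,\x)}{f + t\,\Kernel(\cdot,\x)} = \inprod{f}{f} + 2tf(\x) + t^2\Kernel(\x,\x)$ is nonnegative in $t\in\R$ by (iii), and inspecting its discriminant. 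Step (iv) upgrades (iii) to positive definiteness: $\inprod{f}{f}=0$ forces $f\equiv 0$ pointwise, so $\sH_0$ is a genuine inner product space.

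With $(\sH_0,\inprod{\cdot}{\cdot})$ a pre-Hilbert space, I would form $\sH$ as its metric completion. The reproducing identity $\Kernel(\x,\x') = \inprod{\psi(\x)}{\psi(\x')}$ holds by definition on $\sH_0$ and therefore persists in $\sH$. One subtle point is that the completion a priori yields equivalence classes of Cauchy sequences rather than functions; to realize $\sH$ as a function space on $\R^d$, I would use the bound from (iv) to observe that each point-evaluation functional $f\mapsto f(\x)$ is bounded on $\sH_0$, hence extends continuously to the completion, so norm-convergence implies pointwise convergence and each element of $\sH$ is canonically a function.

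The main obstacle is step (iv): one must extract a Cauchy--Schwarz--type inequality from PSD-ness alone, \emph{before} knowing that $\inprod{\cdot}{\cdot}$ is a true inner product. Once this inequality is in hand, the remainder---positive definiteness of the form, the completion argument, and the identification of $\sH$ with a function space on $\R^d$---follows by routine Hilbert space manipulations.
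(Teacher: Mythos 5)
The paper states this as a classical \emph{Fact} and offers no proof of its own; it is the standard existence theorem for reproducing kernel Hilbert spaces, usually attributed to Moore and Aronszajn. Your sketch is exactly that classical construction, and the sequence of verifications you outline is correct: well-definedness of the bilinear form, symmetry and bilinearity, positive semidefiniteness from the PSD hypothesis, the Cauchy--Schwarz-type bound extracted from nonnegativity of the quadratic $t \mapsto \inprod{f + t\,\Kernel(\cdot,\x)}{f + t\,\Kernel(\cdot,\x)}$, definiteness of the form, completion, and the identification of the completion with a function space via boundedness of point evaluations. One small remark: the Fact as literally stated only asks for \emph{some} Hilbert space $\sH$ and feature map $\psi$ realizing $\Kernel$ as an inner product, so strictly speaking the last step (realizing $\sH$ as a space of functions on $\R^d$ with the reproducing property) is more than is demanded --- but it is what justifies calling $\sH$ the RKHS and is needed for the representer theorem invoked later (\Cref{proposition:representer-theorem}), so including it is the right call. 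No gaps.
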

There are three main properties of the kernel method. First, although the associated feature map $\psi$ may correspond to a vector in an infinite-dimensional space, the kernel $\Kernel(\x,\x')$ may still be efficiently evaluated, due to its analytic expression in terms of $\x$, $\x'$. Second, the function class $\gF_\Kernel = \{\x\mapsto \inprod{\vv}{\psi(\x)}: \vv\in\sH, \inprod{\vv}{\vv} \le B\}$ has Rademacher complexity independent from the dimension of $\sH$, as long as the maximum value of $\Kernel(\x,\x)$ for $\x$ in the domain is bounded (Thm. 6.12 in \cite{mohri2018foundations}). Third, the time complexity of finding the function in $\gF_\Kernel$ that best fits a dataset is actually polynomial to the size of the dataset, due to the representer theorem (Thm. 6.11 in \cite{mohri2018foundations}). Taken together, these properties constitute the basis of the kernel method, implying learners with runtime independent from the effective dimension of the learning problem. 

In order to apply the kernel method to learn some function class $\gF$, it suffices to show that the class $\gF$ can be represented sufficiently well by the class $\gF_\Kernel$. We give the following definition.
\begin{definition}[Approximate Representation]\label{definition:approx-representation}
    Let $\gF$ be a function class over $\R^d$, $\Kernel:\R^d \times \R^d \to \R$ a PDS kernel, where $\sH$ is the corresponding RKHS and $\psi$ the feature map for $\Kernel$. We say that $\gF$ can be $(\epsilon, B)$-approximately represented within radius $R$ with respect to $\Kernel$ if for any $f\in \gF$, there is $\vv\in \sH$ with $\inprod{\vv}{\vv} \le B$ such that $|f(\x) - \inprod{\vv}{\psi(\x)}| \le \epsilon$, for all $\x\in \R^d:\|\x\|_2 \le R$.
\end{definition}
For the purposes of TDS learning, we will also require the training marginal to have be hypercontractive with respect to the kernel at hand. This is important to ensure that our test will accept whenever there is no distribution shift. More formally, we require the following.
\begin{definition}[Hypercontractivity]\label{definition:hypercontractivity}
    Let $\Dtrainx$ be some distribution over $\R^d$, let $\sH$ be a Hilbert space and let $\psi:\R^d\to \sH$.
    We say that $\Dtrainx$ is $(\psi,C,\ell)$-hypercontractive if for any $t\in\sN$ and $\vv\in\sH$:
    \[
        \E_{\x\sim\Dtrainx}[\inprod{\vv}{\psi(\x)}^{2t}] \le (Ct)^{2\ell t} (\E_{\x\sim\Dtrainx}[\inprod{\vv}{\psi(\x)}^2])^t
    \]
    If $\Kernel$ is the PDS kernel corresponding to $\psi$, we also say that $\Dtrainx$ is $(\Kernel,C,\ell)$-hypercontractive. 
\end{definition}

\subsection{TDS Regression via the Kernel Method}\label{section:tds-kernel}

We now give a general theorem on TDS regression for bounded distributions, under the following assumptions. Note that, although we assume that the training and test labels are bounded, this assumption can be relaxed in a black-box manner and bounding some constant-degree moment of the distribution of the labels suffices, as we show in \Cref{corollary:label-moment-bound-assumption-suffices}.

\begin{assumption}\label{assumption:bounded}
    For a function class $\gF\subseteq\{\R^d\to \R\}$, and training and test distributions $\Dtrain$, $\Dtest$ over $\R^d\times \R$, we assume the following.
    \begin{enumerate}
        \item $\gF$ is $(\eps,B)$-approximately represented within radius $R$ w.r.t. a PDS kernel $\Kernel:\R^d\times \R^d \to \R$, for some $\eps\in(0,1)$ and $B,R\ge 1$ and let $A = \sup_{\x:\|\x\|_2 \le R}\Kernel(\x,\x)$.
        \item The training marginal $\Dtrainx$ (1) is bounded within $\{\x: \|\x\|_2 \le R\}$ and (2) is $(\Kernel,C,\ell)$-hypercontractive for some $C,\ell \ge 1$.
        \item The training and test labels are both bounded in $[-M,M]$ for some $M\ge 1$. 
    \end{enumerate}
\end{assumption}

Consider the function class $\gF$, the kernel $\Kernel$ and the parameters $\eps, A,B,C,M,\ell$ as defined in the assumption above and let $\delta\in(0,1)$. Then, we obtain the following theorem.

\begin{theorem}[TDS Learning via the Kernel Method]\label{theorem:tds-via-kernels}
    Under \Cref{assumption:bounded}, \Cref{algorithm:tds-via-kernels} learns the class $\gF$ in the TDS regression setting up to excess error $5\epsilon$ and probability of failure $\delta$. The time complexity is $O(T) \cdot \poly(d,\frac{1}{\eps}, (\log(1/\delta))^\ell, A, B, C^\ell, 2^\ell, M)$, where $T$ is the evaluation time of $\Kernel$.
\end{theorem}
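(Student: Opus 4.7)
The plan is to reduce TDS regression to a kernelized convex program paired with a data-dependent covariance-matching tester, both implemented entirely through $O(n^2)$ kernel evaluations. First, draw unlabeled reference samples $Z_{\mathrm{tr}}\sim\Dtrainx$ and $Z_{\mathrm{te}}\sim\Dtestx$ of size $n=\poly(d,A,B,M,1/\eps)$ each, set $Z=Z_{\mathrm{tr}}\cup Z_{\mathrm{te}}$, and define the concise feature map $\phi:\R^d\to\R^{2n}$ by $\phi(\x)_j=\Kernel(\x,z_j)$. A Nystr\"om-style representer argument, applied to the Assumption-$3.4$ approximator $v_f\in\sH$ of each $f\in\gF$, produces a coefficient vector $\alpha_f\in\R^{2n}$ with $\alpha_f^\top K_Z\alpha_f\le B$ (where $K_Z$ is the Gram matrix of $Z$) such that $\|\langle\alpha_f,\phi(\cdot)\rangle-f\|_{\Dtrain}=O(\eps)$; the analogous $L_2(\Dtestx)$ control is \emph{not} assumed but rather recovered via the tester below.

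Given $\phi$, solve the convex ERM $\hat\alpha=\argmin_{\alpha^\top K_Z\alpha\le B}\gL_{\bar S}\bigl(\clip_M(\langle\alpha,\phi(\cdot)\rangle)\bigr)$ on a labeled training sample $\bar S$ of size $\poly(A,B,M,1/\eps,\log(1/\delta))$ and set $\hat h=\clip_M\circ\langle\hat\alpha,\phi(\cdot)\rangle$. Standard Rademacher bounds for RKHS balls (using $\Kernel(\x,\x)\le A$ on $\mathrm{supp}(\Dtrainx)$, $1$-Lipschitzness of $\clip_M$, and $|y|\le M$) then yield $\gL_{\Dtrain}(\hat h)\le\opt+O(\eps)$. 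The tester next draws fresh unlabeled verification samples $S\sim\Dtrainx$ and $S'\sim\Dtestx$, forms the $2n\times 2n$ matrices $\hat\Phi=\E_{\x\sim S}[\phi(\x)\phi(\x)^\top]$ and $\hat\Phi'=\E_{\x\sim S'}[\phi(\x)\phi(\x)^\top]$ (each entry a single kernel evaluation), and accepts iff $\alpha^\top\hat\Phi'\alpha\le(1+\eta)\,\alpha^\top\hat\Phi\,\alpha+\eps^2$ holds uniformly over $\{\alpha:\alpha^\top K_Z\alpha\le 4B\}$, which reduces to a single generalized-eigenvalue feasibility check.

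For soundness, suppose the tester accepts. Let $f'\in\gF$ attain the joint bound $\lambda$ with coefficient vector $\alpha_{f'}$ from the representer step. Applying the covariance inequality to $\alpha=\hat\alpha-\alpha_{f'}$ and using concentration of $\hat\Phi$ around the training-marginal covariance yields $\|\hat h-f'\|_{\Dtest}\le(1+O(\eta))\|\hat h-f'\|_{\Dtrain}+O(\eps)$. The $L_2$-triangle inequalities $\|\hat h-f'\|_{\Dtrain}\le\gL_{\Dtrain}(\hat h)+\gL_{\Dtrain}(f')\le\opt+\lambda+O(\eps)$ and $\gL_{\Dtest}(\hat h)\le\gL_{\Dtest}(f')+\|\hat h-f'\|_{\Dtest}$ combine to give the promised $\gL_{\Dtest}(\hat h)\le\opt+\lambda+5\eps$ bound after tuning $\eta$ and the tester slack. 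For completeness, when $\Dtrainx=\Dtestx$, the hypercontractivity of $\Dtrainx$ with respect to $\Kernel$, applied to $\langle\alpha,\phi(\x)\rangle^{2t}$ for $t=\Theta(\ell\log(1/\delta))$, delivers relative-error concentration of both empirical covariances around $\E[\phi\phi^\top]$ after $m_2=\poly(n,C^\ell,(\log(1/\delta))^\ell,2^\ell,1/\eps)$ verification samples; this is precisely where the $C^\ell\cdot 2^\ell$ factor in the stated runtime originates.

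The main obstacle is the representer step. Assumption $3.4$ delivers only a \emph{pointwise} $\eps$-approximator $v_f\in\sH$ valid inside the ball $\{\|\x\|_2\le R\}$, which automatically controls $L_2(\Dtrainx)$ (since $\Dtrainx$ is supported there) but offers no a priori control over $L_2(\Dtestx)$. The crucial fix is including $Z_{\mathrm{te}}$ in the reference set: the Nystr\"om projection then lies in a subspace rich enough that its $L_2(\Dtestx)$ distance from $\langle v_f,\psi\rangle$ is governed by $\hat\Phi'$, which the tester certifies close to $\hat\Phi$, closing the loop. Quantifying this interplay---choosing $n$ to make the Nystr\"om error $O(\eps)$, choosing the tester slack so that soundness gives the promised $5\eps$ bound, and verifying that all three sample-size bounds simultaneously hold---is the technical heart of the argument.
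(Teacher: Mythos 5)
Your skeleton is the same as the paper's: draw reference samples from both marginals to build a data-dependent concise feature map $\phi$, use the representer theorem to show that the relevant RKHS approximators collapse onto $\phi$, solve a kernelized constrained ERM for $\hat\alpha$, then accept or reject based on a relative-error spectral comparison of fresh empirical covariance matrices $\hat\Phi$, $\hat\Phi'$, with completeness guaranteed by hypercontractive multiplicative concentration. That part matches.

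However, there is a genuine gap in the soundness argument. The spectral test certifies a relation between the \emph{empirical} matrices $\hat\Phi$ and $\hat\Phi'$, but the soundness conclusion is a bound on $\gL_{\Dtest}(\hat h)$, a population quantity over $\Dtestx$. You bridge the training side by ``concentration of $\hat\Phi$ around the training-marginal covariance,'' which is fine because $\Dtrainx$ is assumed bounded and hypercontractive, but you never explain how to bridge the test side — and this cannot be done by concentration of $\hat\Phi'$ around $\Phi'=\E_{\Dtestx}[\phi\phi^\top]$, since \emph{no} assumptions are placed on $\Dtestx$. In the paper this is where the clipping trick and Rademacher-complexity machinery earn their keep: one moves from $\Dtestx$ to empirical quantities on $S_{\mathrm{ref}}'$ and $S_{\mathrm{ver}}'$ via Hoeffding when the involved functions are fixed and \emph{bounded} (hence the $\clip_M$), and via Rademacher-complexity generalization bounds for the RKHS ball when the approximator $\tilde p^*$ is itself data-dependent on $S_{\mathrm{ref}}'$ (so Hoeffding fails). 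Without that two-step bridge, the chain ``$\alpha^\top\hat\Phi'\alpha\lesssim\alpha^\top\hat\Phi\alpha\Rightarrow\|\hat h-f'\|_{\Dtest}\lesssim\|\hat h-f'\|_{\Dtrain}$'' simply does not close; it silently assumes a test-side concentration we are not entitled to. Relatedly, your intermediate claim that the Nystr\"om coefficient $\alpha_f$ already satisfies $\|\langle\alpha_f,\phi\rangle-f\|_{\Dtrain}=O(\eps)$ is also a population statement; the representer theorem only gives the empirical guarantee $\|f-\tilde p\|_{S}\le\eps$ on the sample used to construct the projection, and one needs a generalization step (again clipping plus Rademacher) to lift it.

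Two smaller issues: the ERM you write, $\argmin_{\alpha^\top K_Z\alpha\le B}\gL_{\bar S}\bigl(\clip_M(\langle\alpha,\phi(\cdot)\rangle)\bigr)$, is not a convex program because of the clipping inside the loss; the paper's ERM is unclipped (a quadratic objective with a quadratic constraint) and clipping is applied only to the final hypothesis. Also, by allowing $\hat\alpha$ to put weight on $Z_{\mathrm{te}}$ you make $\hat h$ depend on the test reference sample, which would invalidate the Hoeffding step that requires $\hat h$ to be independent of the test reference set; the paper avoids this by restricting the ERM coefficients to the training reference points and only padding with zeros afterward.
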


\begin{algorithm}[t]
\caption{TDS Regression via the Kernel Method}\label{algorithm:tds-via-kernels}
\KwIn{Parameters $M,R,B,A,C,\ell \ge 1$, $\eps,\delta\in(0,1)$ and sample access to $\Dtrain$, $\Dtestx$}
\vspace{.25em}
%Let $c\ge 1$ be some sufficiently large universal constant \\
Set $m = c\frac{(ABM)^4}{\eps^4}\log(\frac{1}{\delta})$, $N = c m^2\frac{ABC}{\eps^4}(4C\log(\frac{4}{\delta}))^{4\ell+1}$, $c$ large enough constant\\ %, $\tilde{\eps} = \frac{\eps^2}{50 AB}$ \\
\vspace{.25em}
Draw $m$ i.i.d. labeled examples $\bar{S}_{\mathrm{ref}}$ from $\Dtrain$ and $m$ i.i.d. unlabeled examples $S_{\mathrm{ref}}'$ from $\Dtestx$; \\
\If{for some $\x\in S_{\mathrm{ref}}'$ we have $\|\x\|_2 > R$}{\textbf{Reject} and terminate;}
Let $\hat \va = (\hat a_{\vz})_{\vz\in S_{\mathrm{ref}}}$ be the optimal solution to the following convex program
\begin{align*}
    \min_{\va \in \R^m} &\sum_{(\x,y)\in \bar{S}_\mathrm{ref}}\Bigr(y-\sum_{\vz\in S_{\mathrm{ref}}} a_{\vz}\Kernel(\vz,\x)\Bigr)^2 \\
    \text{s.t. } & \sum_{\vz,\vw\in S_{\mathrm{ref}}} a_{\vz}a_{\vw} \Kernel(\vz,\vw) \le B, \text{ where }\va = (a_{\vz})_{\vz\in S_{\mathrm{ref}}}
\end{align*}
\\
\vspace{.25em}
Draw $N$ i.i.d. unlabeled examples ${S}_{\mathrm{ver}}$ from $\Dtrainx$ and $N$ unlabeled examples $S_{\mathrm{ver}}'$ from $\Dtestx$; \\
\If{for some $\x\in S_{\mathrm{ver}}'$ we have $\|\x\|_2 > R$}{\textbf{Reject} and terminate;}
Compute the matrix $\hat\Phi = (\hat\Phi_{\vz,\vw})_{\vz,\vw\in S_{\mathrm{ref}}\cup S_{\mathrm{ref}}'}$ with $\hat\Phi_{\vz,\vw} = \frac{1}{N}\sum_{\x\in S_{\mathrm{ver}}}\Kernel(\x,\vz)\Kernel(\x,\vw)$;\\
Compute the matrix $\hat\Phi' = (\hat\Phi'_{\vz,\vw})_{\vz,\vw\in S_{\mathrm{ref}}\cup S_{\mathrm{ref}}'}$ with $\hat\Phi'_{\vz,\vw} = \frac{1}{N}\sum_{\x\in S_{\mathrm{ver}}'}\Kernel(\x,\vz)\Kernel(\x,\vw)$;\\
Let $\rho$ be the value of the following eigenvalue problem
\begin{align*}
    \max_{\va \in \R^{2m}} & \va^\top\hat\Phi'\va \,\,\,\text{ s.t. } \va^\top\hat\Phi\va \le 1
\end{align*}
\If{$\rho > 1+\frac{\eps^2}{50 AB}$}{\textbf{Reject} and terminate;}
Otherwise, \textbf{accept} and output $h: \x \mapsto h(\x) = \clip_M(\hat p(\x))$, where $\hat p(\x) = \sum_{\vz\in S_{\mathrm{ref}}} \hat{a}_{\vz}\Kernel(\vz,\x)$;
\end{algorithm}%The algorithm runs in two phases. In the first phase, it draws a set of labeled reference examples from the training distribution and uses them to find a candidate output hypothesis, exploiting the kernel trick to obtain runtime savings. It also draws a number of unlabeled reference examples from the test marginal. In the second (testing) phase, the algorithm draws fresh sets of verification examples from both the training and the test marginals. For each of these sets, it forms a matrix, which also depends on the reference examples, but can be written as an average over the corresponding set of verification examples. The algorithm then ensures that the maximum eigenvalue of the verification matrix corresponding to test data is bounded by a multiple of the maximum eigenvalue of the verification matrix corresponding to training data. 
%\subsection{Overview of the Algorithm}\label{section:tds-kernel-overview}
%In this section we describe the main ideas of the algorithm and its analysis.
The main ideas of our proof are the following.

\noindent\textbf{Obtaining a concise reference feature map.} The algorithm first draws reference sets $S_{\mathrm{ref}}, S_{\mathrm{ref}}'$ from both the training and the test distributions. The representer theorem, combined with the approximate representation assumption (\Cref{definition:approx-representation}) ensure that the reference examples define a new feature map $\phi: \R^d \to \R^{2m}$ with $\phi(\x) = (\Kernel(\x,\vz))_{\vz\in S_{\mathrm{ref}}\cup S_{\mathrm{ref}}'}$ such that the ground truth $f^* = \arg \min_{f\in\gF} [\gL_{\Dtrain}(f)+ \gL_{\Dtest}(f)]$ can be approximately represented as a linear combination of the features in $\phi$ with respect to both $S_{\mathrm{ref}}$ and $S_{\mathrm{ref}}'$, i.e., $\|f^*-({\va^*})^\top{\phi}\|_{S_{\mathrm{ref}}}$ and $\|f^*-({\va^*})^{\top}{\phi}\|_{S_{\mathrm{ref}}'}$ are both small for some $\va^*\in \R^{2m}$.  In particular, we have the following.
\begin{proposition}[Representer Theorem, modification of Theorem 6.11 in \cite{mohri2018foundations}]\label{proposition:representer-theorem}
    Suppose that a function $f:\R^d\to \R$ can be $(\eps,B)$-approximately represented within radius $R$ w.r.t. some PDS kernel $\Kernel$ (as per \Cref{definition:approx-representation}). Then, for any set of examples $S$ in $\{\x\in\R^d : \|\x\|_2\le R\}$, there is $\va = (a_{\x})_{\x\in S} \in \R^{|S|}$ such that for $\tilde{p}(\x) = \sum_{\vz\in S}a_{\vz} \Kernel(\vz,\x)$ we have:
    \[ \|f-\tilde{p}\|_S \le \epsilon \text{ and } \sum_{\x,\vz\in S}a_{\x}a_{\vz} \Kernel(\vz,\x) \le B\]
\end{proposition}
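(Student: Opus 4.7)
The plan is to mimic the classical representer theorem: start from the Hilbert-space representer $\vv^*$ guaranteed by approximate representation, orthogonally project it onto the finite-dimensional subspace spanned by the kernel features of $S$, and read off the coefficients $(a_{\vz})_{\vz \in S}$ from that projection. The key structural fact that makes this work is that the orthogonal projection preserves inner products against exactly those $\psi(\x)$ that lie in the subspace, i.e., against $\psi(\x)$ for $\x \in S$, which is precisely where we need the approximation guarantee to be preserved.

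Concretely, I would proceed as follows. Using \Cref{fact:rkhs}, fix the RKHS $\sH$ and feature map $\psi$ associated with $\Kernel$. By \Cref{definition:approx-representation}, there exists $\vv^* \in \sH$ with $\langle \vv^*, \vv^* \rangle \le B$ such that $|f(\x) - \langle \vv^*, \psi(\x)\rangle| \le \eps$ for every $\x$ with $\|\x\|_2 \le R$; in particular this holds for all $\x \in S$ since $S \subseteq \{\x : \|\x\|_2 \le R\}$. Let $V \subseteq \sH$ be the (finite-dimensional, hence closed) linear span of $\{\psi(\vz) : \vz \in S\}$, and let $\vv = \mathrm{Proj}_V(\vv^*)$. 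Then $\vv^* - \vv \perp V$, so for every $\x \in S$ we have $\langle \vv^* - \vv, \psi(\x)\rangle = 0$, hence $\langle \vv, \psi(\x)\rangle = \langle \vv^*, \psi(\x)\rangle$.

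Since $\vv \in V$, we can write $\vv = \sum_{\vz \in S} a_{\vz} \psi(\vz)$ for some coefficients $\va = (a_{\vz})_{\vz \in S} \in \R^{|S|}$. Defining $\tilde p(\x) = \sum_{\vz \in S} a_{\vz} \Kernel(\vz, \x) = \langle \vv, \psi(\x)\rangle$, the previous identity yields $|f(\x) - \tilde p(\x)| = |f(\x) - \langle \vv^*, \psi(\x)\rangle| \le \eps$ for every $\x \in S$, which gives $\|f - \tilde p\|_S \le \eps$. For the norm bound, orthogonality of the projection gives $\langle \vv, \vv\rangle \le \langle \vv^*, \vv^*\rangle \le B$, and expanding $\vv$ in the basis $\{\psi(\vz)\}_{\vz \in S}$ yields
\[
\sum_{\x, \vz \in S} a_{\x} a_{\vz} \Kernel(\vz, \x) \;=\; \Big\langle \sum_{\x \in S} a_{\x} \psi(\x),\; \sum_{\vz \in S} a_{\vz} \psi(\vz) \Big\rangle \;=\; \langle \vv, \vv\rangle \;\le\; B.
\]

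I do not anticipate a real obstacle here: everything reduces to the existence and basic properties of the orthogonal projection onto a finite-dimensional subspace of a Hilbert space. The one subtlety worth being careful about is that when the vectors $\{\psi(\vz)\}_{\vz \in S}$ are linearly dependent, the coefficient vector $\va$ is not unique, but any choice that represents $\vv$ works and the quadratic form $\va^\top K \va$ (for $K$ the Gram matrix) equals $\langle \vv, \vv\rangle$ regardless of the choice, so both desired bounds hold simultaneously.
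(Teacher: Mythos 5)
Your proof is correct and follows essentially the same route as the paper: both start from the existence of $\vv^* \in \sH$ with $\inprod{\vv^*}{\vv^*} \le B$ approximating $f$ pointwise on $S$ (guaranteed by \Cref{definition:approx-representation}), and then pass to the span of $\{\psi(\vz)\}_{\vz\in S}$. The only difference is that you carry out the orthogonal-projection argument in full, whereas the paper delegates that step to Theorem 6.11 of Mohri et al.; your version is more self-contained but contains no new idea relative to the cited result.
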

\begin{proof}
    We first observe that there is some $\vv\in\sH$ such that $\inprod{\vv}{\vv}\le B$ and for $p(\x) = \inprod{\vv}{\psi(\x)}$ we have $\|f-p\|_S \le \epsilon$, because by \Cref{definition:approx-representation}, there is a pointwise approximator for $f$ with respect to $\Kernel$. By Theorem 6.11 in \cite{mohri2018foundations}, this implies the existence of $\tilde{p}$ as desired.
\end{proof}
Note that since the evaluation of $\phi(\x)$ only involves Kernel evaluations, we never need to compute the initial feature expansion $\psi(\x)$ which could be overly expensive.

\noindent\textbf{Forming a candidate output hypothesis.} We know that the reference feature map approximately represents the ground truth. However, having no access to test labels, we cannot directly hope to find the corresponding coefficient $\va^*\in\R^{2m}$. Instead, we use only the training reference examples to find a candidate hypothesis $\hat p$ with close-to-optimal performance on the training distribution which can be also expressed in terms of the reference feature map $\phi$, as $\hat p = \hat\va^\top \phi$. It then suffices to test the quality of $\phi$ on the test distribution.
\newpage
\noindent\textbf{Testing the quality of reference feature map on the test distribution.} We know that the function $\tilde{p}^* = (\va^*)^\top\phi$ performs well on the test distribution (since it is close to $f^*$ on a reference test set). We also know that the candidate output $\hat\va^\top \phi$ performs well on the training distribution. Therefore, in order to ensure that $\hat p$ performs well on the test distribution, it suffices to show that the distance between $\hat p$ and $\tilde{p}^*$ under the test distribution, i.e., $\|\hat\va^\top \phi -(\va^*)^\top\phi\|_{\Dtestx}$, is small. In fact, it suffices to bound this distance by the corresponding one under the training distribution, because $\hat p$ fits the training data well and $\|\hat\va^\top \phi -(\va^*)^\top\phi\|_{\Dtrainx}$ is indeed small. Since we do not know $\va^*$, we need to run a test on $\phi$ that certifies the desired bound for any $\va^*$.

\noindent\textbf{Using the spectral tester.}
We observe that $\|\hat\va^\top \phi -(\va^*)^\top\phi\|_{\Dtrainx}^2 = (\hat\va - \va^*)^\top \Phi (\hat\va - \va^*)$, where $\Phi = \E_{\x\sim\Dtrainx}[\phi(\x)\phi(\x)^\top]$ and similarly $\|\hat\va^\top \phi -(\va^*)^\top\phi\|_{\Dtestx}^2 = (\hat\va - \va^*)^\top \Phi' (\hat\va - \va^*)$. Since we want to obtain a bound for all $\va^*$, we essentially want to ensure that for any $\va\in\R^{2m}$ we have $\va^\top \Phi' \va \le (1+\rho)\va^\top \Phi \va$ for some small $\rho$. Having a multiplicative bound is important because we do not have any bound on the norm of $\|\hat\va-\va^*\|_2$. 

To implement the test, and since we cannot test $\Phi$ and $\Phi'$ directly, we draw fresh verification examples $S_{\mathrm{ver}}, S'_{\mathrm{ver}}$ from $\Dtrainx$ and $\Dtestx$ and run a spectral test on the corresponding empirical versions $\hat\Phi,\hat\Phi'$ of the matrices $\Phi,\Phi'$. To ensure that the test will accept when there is no distribution shift, we use the following lemma (originally from \cite{goel2024tolerant}) on multiplicative spectral concentration for $\hat\Phi$, where the hypercontractivity assumption (\Cref{definition:hypercontractivity}) is important.

\begin{lemma}[Multiplicative Spectral Concentration, Lemma B.1 in \cite{goel2024tolerant}, modified]\label{lemma:relative-error-kernel-matrix}
    Let $\Dtrainx$ be a distribution over $\R^d$ and $\phi:\R^d \to \R^m$ such that $\Dtrainx$ is $(\phi,C,\ell)$-hypercontractive for some $C,\ell \ge 1$. Suppose that $S$ consists of $N$ i.i.d. examples from $\Dtrainx$ and let $\Phi = \E_{\x\sim\Dtrainx}[\phi(\x)\phi(\x)^\top]$, and $\hat\Phi = \frac{1}{N}\sum_{\x\in S}\phi(\x)\phi(\x)^\top$. For any $\eps,\delta\in(0,1)$, if $N\ge \frac{64 Cm^2}{\eps^2}(4C \log_2(\frac{4}{\delta}))^{4\ell+1}$, then with probability at least $1-\delta$, we have that
    \[
        \text{For any }\va\in\R^m: \va^\top \hat\Phi \va \in [{(1-\eps)} \va^\top \Phi\va, (1+\eps)\va^\top \Phi\va]
    \]
\end{lemma}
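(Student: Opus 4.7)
The plan is to reduce the multiplicative spectral bound to a Frobenius-norm concentration statement on an appropriately normalized matrix, and then use hypercontractivity entrywise together with a high-moment Markov inequality and a union bound over the $m^2$ entries.

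First, assume without loss of generality that $\Phi$ is invertible: if it is not, restrict $\va$ to the range of $\Phi$, since $\phi(\x)$ lies almost surely in that range and so both $\va^\top \Phi \va$ and $\va^\top \hat\Phi \va$ vanish on the kernel. Define the normalized feature map $\tilde\phi(\x) := \Phi^{-1/2}\phi(\x)$ and the normalized empirical matrix $\tilde M := \frac{1}{N}\sum_{\x \in S}\tilde\phi(\x)\tilde\phi(\x)^\top$. The substitution $\va = \Phi^{-1/2}\vu$ shows that the desired bound for all $\va\in\R^m$ is equivalent to the operator-norm bound $\|\tilde M - I\|_{\mathrm{op}} \le \eps$. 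Upper-bounding the operator norm by the Frobenius norm, $\|\tilde M - I\|_{\mathrm{op}} \le \|\tilde M - I\|_F$, it suffices to show that each of the $m^2$ entries of $\tilde M - I$ is at most $\eps/m$ in magnitude.

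Next, observe that the hypercontractivity of $\phi$ transfers directly to $\tilde\phi$: for any $\vu \in \R^m$, $\E[\inprod{\vu}{\tilde\phi(\x)}^{2t}] = \E[\inprod{\Phi^{-1/2}\vu}{\phi(\x)}^{2t}] \le (Ct)^{2\ell t}(\E[\inprod{\Phi^{-1/2}\vu}{\phi(\x)}^2])^t = (Ct)^{2\ell t}\|\vu\|^{2t}$, so in particular each coordinate satisfies $\E[\tilde\phi_j(\x)^{2t}] \le (Ct)^{2\ell t}$. Fix $j,k \in [m]$ and let $Z(\x) := \tilde\phi_j(\x)\tilde\phi_k(\x) - \delta_{jk}$, which is centered by construction of $\tilde\phi$. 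By Cauchy-Schwarz applied to the $2t$-th moment, $\E[|Z|^{2t}] \le (O(Ct))^{4\ell t}$. Applying the high-moment Markov inequality $\Pr[|\tilde M_{jk} - \delta_{jk}| > \eps/m] \le \E[(\tilde M_{jk} - \delta_{jk})^{2t}]\,/\,(\eps/m)^{2t}$ and bounding the right-hand side via a standard Rosenthal-type moment inequality for sums of iid centered random variables, the per-entry failure probability can be driven below $\delta/m^2$ by taking $2t$ of order $\log(1/\delta)$ and $N$ as in the statement. A union bound over the $m^2$ entries then gives $\|\tilde M - I\|_F \le \eps$ with probability at least $1-\delta$.

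The main obstacle is the careful accounting of constants and $\ell$-dependence in the moment bounds. Each application of hypercontractivity contributes a factor of $(Ct)^{2\ell t}$, and combining this with the Rosenthal-type bound and the high-moment Markov step is precisely what produces the $(4C\log_2(4/\delta))^{4\ell+1}$ factor in the sample complexity. The $m^2$ scaling is intrinsic to the Frobenius-to-operator norm conversion used here; sharper matrix concentration techniques (e.g., matrix Bernstein with hypercontractive tail control) could in principle tighten it, but are unnecessary for the stated bound. As noted, this argument is a mild adaptation of Lemma B.1 in \cite{goel2024tolerant}, which carries out the same strategy in full detail.
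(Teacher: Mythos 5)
Your overall strategy matches the paper's: both whiten by $\Phi^{\dagger/2}$, reduce the two-sided multiplicative bound to a single operator-norm bound on the whitened deviation, upper-bound by the Frobenius norm, control moments via hypercontractivity and Cauchy--Schwarz, apply a Rosenthal/Marcinkiewicz--Zygmund inequality for the i.i.d.\ sum, and finish with a high-moment Markov step. The rank-deficient reduction (decomposing $\va$ along the range and nullspace of $\Phi$) is also identical in substance.

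There is, however, one concrete gap in the entrywise union-bound step, and it is exactly the place where the paper does something slightly different. You bound each entry separately: $\Pr[|\tilde M_{jk}-\delta_{jk}|>\eps/m] \le m^{2t}\,\E[(\tilde M_{jk}-\delta_{jk})^{2t}]/\eps^{2t}$, and then union-bound over $m^2$ pairs. With $t = \Theta(\log(1/\delta))$ and $N$ as in the statement, the $m^{2t}$ factor in that Markov bound is cancelled precisely by the $m^{2t}$ that appears in $N^t$ (since $N \propto m^2$), so the per-entry failure probability comes out to $\approx\delta$, not $\delta/m^2$ as you claim. The union bound then yields overall failure probability $\approx m^2\delta$. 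To actually reach $\delta/m^2$ per entry you would need $t = \Theta(\log(m^2/\delta))$, and since $N$ scales like $(4Ct)^{4\ell+1}$, this would force $N$ to carry an extra $(\log m)^{O(\ell)}$ factor that the stated bound does not have. The paper avoids this by bounding $\E[\|A\|_F^{2t}]$ directly rather than per entry: applying the power-mean inequality $(\sum_{i,j} x_{ij})^t \le m^{2(t-1)}\sum_{i,j} x_{ij}^t$ converts the sum over $m^2$ entries into a single $m^{2t}$ factor \emph{inside} the $t$-th power, which then cancels against $N^t$ with no residual $m^2$ multiplier and no union bound, so $t = \log_2(4/\delta)$ suffices. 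Replace your entrywise union bound with this Jensen/power-mean step on $\E[\|A\|_F^{2t}]$ and the rest of your argument goes through with the stated $N$.
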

Note that the multiplicative spectral concentration lemma requires access to independent samples. However, the reference feature map $\phi$ depends on the reference examples $S_{\mathrm{ref}}, S_{\mathrm{ref}}'$. This is the reason why we do not reuse $S_{\mathrm{ref}}, S_{\mathrm{ref}}'$, but rather draw fresh verification examples. For the proof of \Cref{lemma:relative-error-kernel-matrix}, see \Cref{appendix:tds-kernels}.

We now provide the full formal proof of \Cref{theorem:tds-via-kernels}. The full proof involves appropriate uniform convergence bounds for kernel hypotheses, which are important in order to shift from the reference to the verification examples and back. %In high level, the reference examples are the ones we either know bounds or want to show bounds for, and the verification examples are the ones that we can run our tests on.

%\subsection{Full Proof of Theorem \ref{theorem:tds-via-kernels}}\label{section:tds-kernel-proof}

\begin{proof}[Proof of \Cref{theorem:tds-via-kernels}]
    Consider the reference feature map $\phi: \R^d \to \R^{2m}$ with $\phi(\x) = (\Kernel(\x,\vz))_{\vz\in S_{\mathrm{ref}}\cup S_{\mathrm{ref}}'}$. Let $f^*= \arg \min_{f\in\gF} [\gL_{\Dtrain}(f)+ \gL_{\Dtest}(f)]$ and $f_\opt = \arg \min_{f\in\gF} [\gL_{\Dtrain}(f)]$. By \Cref{assumption:bounded}, we know that there are functions $p^*, p_\opt :\R^d \to \R$ with $p^*(\x) = \inprod{\vv^*}{\psi(\x)}$ and $p_\opt = \inprod{\vv_\opt}{\psi(\x)}$, that uniformly approximate $f^*$ and $f_\opt$ within the ball of radius $R$, i.e., $\sup_{\x:\|\x\|_2\le R}|f^*(\x) - p^*(\x)| \le \eps$ and $\sup_{\x:\|\x\|_2\le R}|f_\opt(\x) - p_\opt(\x)| \le \eps$. Moreover, $\inprod{\vv^*}{\vv^*}, \inprod{\vv_\opt}{\vv_\opt} \le B$.

    By \Cref{proposition:representer-theorem}, there is $\va^*\in \R^{2m}$ such that for $\tilde{p}^*:\R^d\to \R$ with $\tilde{p}^*(\x) = ({\va^*})^\top{\phi(\x)}$ we have $\|f^*-\tilde{p}^*\|_{S_{\mathrm{ref}}} \le 3\eps/2$ and $\|f^*-\tilde{p}^*\|_{S_{\mathrm{ref}}'} \le 3\eps/2$. Let $\mK$ be a matrix in $\R^{2m\times 2m}$ such that $\mK_{\vz,\vw} = \Kernel(\vz,\vw)$ for $\vz,\vw\in S_{\mathrm{ref}}\cup S_{\mathrm{ref}}'$. We additionally have that $(\va^*)^\top \mK \va^* \le B$. Therefore, for any $\x\in\R^d$ we have 
    \begin{align*}
        (\tilde{p}^*(\x))^2 &= \Bigr(\Bigr\langle{\sum_{\vz\in S_{\mathrm{ref}}\cup S_{\mathrm{ref}}'} a^*_z \psi(\vz) }, {\psi(\x)}\Bigr\rangle\Bigr)^2 \\
        &\le \Bigr\langle{\sum_{\vz\in S_{\mathrm{ref}}\cup S_{\mathrm{ref}}'} a^*_z \psi(\vz) }, {\sum_{\vz\in S_{\mathrm{ref}}\cup S_{\mathrm{ref}}'} a^*_z \psi(\vz)}\Bigr\rangle \cdot \inprod{\psi(\x)}{\psi(\x)} \\
        &= (\va^*)^\top \mK \va^* \cdot \Kernel(\x,\x) \le B\cdot \Kernel(\x,\x)\,,
    \end{align*}
    where we used the Cauchy-Schwarz inequality. For $\x$ with $\|\x\|_2\le R$, we, hence, have $(\tilde{p}^*(\x))^2 \le AB$ (recall that $A = \max_{\|\x\|_2\le R}\Kernel(\x,\x)$).

    Similarly, by applying the representer theorem (Theorem 6.11 in \cite{mohri2018foundations}) for $p_\opt$, we have that there exists $\va^\opt = (a^\opt_{\vz})_{\vz\in S_{\mathrm{ref}}}\in\R^{m}$ such that for $\tilde{p}_\opt:\R^d \to \R$ with $\tilde{p}_\opt(\x) = \sum_{\vz\in S_{\mathrm{ref}}} a^\opt_\vz \Kernel(\vz,\x)$ we have $\gL_{\bar{S}_{\mathrm{ref}}}(\tilde{p}_\opt) \le \gL_{\bar{S}_{\mathrm{ref}}}(p_\opt)$ and $\sum_{\vz,\vw\in S_{\mathrm{ref}}} a^\opt_\vz a^\opt_\vw \Kernel(\vz,\vw) \le B$. Since $\hat p$ in \Cref{algorithm:tds-via-kernels} is formed by solving a convex program whose search space includes $\tilde{p}_\opt$, we have
    \begin{equation}\label{equation:phat-optimality}
        \gL_{\bar{S}_{\mathrm{ref}}}(\hat{p}) \le \gL_{\bar{S}_{\mathrm{ref}}}(\tilde{p}_\opt) \le \gL_{\bar{S}_{\mathrm{ref}}}(p_\opt)
    \end{equation}
    In the following, we abuse the notation and consider $\hat\va$ to be a vector in $\R^{2m}$, by appending $m$ zeroes, one for each of the elements of $S'_{\mathrm{ref}}$. Note that we then have $\hat\va^\top \mK \hat\va \le B$, and, also, $(\hat p(\x))^2 \le A\cdot B$ for all $\x$ with $\|\x\|_2\le R$.

    \noindent\textbf{Soundness.} Suppose first that the algorithm has accepted. In what follows, we will use the triangle inequality of the norms to bound for functions $h_1,h_2,h_3$ the quantity $\|h_1 - h_2\|_{\Dtrain}$ by $\|h_1-h_3\|_{\Dtrain}+ \|h_2-h_3\|_\Dtrain$. We also use the inequality $\gL_\Dtrain(h_1) \le \gL_\Dtrain(h_2) + \|h_1-h_2\|_{\Dtrain}$, as well as the fact that $\|\clip_M \circ h_1 - \clip_M\circ h_2\|_\Dtrain \le \|\clip_M \circ h_1 -  h_2\|_\Dtrain \le \| h_1 - h_2\|_\Dtrain$. We bound the test error of the output hypothesis $h:\R^d \to [-M,M]$ of \Cref{algorithm:tds-via-kernels} as follows.
    \begin{align*}
        \gL_{\Dtest}(h) \le \| h - \clip_M\circ f^* \|_{\Dtestx} + \gL_\Dtest(f^*)
    \end{align*}
    Since $(h(\x) - \clip_M(f^*(\x)))^2 \le 4M^2$ for all $\x$ and the hypothesis $h$ does not depend on the set $S_{\mathrm{ref}}'$, by a Hoeffding bound and the fact that $m$ is large enough, we obtain that $\| h - \clip_M\circ f^* \|_{\Dtestx} \le \| h - \clip_M\circ f^* \|_{S_{\mathrm{ref}}'} + \eps/10$, with probability at least $1-\delta/10$.
    Moreover, we have $\| h - \clip_M\circ f^* \|_{S_{\mathrm{ref}}'} \le \| h - \clip_M\circ \tilde{p}^* \|_{S_{\mathrm{ref}}'} + \| \tilde{p}^* -  f^* \|_{S_{\mathrm{ref}}'}$. We have already argued that $\| \tilde{p}^* -  f^* \|_{S_{\mathrm{ref}}'} \le 3\eps/2$.

    In order to bound the quantity $\| h - \clip_M\circ \tilde{p}^* \|_{S_{\mathrm{ref}}'}$, we observe that while the function $h$ does not depend on $S_{\mathrm{ref}}'$, the function $\tilde{p}^*$ does depend on $S_{\mathrm{ref}}'$ and, therefore, standard concentration arguments fail to bound the $\| h - \clip_M\circ \tilde{p}^* \|_{S_{\mathrm{ref}}'}$ in terms of $\| h - \clip_M\circ \tilde{p}^* \|_{\Dtestx}$. However, since we have clipped $\tilde{p}^*$, and $\tilde{p}^*$ is of the form $\inprod{\vv^*}{\psi}$, we may obtain a bound using standard results from generalization theory (i.e., bounds on the Rademacher complexity of kernel-based hypotheses like Theorem 6.12 in \cite{mohri2018foundations} and uniform convergence bounds for classes with bounded Rademacher complexity under Lipschitz and bounded losses like Theorem 11.3 in \cite{mohri2018foundations}). In particular, we have that with probability at least $1-\delta/10$
    \[
        \| h - \clip_M\circ \tilde{p}^* \|_{S_{\mathrm{ref}}'} \le \| h - \clip_M\circ \tilde{p}^* \|_{\Dtestx} + \eps/10
    \]
    The corresponding requirement for $m = |S_{\mathrm{ref}}'|$ is determined by the bounds on the Lipschitz constant of the loss function $(y,t)\mapsto (y-\clip_M(t))^2$, with $y\in [-M,M]$ and $t\in\R$, which is at most $5M$, the overall bound on this loss function, which is at most $4M^2$, as well as the bounds $A = \max_{\x:\|\x\|_2\le R}\Kernel(\x,\x)$ and $(\va^*)^\top \mK \va \le B$ (which give bounds on the Rademacher complexity).

    By applying the Hoeffding bound, we are able to further bound the quantity $\| h - \clip_M\circ \tilde{p}^* \|_{\Dtestx}$ by $ \| h - \clip_M\circ \tilde{p}^* \|_{S_{\mathrm{ver}}'} + \eps/10$, with probability at least $1-\delta$. We have effectively managed to bound the quantity $\| h - \clip_M\circ \tilde{p}^* \|_{S_{\mathrm{ref}}'}$ by $\| h - \clip_M\circ \tilde{p}^* \|_{S_{\mathrm{ver}}'}+\eps/5$. This is important, because the set $S_{\mathrm{ver}}'$ is a fresh set of examples and, therefore, independent from $\tilde{p}$. Our goal is now to use the fact that our spectral tester has accepted. We have the following for the matrix $\hat\Phi' = (\hat\Phi'_{\vz,\vw})_{\vz,\vw\in S_{\mathrm{ref}}\cup S_{\mathrm{ref}}'}$ with $\hat\Phi'_{\vz,\vw} = \frac{1}{N}\sum_{\x\in S_{\mathrm{ver}}'}\Kernel(\x,\vz)\Kernel(\x,\vw)$.
    \begin{align*}
        \| h - \clip_M\circ \tilde{p}^* \|_{S_{\mathrm{ver}}'}^2 
        &\le \| \hat p - \tilde{p}^* \|_{S_{\mathrm{ver}}'}^2 \\
        &= (\hat\va - \va^*)^\top \hat\Phi' (\hat\va-\va^*)
    \end{align*}
    Since our test has accepted, we know that $(\hat\va - \va^*)^\top \hat\Phi' (\hat\va-\va^*) \le (1+\rho)(\hat\va - \va^*)^\top \hat\Phi (\hat\va-\va^*)$, for the matrix $\hat\Phi = (\hat\Phi_{\vz,\vw})_{\vz,\vw\in S_{\mathrm{ref}}\cup S_{\mathrm{ref}}}$ with $\hat\Phi_{\vz,\vw} = \frac{1}{N}\sum_{\x\in S_{\mathrm{ver}}}\Kernel(\x,\vz)\Kernel(\x,\vw)$. We note here that having a multiplicative bound of this form is important, because we do not have any upper bound on the norms of $\hat\va$ and $\va^*$. Instead, we only have bounds on distorted versions of these vectors, e.g., on $\hat\va^\top \mK \hat\va$, which does not imply any bound on the norm of $\hat\va$, because $\mK$ could have very small singular values.

    Overall, we have that 
    \begin{align*} \| \hat p - \tilde{p}^* \|_{S_{\mathrm{ver}}'} - \| \hat p - \tilde{p}^* \|_{S_{\mathrm{ver}}}
    &\le \sqrt{\rho (2\|\hat p\|_{S_{\mathrm{ver}}}^2+2\|\tilde{p}^*\|_{S_{\mathrm{ver}}}^2)} \\
    &\le \sqrt{4AB\rho}
    \le \frac{3\eps}{10}.
    \end{align*}

    By using results from generalization theory once more, we obtain that with probability at least $1-\delta/5$ we have $\| \hat p - \tilde{p}^* \|_{S_{\mathrm{ver}}} \le \| \hat p - \tilde{p}^* \|_{S_{\mathrm{ref}}}+\eps/5$. This step is important, because the only fact we know about the quality of $\hat{p}$ is that it outperforms every polynomial on the sample $S_{\mathrm{ref}}$ (not necessarily over the entire training distribution). We once more may use bounds on the values of $\hat p$ and $\tilde{p}^*$, this time without requiring clipping, since we know that the training marginal is bounded and, hence, the values of $\hat p$ and $\tilde{p}^*$ are bounded as well. This was not true for the test distribution, since we did not make any assumptions about it.

    In order to bound $\| \hat p - \tilde{p}^* \|_{S_{\mathrm{ref}}}$, we have the following.
    \begin{align*}
        \| \hat p - \tilde{p}^* \|_{S_{\mathrm{ref}}} &\le \gL_{\bar{S}_{\mathrm{ref}}}(\hat p) + \gL_{\bar{S}_{\mathrm{ref}}}(\clip\circ f^*) + \|f^*-\tilde{p}^*\|_{{S}_{\mathrm{ref}}} \\
        &\le \gL_{\bar{S}_{\mathrm{ref}}}(\tilde{p}_\opt) + \gL_{\bar{S}_{\mathrm{ref}}}(\clip\circ f^*) + \|f^*-\tilde{p}^*\|_{{S}_{\mathrm{ref}}} \tag{By \eqref{equation:phat-optimality}} \\
        &\le \gL_{\bar{S}_{\mathrm{ref}}}({p}_\opt) + \gL_{\bar{S}_{\mathrm{ref}}}(\clip\circ f^*) + \|f^*-\tilde{p}^*\|_{{S}_{\mathrm{ref}}} 
    \end{align*}
    The first term above is bounded as $\gL_{\bar{S}_{\mathrm{ref}}}({p}_\opt) \le \gL_{\bar{S}_{\mathrm{ref}}}(\clip_M\circ {f}_\opt)+\|p_\opt - f_\opt\|_{{S}_{\mathrm{ref}}}$, where the second term is at most $\eps$ (by the definition of $p_\opt$) and the first term can be bounded by $\gL_{\Dtrain}({f}_\opt)+\eps/10 = \opt+\eps/10$, with probability at least $1-\delta/10$, due to an application of the Hoeffding bound.

    For the term $\gL_{\bar{S}_{\mathrm{ref}}}(\clip\circ f^*)$ we can similarly use the Hoeffding bound to obtain, with probability at least $1-\delta/10$ that $\gL_{\bar{S}_{\mathrm{ref}}}(\clip\circ f^*) \le \gL_{\Dtrain}(f^*)+\eps/10$.

    Finally, for the term $\|f^*-\tilde{p}^*\|_{{S}_{\mathrm{ref}}}$, we have that $\|f^*-\tilde{p}^*\|_{{S}_{\mathrm{ref}}} \le 3\eps/2$, as argued above.

    Overall, we obtain a bound of the form $\gL_\Dtest(h) \le \gL_{\Dtrain}(f^*)=\gL_{\Dtest}(f^*) + \gL_{\Dtrain}(f_\opt) + 5\eps$, with probability at least $1-\delta$, as desired.
    
    \noindent\textbf{Completeness.} For the completeness criterion, we assume that the test marginal is equal to the training marginal. Then, by \Cref{lemma:relative-error-kernel-matrix} (where we observe that any $(\psi,C,\ell)$-hypercontractive distribution is also $(\phi,C,\ell)$-hypercontractive), with probability at least $1-\delta$, we have that for all $\va\in\R^{2m}$, $\va^\top \hat\Phi' \va \le \frac{1+(\rho/4)}{1-(\rho/4)}\va^\top \hat\Phi \va \le (1+\rho)\va^\top \hat\Phi \va$, because $\E[\hat\Phi] = \E[\hat\Phi']$ and the matrices are sums of independent samples of $\phi(\x)\phi(\x)^\top$, where $\x\sim\Dtrainx$. It is crucial here that $\phi$ (which recall is formed by using $S_{\mathrm{ref}}, S_{\mathrm{ref}}'$) does not depend on the verification samples $S_{\mathrm{ver}}$ and $S'_{\mathrm{ver}}$, which is why we chose them to be fresh. Therefore, the test will accept with probability at least $1-\delta$.

    \noindent\textbf{Efficient Implementation.} To compute $\hat \va$, we may run a least squares program, in time polynomial in $m$. For the spectral tester, we first compute the SVD of $\hat\Phi$ and check that any vector in the kernel of $\hat\Phi$ is also in the kernel of $\hat\Phi'$ (this can be checked without computing the SVD of $\hat\Phi'$). Otherwise, reject. Then, let $\hat\Phi^{\frac{\dagger}{2}}$ be the root of the Moore-Penrose pseudoinverse of $\hat\Phi$ and find the maximum singular value of the matrix $\hat\Phi^{\frac{\dagger}{2}}\hat\Phi'\hat\Phi^{\frac{\dagger}{2}}$. If the value is higher than $1+\rho$, reject. Note that this is equivalent to solving the eigenvalue problem described in \Cref{algorithm:tds-via-kernels}.
\end{proof}

\subsection{Applications}

Having obtained a general theorem for TDS learning under \Cref{assumption:bounded}, we can now instantiate it to obtain TDS learning algorithms for learning neural networks with Lipschitz activations. In particular, we recover all of the bounds of \cite{reliable_goel2017}, using the additional assumption that the training distribution is hypercontractive in the following standard sense. 

\begin{definition}[Hypercontractivity]\label{definition:hypercontractivity-standard}
    We say that $\Dtrain$ is $C$-hypercontractive if for all polynomials of degree $\ell$ and $t\in \mathbb{N}$, we have that 
     \[
    \E_{\x\sim \Dtrain}\left[p(\x)^{2t}\right]\leq (Ct)^{2\ell t}\left(\E_{\x\sim \Dtrain}\left[p(\x)^2\right]\right)^{t}.
    \]
    
\end{definition}

Note that many common distributions like log-concave or the uniform over the hypercube are known to be hypercontractive for some constant $C$ (see \cite{carbery2001distributional} and \cite{o2014analysis}).

% \begin{lemma}
% \label{lem:kernel_for_TDS_main}    
% Table \ref{table:kernel_properties} provides bounds on the parameters in \Cref{assumption:bounded} for specific classes of neural networks. 
\renewcommand{\arraystretch}{2}
\begin{table}[ht]
\centering
\begin{tabular}{|c|c|c|c|}
\hline
Function Class &  Degree ($\ell$) &\begin{tabular}{c}\vspace{-0.3cm}Representation \\Bound ($B$)\end{tabular} & \begin{tabular}{c}\vspace{-0.3cm}Kernel \\ Bound ($A$)\end{tabular} \\
\hline
\hline
Sigmoid Nets &$O\left(RW^{t-2}(t\log(\frac{W}{\epsilon}))^{t-1}\log R\right)$& $ 2^{\ell}\cdot W^{ \tilde{O}(Wt \log(\frac{1}{\eps}))^{t-2}}$ & $(2R)^{2^t \ell}$ \\
\hline
$L$-Lipschitz Nets &$O\left((WL)^{t-1}Rk\sqrt{k}/\epsilon\right)$ & $(k+\ell)^{O(\ell)}$& $R^{O(\ell)}$\\
\hline
\end{tabular}
\caption{We instantiate the parameters relevant to \Cref{assumption:bounded} for Sigmoid and Lipschitz Nets. We have: (1) $t$ denotes a bound on the depth of the network, (2) $W$ is a bound on the sum of network weights in all layers other than the first, (3) $(\eps,B)$ and radius $R$ are the approximate representation parameters, (4) $k$ is the number of hidden units in the first layer. The kernel function can be evaluated in time $\poly(d,\ell)$. For each of the classes, we assume that the maximum two norm of any row of the matrix corresponding to the weights of the first layer is bounded by $1$. The kernel we use is the composed multinomial kernel $\cmkl{t}$ with appropriately chosen degree vector $\vell$. Here, $\ell$ equals the product of the entries of $\vell$. Any $C$-hypercontractive distribution is also $(\cmkl{t},C,\ell)$ hypercontractive for $\ell$ as specified in the table. {For the case of $k=1$, the bound $B$ in the second row can be improved to $2^{O(\ell)}$.}}\label{table:kernel_properties}
\end{table}

% \end{lemma}

In \Cref{table:kernel_properties}, we provide bounds on the parameters in \Cref{assumption:bounded} for sigmoid networks and $L$-Lipschitz networks, whose proof we postpone to \cref{section:approximation_theory}
(see \Cref{thm:approx_lipschitz_nets,thm:approx_sigmoid_nets,lem:composed_multinomial_properties}). Combining bounds from \Cref{table:kernel_properties} with \Cref{theorem:tds-via-kernels}, we obtain the results of the middle column of \Cref{table:main-results}.

% From the above lemma, we obtain the following theorem regarding TDS learning under bounded and hypercontractive marginals. 

%O\left(W^{t-2}(t\log(W/\epsilon)^{t-2}\right)}

% All of our results use kernels of the following form.

% \begin{definition}[Composed multinomial kernel]
% \label{def:composed_multinomial_kernel-main}
%     Consider the mapping $\psi_{\ell}:\R^{n}\to \R^{N_{\ell}}$, where $N_d=\sum_{i=1}^{\ell}d^{\ell}$ indexed by tuples $(i_1,i_2,\ldots, i_{j})\in [d]^j$ for $j\in [\ell]$ such that value of $\psi_{\ell}(\x)$ at index $(i_1,i_2,\ldots,i_{j})$ is equal to $\prod_{t=1}^{j}x_{i_t}$. The multinomial kernel $\mkl$ is defined as 
%     $
%     \mkl(\x,\vy)=\langle\psi_{\ell}(\x),\psi_{\ell}(\vy)\rangle=\sum_{i=1}^{d}(\x\cdot\vy)^{i}
%     $. We denote the corresponding RKHS as $\hkml$.
    
%     Let $\vell=(\ell_1,\ldots,\ell_t)$ be a tuple in $\mathbb{N}^{t}$. We denote a sequence of mappings $\psi^{(0)}_{\vell},\psi^{(1)}_{\vell},\ldots,\psi^{(t)}_{\vell}$ inductively as follows: (1) $\psi^{(0)}_{\vell}(\x)=\x$, (2) $\psi^{(i)}_{\vell}(\x)=\psi_{\ell_i}(\psi^{(i-1)}_{\vell}(\x))$.
%     Let $N_{\vell}^{(i)}$ denote the number of coordinates in $\cpsi{i}$.
%     This induces a sequence of kernels $\cmkl{0},\cmkl{1},\ldots,\cmkl{t}$ defined as 
%     $
%     \cmkl{i}(\x,\vy)=\langle\cpsi{i}(\x),\cpsi{i}(\vy)\rangle=\sum_{j=0}^{\ell_{i}}(\langle\cpsi{i-1}(\x),\cpsi{i-1}(\vy)\rangle^j)
%     $ and a corresponding sequence of RKHS denoted by $\sH_{\cmkl{0}},\sH_{\cmkl{1}},\ldots \sH_{\cmkl{t}}$.
% \end{definition}

\section{Unbounded Distributions}\label{section:unbounded}

We showed that the kernel method provides runtime improvements for TDS learning, because it can be used to obtain a concise reference feature map, whose spectral properties on the test distribution are all we need to check to certify low test error. A similar approach would not provide any runtime improvements for the case of unbounded distributions, because the dimension of the reference feature space would not be significantly smaller than the dimension of the multinomial feature expansion. Therefore, we can follow the standard moment-matching testing approach commonly used in TDS learning \cite{klivans2023testable} and testable agnostic learning \cite{rubinfeld2022testing,gollakota2022moment}.

\subsection{Additional Preliminaries}\label{section:unbounded-overview}
%In this section, we outline the main ideas of the algorithm for TDS learning in the unbounded distribution setting. Keep in mind that we intend to apply the algorithm to learning neural networks. Intuitively, 
%We will view a neural network as a function of $W\x$ after the first layer of weights has been applied, which allows us to project from the higher $d$-dimensional input space to a $k$-dimensional subspace. This motivates the following definition.

We define the notion of subspace juntas, namely, functions that only depend on a low-dimensional projection of their input vector.

\begin{definition}[Subspace Junta]
    A function $f: \R^d \rightarrow \R$ is a $k$-subspace junta (where $k \le d$) if there exists $W \in \R^{k \times d}$ with $\|W\|_2 = 1$ and $WW^{\top} = I_k$ and a function $g: \R^k \rightarrow \R$ such that
    \[ f(\x) = f_W(\x) = g(W\x) \quad \forall \x \in \R^d. \]
    Note that by taking $k = d$, letting $W = I_d$ covers all functions $f: \R^d \rightarrow \R$.
\end{definition}

Note that neural networks are $k$-subspace juntas, where $k$ is the number of neurons in the first hidden layer. We also define the following notion of a uniform polynomial approximation within a ball of a certain radius.

\begin{definition}[$(\epsilon, R)$-Uniform Approximation]
    For $\epsilon > 0, R \ge 1,$ and $g: \R^k \rightarrow \R$, we say that $q: \R^k \rightarrow \R$ is an $(\epsilon, R)$-uniform approximation polynomial for $g$ if
        \[|q(\x) - g(\x)| \le \epsilon \quad \forall \Norm{x}_2 \le R.\]
\end{definition}

We obtain the following corollary which gives the analogous bound on the $(\epsilon, R)$-uniform approximation to a $k$-subspace junta, given the $(\epsilon, R)$-uniform approximation to the corresponding function $g$.
\begin{corollary}
    Let $\epsilon > 0, R \ge 1$, and $f: \R^d \rightarrow \R$ be a $k$-subspace junta, and consider the corresponding function $g(Wx)$. Let $q: \R^k \rightarrow \R$ be an $(\epsilon, R)$-uniform approximation polynomial for $g$, and define $p: \R^d \rightarrow \R$ as $p(\x) := q(Wx)$. Then $|p(\x) - f(\x)| \le \epsilon$ for all $\|Wx\|_2 \le R$.
\end{corollary}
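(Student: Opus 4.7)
The plan is to unwind the definitions and apply the uniform approximation hypothesis directly at the projected point $Wx$. Since $f$ is a $k$-subspace junta, by definition we have $f(\x) = g(W\x)$ for every $\x \in \R^d$, and by construction $p(\x) = q(W\x)$. So the pointwise difference factors as $|p(\x) - f(\x)| = |q(W\x) - g(W\x)|$, which reduces the problem to evaluating the uniform approximation error of $q$ with respect to $g$ at the specific input $W\x \in \R^k$.

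The key step is then to invoke the $(\eps, R)$-uniform approximation property of $q$. By definition, $|q(\vz) - g(\vz)| \le \eps$ for every $\vz \in \R^k$ with $\|\vz\|_2 \le R$. Under the hypothesis $\|W\x\|_2 \le R$, we may take $\vz = W\x$, which lies in the relevant Euclidean ball, and conclude $|q(W\x) - g(W\x)| \le \eps$. Chaining this with the identity from the previous paragraph yields $|p(\x) - f(\x)| \le \eps$, as required.

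No step presents a substantive obstacle; the statement is essentially a transport of the approximation guarantee from the low-dimensional domain $\R^k$ through the isometric embedding defined by $W$ (recall $WW^\top = I_k$, though this is not strictly needed for the pointwise bound, only for ensuring $W\x$ ranges over the ball in $\R^k$ appropriately). I would present the argument as a single short display chain to make explicit that the factorization through $W$ and the hypothesis $\|W\x\|_2 \le R$ together suffice.
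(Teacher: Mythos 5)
Your proof is correct and is the straightforward definitional unwinding; the paper states this corollary without an explicit proof precisely because it follows immediately in this way. The factorization $|p(\x)-f(\x)|=|q(W\x)-g(W\x)|$ together with the hypothesis $\|W\x\|_2\le R$ is exactly the intended argument.
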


Finally, we consider any distribution with strictly subexponential tails in every direction, which we define as follows.

\begin{definition}[Strictly Sub-exponential Distribution]
A distribution $\gD$ on $\R^d$ is $\gamma$-strictly subexponential if there exist constants $C, \gamma \in (0, 1]$ such that for all $\vw \in \R^d, \Norm{\vw} = 1, t \ge 0$,
\[
\Pr_{\x \sim \gD}[|\inprod{\vw}{\x}| > t] \le e^{-Ct^{1+\gamma}}.
\] 
\end{definition}

\subsection{TDS Regression via Uniform Approximation}\label{section:tds-via-uniform}

We will now present our main results on TDS regression for unbounded training marginals. We require the following assumptions.

\begin{assumption}\label{assumption:unbounded-main}
    For a function class $\gF\subseteq\{\R^d\to \R\}$ consisting of $k$-subspaces juntas, and training and test distributions $\Dtrain, \Dtest$ over $\R^d \times \R$, we assuming the following.
    \begin{enumerate}
        \item For $f \in \gF$, there exists an $(\epsilon, R)$-uniform approximation polynomial for $f$ with degree at most $\degbound = R\log R \cdot g_\gF(\epsilon)$, where $g_\gF(\epsilon)$ is a function depending only on the class $\gF$ and $\epsilon$.
        \item For $f \in \gF$, the value $r_f := \sup_{\Norm{W\x}_2 \le R}|f(x)|$ is bounded by a constant $r > 0$.
        \item The training marginal $\Dtrainx$ is a $\gamma$-strictly subexponential distribution for $\gamma \in (0,1]$.
        \item The training and test labels are both bounded in $[-M, M]$ for some $M \ge 1$.
    \end{enumerate}
\end{assumption}

% \begin{assumption}\label{assumption:unbounded-main}
%     For a function class $\gF\subseteq\{\R^d\to \R\}$, and training and test distributions $\Dtrain$, $\Dtest$ over $\R^d\times \R$, we assume the following.
%     \begin{enumerate}
%         \item For any $f\in\gF$, there is $W \in \R^{k \times d}$ with $\|W\|_2 = 1$ and $WW^{\top} = I_k$ and a function $g: \R^k \rightarrow \R$ such that $f(\x) = g(W\x)$ for all $\x \in \R^d.$ Moreover, $f(0) = O(1)$.
%         \item For any $f\in\gF$, with $f(\x) = g(W\x)$, there is polynomial $q$ over $\R^k$ of degree at most $\ell$ s.t. for any $\x\in\R^d$ with $\|\x\|_2\le R$ we have $|q(W\x)-g(W\x)| \le \eps$, where $R\ge 1$, $\eps\in(0,1)$. We also require that $\ell \le \tilde{O}_{\gF,\eps} (R)$, where $\tilde{O}_{\gF,\eps}$ is hiding factors that are at most logarithmic in $R$, but can also depend on $\eps,\gF$.
%         \item The training marginal $\Dtrainx$ is $\gamma$-strictly subexponential for $\gamma\in(0,1]$.
%         \item The training and test labels are both bounded in $[-M,M]$ for some $M\ge 1$. 
%     \end{enumerate}
% \end{assumption}

Consider the function class $\gF$, and the parameters $\eps,\gamma,M,k,\ell$ as defined in the assumption above and let $\delta\in(0,1)$. Then, we obtain the following theorem.

\begin{theorem}[TDS Learning via Uniform Approximation]\label{theorem:tds-via-uniform}
    Under \Cref{assumption:unbounded-main}, \Cref{algorithm:uniform-approx} learns the class $\gF$ in the TDS regression setting up to excess error $5\epsilon$ and probability of failure $\delta$. The time complexity is $\poly(d^{s},{1}/{\epsilon},\log(1/\delta)^{\ell})$ where $s=(\ell\log(M/\epsilon))^{O({1}/{\gamma})}$.
\end{theorem}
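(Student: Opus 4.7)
My plan is to follow the moment-matching template from prior TDS work, combined with the clipping trick and a H\"older-type tail control to handle the fact that our polynomial approximation of a target $f^*\in\gF$ is only accurate inside a ball of radius $R$. The algorithm should (i) run an empirical moment-matching tester on unlabeled samples, comparing $\E_{\x\sim S_{\mathrm{train}}}[\x^\alpha]$ with $\E_{\x\sim S_{\mathrm{test}}}[\x^\alpha]$ for every multi-index $\alpha$ with $|\alpha|\le 2s$; (ii) explicitly reject if the fraction of test samples with $\|\x\|_2>R$ is too large; and (iii) upon acceptance, solve a least-squares polynomial regression of degree $\ell\le R\log R\cdot g_\gF(\epsilon)$ on the labeled training sample and output $h(\x)=\clip_M(\hat p(\x))$. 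The radius would be set to $R=(\log(M/\epsilon))^{O(1/\gamma)}$ so that $\gamma$-strict subexponentiality forces $\Pr_{\Dtrainx}[\|\x\|_2>R]\le\epsilon/M$, and the moment degree to $s=(\ell\log(M/\epsilon))^{O(1/\gamma)}$ so that the tester also controls high polynomial moments on $\Dtestx$. Since every $f\in\gF$ is a $k$-subspace junta, a degree-$\ell$ uniform approximator $q(W\x)$ remains a degree-$\ell$ polynomial in the raw coordinates, so the algorithm can operate in the standard monomial basis of $\R^d$ without ever accessing $W$.

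\textbf{Soundness.} Let $f_\opt$ and $f^*$ minimize $\gL_{\Dtrain}$ and $\gL_{\Dtrain}+\gL_{\Dtest}$ over $\gF$, with corresponding degree-$\ell$ uniform approximators $p_\opt,p^*$. On the training side, splitting $L^2$-norms into the in-ball and out-of-ball regions, the uniform bound gives $\epsilon$ pointwise inside, while outside, clipping the polynomial and labels at $M$ together with $\Pr_{\Dtrainx}[\|\x\|_2>R]\le(\epsilon/M)^{O(1)}$ yields an $O(\epsilon)$ $L^2$ bound. Standard uniform-convergence arguments for bounded degree-$\ell$ polynomial regressors (using subexponential moment bounds as ambient tail control) then deliver $\gL_{\Dtrain}(\clip_M(\hat p))\le\opt+O(\epsilon)$ and $\|\clip_M(\hat p)-\clip_M(f^*)\|_{\Dtrainx}\le\lambda+O(\epsilon)$. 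To transfer to $\Dtestx$, I would use that $(\hat p-p^*)^2$ is a polynomial of degree $2\ell\le 2s$ with coefficient norm bounded via the regression constraint; moment matching with small slack then certifies $\|\hat p-p^*\|_{\Dtestx}^2\le\|\hat p-p^*\|_{\Dtrainx}^2+O(\epsilon^2)$ on the in-ball region, and combining via the triangle inequality $\gL_{\Dtest}(h)\le\gL_{\Dtest}(f^*)+\|\clip_M(\hat p)-\clip_M(f^*)\|_{\Dtestx}$ closes the argument up to the out-of-ball piece on the adversarial test marginal.

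\textbf{Main obstacle and completeness.} The hard step is controlling $\|\clip_M(\hat p)-\clip_M(p^*)\|_{\Dtestx}^2$ outside the ball, where the uniform bound no longer applies and $\Dtestx$ is arbitrary. My plan is H\"older: for $q=(\log(M/\epsilon))^{O(1/\gamma)}$,
\[
\E_{\x\sim\Dtestx}\!\left[(\hat p(\x)-p^*(\x))^2\cdot\ind\{\|\x\|_2>R\}\right] \le \Pr_{\Dtestx}[\|\x\|_2>R]^{1-1/q}\cdot\E_{\Dtestx}[(\hat p-p^*)^{2q}]^{1/q}.
\]
The first factor is small by the explicit tail test; the second, being the expectation of a polynomial of degree $2\ell q=2s$, is transferred from $\Dtrainx$ (where subexponential concentration bounds all such polynomial moments) via the degree-$2s$ moment-matching test. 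Combining the three pieces yields $\gL_{\Dtest}(h)\le\opt+\lambda+5\epsilon$. For \emph{completeness}, when $\Dtrainx=\Dtestx$, a Bernstein-type bound for polynomial functions of a $\gamma$-subexponential vector shows that $N=\poly(d^s,1/\epsilon)\cdot(\log(1/\delta))^{\ell}$ unlabeled samples make every empirical moment and the empirical tail probability concentrate within the required slack, so the tester accepts with probability at least $1-\delta$. The runtime is then dominated by computing and comparing the $d^{O(s)}$ empirical moments and by solving a degree-$\ell$ least-squares program, matching the claimed $\poly(d^s,1/\epsilon,\log(1/\delta)^\ell)$.
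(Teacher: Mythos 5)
Your overall template---moment matching, degree-$\ell$ least-squares regression, clipping, and splitting errors into in-ball and out-of-ball pieces---is the right one and matches the paper's structure. But there is a genuine gap in how you treat the tail, and it propagates through both soundness and completeness.

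You work throughout with $\Pr[\|\x\|_2 > R]$ and assert that $\gamma$-strict subexponentiality with $R = (\log(M/\epsilon))^{O(1/\gamma)}$ forces $\Pr_{\Dtrainx}[\|\x\|_2 > R] \le \epsilon/M$. This is false for general $d$: strict subexponentiality controls one-dimensional marginals $\inprod{\vw}{\x}$, but $\|\x\|_2$ concentrates around $\Theta(\sqrt d)$ for an isotropic subexponential vector, so any $R$ independent of $d$ gives $\Pr[\|\x\|_2 > R] \approx 1$. Consequently your proposed explicit test ``reject if the fraction of test samples with $\|\x\|_2 > R$ is too large'' would reject with overwhelming probability even when $\Dtrainx = \Dtestx$, breaking completeness; and the soundness bounds that invoke $\Pr_{\Dtrainx}[\|\x\|_2 > R] \le (\epsilon/M)^{O(1)}$ are vacuous. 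The paper avoids this by exploiting the $k$-subspace-junta structure from \Cref{assumption:unbounded-main}: the uniform approximation only needs to hold for $\|W^*\x\|_2 \le R$ where $W^* \in \R^{k\times d}$ is a projection, and $\|W^*\x\|_2$ is effectively $k$-dimensional, so $R$ scales with $k$ and $\log(M/\epsilon)$ but not with $d$. Crucially, $W^*$ is unknown, so one cannot test $\Pr_{S'}[\|W^*\x\|_2>R]$ directly either; the paper circumvents this by noting that $\|W^*\x\|_2^{2t}$ is a degree-$2t$ polynomial in $\x$ with controlled coefficients, so the moment-matching test of degree $2t$ simultaneously bounds $\E_{S'}[\|W^*\x\|_2^{2t}]$ for \emph{every} projection $W^*$, and Markov then gives the tail bound. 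No separate tail test is needed (or possible).

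Separately, your H\"older decomposition $\E_{\Dtestx}[(\hat p - p^*)^2\ind\{\cdot\}] \le \Pr[\cdot]^{1-1/q}\,\E_{\Dtestx}[(\hat p - p^*)^{2q}]^{1/q}$ is a different route from the paper and is not obviously wrong once the $W^*\x$ issue is fixed, but it requires transferring a degree-$2\ell q$ polynomial moment from train to test, whose coefficients blow up like $B^{2q}d^{O(\ell q)}$, forcing an exponentially small moment slack $\Delta$. The paper's alternative is to keep all out-of-ball comparisons between \emph{clipped} quantities $\fsM$ and $\psM$ (each bounded by $M$), apply Cauchy--Schwarz against $\Pr_{S'}[\|W^*\x\|_2>R]$, and bound that probability by Markov as above; and to handle $\|\hat p - p^*\|_{S'}$ directly via the degree-$2\ell$ transfer lemma (\Cref{transfer_lemma}). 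This never needs moment matching beyond degree $2\max(\ell,t)$ and keeps the coefficient-size bookkeeping manageable. You should switch to the $\|W^*\x\|_2$ radius, drop the explicit tail test, and either adopt the clipping + Cauchy--Schwarz + Markov argument or carefully justify the degree and slack needed for the high-power H\"older transfer.
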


\begin{algorithm}
\caption{TDS Regression via Uniform Approximation}\label{algorithm:uniform-approx}
\KwIn{Parameters $\eps>0, \delta\in(0,1)$, $R \ge 1$, $M \ge 1$, and sample access to $\Dtrain, \Dtestx$}
Set $\eps' = \epsilon/11$, $\delta' = \delta/4$, $\ell = R \log R \cdot g_\gF(\epsilon)$, $t = 2\log \Paren{\frac{2M}{\epsilon'}}$, $B = r(2(k + \ell))^{3\ell}$, $\mslack=\frac{\epsilon'^2}{4B^2 d^{2\ell t}}$ \\
Set $\mtrain = \mtest = \poly(M, \ln(1/\delta)^\ell, 1/\epsilon, d^\ell, r)$ and draw $\mtrain$ i.i.d. labeled examples $S$ from $\Dtrain$ and $\mtest$ i.i.d. unlabeled examples $S'$ from $\Dtestx$. \\
For each $\mindex \in \N^d$ with $\|\mindex\|_1 \le 2\max(\ell, t)$, compute the quantity 
$\momentempirical_\mindex = \E_{\x\sim S'} [\x^\mindex] = \E_{\x\sim S'} \Brack{\prod_{i\in[d]}x_i^{\mindex_i}}$ \\
\textbf{Reject} and terminate if $|\momentempirical_\mindex-\E_{\x\sim \Dtrainx}[\x^\mindex]|>\mslack$ for some $\mindex$ with $\|\mindex\|_1 \le 2\max(\ell, t)$. \\
\textbf{Otherwise}, solve the following least squares problem on $S$ up to error $\eps'$
\begin{align*}
    \min_p \quad &\E_{(\x,y)\sim S}\Brack{(y-p(\x))^2} \\
    \text{s.t. } &p \text{ is a polynomial with degree at most }\degbound \\
    &\text{each coefficient of }p \text{ is absolutely bounded by }\pbound
\end{align*}
\\
Let $\hat{p}$ be an $\eps'^2$-approximate solution to the above optimization problem. \\
\textbf{Accept} and output $\clip_M(\hat p(\x))$.
\end{algorithm}

Note that \Cref{assumption:unbounded-main} involves a low-degree uniform approximation assumption, which only holds within some bounded-radius ball. Since we work under unbounded distributions, we also need to handle the errors outside the ball. To this end, we use the following lemma, which follows from results in \cite{ben2018classical}.

\begin{lemma}\label{lemma:poly-bnd}
    Suppose $f=f_W$ and $q$ satisfy parts 1 and 2 of \Cref{assumption:unbounded-main}. Then
    \[|p(\x)| \le (k\ell)^{O(\ell)} \Norm{{W\x}}_2^\ell \text{, for all } \Norm{W\x}_2 \ge R.\]
\end{lemma}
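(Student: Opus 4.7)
The plan is to leverage the fact that $p(\x) = q(W\x)$ depends on $\x$ only through $\vu = W\x \in \R^k$, so the lemma reduces to a purely multivariate polynomial growth estimate in $k$ variables. By parts 1 and 2 of \Cref{assumption:unbounded-main}, writing $g(\vu) = f(\x)$ for $\vu = W\x$, we have $|g(\vu)|\le r$ for $\|\vu\|_2\le R$ and $|q(\vu)-g(\vu)|\le \epsilon$ on the same set, hence $|q(\vu)|\le r+\epsilon \le r+1$ whenever $\|\vu\|_2\le R$. The task is thus to bound the size of the degree-$\ell$ polynomial $q$ on $\R^k$ outside the ball of radius $R$, given this uniform bound inside.

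First I would rescale to the unit ball: define $\tilde q(\vv) := q(R\vv)$, which has degree at most $\ell$ and satisfies $|\tilde q(\vv)|\le r+1$ for $\|\vv\|_2\le 1$. After this reduction, and using $R\ge 1$ together with $r = O(1)$ to absorb constants later, the target inequality becomes $|\tilde q(\vv)|\le (k\ell)^{O(\ell)}\|\vv\|_2^\ell$ for $\|\vv\|_2\ge 1$.

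The main step is the classical polynomial growth estimate, which is where \cite{ben2018classical} enters. The cleanest route is a line-restriction argument: for any $\vv\in\R^k$ with $s:=\|\vv\|_2\ge 1$, define the univariate polynomial $\varphi(t):=\tilde q(t\vv/s)$, of degree at most $\ell$, which satisfies $|\varphi(t)|\le r+1$ for $|t|\le 1$. The classical Chebyshev extremal inequality then gives $|\varphi(t)|\le (r+1)|T_\ell(t)|$ for $|t|\ge 1$, and $|T_\ell(t)|\le (2|t|)^\ell$ in this range. Specializing to $t=s$ yields $|\tilde q(\vv)|=|\varphi(s)|\le (r+1)(2s)^\ell$. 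Translating back to $\vu = W\x$ via $\|W\x\|_2\ge R\ge 1$, we obtain $|p(\x)|=|q(W\x)|=|\tilde q(W\x/R)|\le (r+1)(2\|W\x\|_2/R)^\ell \le (r+1)(2\|W\x\|_2)^\ell$, which is at most $(k\ell)^{O(\ell)}\|W\x\|_2^\ell$ since $r=O(1)$ and $2^\ell\le (k\ell)^{O(\ell)}$.

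The only real subtlety is that the line restriction lives inside $\R^k$ rather than $\R^d$, which is precisely what keeps the constant independent of the ambient dimension $d$ and lets the subspace-junta structure pay off downstream; this is what makes the moment-matching tester in \Cref{algorithm:uniform-approx} succeed with only $d^{O(\ell t)}$-many moments. The $k^{O(\ell)}$ factor in the stated bound is a loose slack relative to the sharp $2^{O(\ell)}$ from Chebyshev, kept because an alternative proof via expanding $q$ in the monomial basis and bounding coefficients on the unit ball naturally produces a $\binom{k+\ell}{\ell}=(k\ell)^{O(\ell)}$ factor, and because this form integrates more cleanly against sub-exponential tails in the subsequent use of the lemma.
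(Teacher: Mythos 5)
Your proof is correct and genuinely differs from the paper's. The paper's argument (see \Cref{lemma:bound-on-uniform-approximator-outside-interval}) rescales $q$ to the unit ball exactly as you do, but then invokes \Cref{lemma:ball_coeff_bounds} — the multivariate coefficient bound derived from \cite{ben2018classical} — to get $\sum_\alpha |h_\alpha| \le (r+\epsilon)(2(k+\ell))^{3\ell} k^{\ell/2}$, and finishes with a crude monomial-by-monomial estimate $|x_1^{\alpha_1}\cdots x_k^{\alpha_k}| \le \|\x\|_2^{\ell}$ for $\|\x\|_2\ge 1$. You instead restrict $\tilde q$ to the line through $\vv$, observe that the resulting univariate polynomial is bounded by $r+1$ on $[-1,1]$, and apply the Chebyshev extremal inequality (that $T_\ell$ majorizes any such polynomial outside $[-1,1]$) together with $|T_\ell(t)|\le (2|t|)^\ell$. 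Both routes are valid; yours yields the sharper factor $(r+1)2^\ell$ in place of $(k\ell)^{O(\ell)}$, while the paper's coefficient-sum approach is what it actually needs anyway in the soundness analysis of \Cref{theorem:tds-via-uniform} (coefficient bounds feed directly into the transfer lemma and the choice of $B$), so stating the lemma with the looser $(k\ell)^{O(\ell)}$ factor and proving it via \Cref{lemma:ball_coeff_bounds} keeps the toolchain uniform rather than introducing Chebyshev machinery only to discard its sharpness downstream. Your final paragraph already anticipates this distinction, which is the right reading of why the paper phrases the lemma as it does.
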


The lemma above gives a bound on the values of a low-degree uniform approximator outside the interval of approximation. Therefore, we can hope to control the error of approximation outside the interval by taking advantage of the tails of our target distribution as well as picking $R$ sufficiently large. In order for the strictly subexponential tails to suffice, the quantitative dependence of $\ell$ on $R$ is important. This is why we assume (see \Cref{assumption:unbounded-main}) that $\ell = \tilde{O}(R)$. In particular, in order to bound the quantity $\E_{\x\sim\Dtrainx}[p^2(\x)\ind\{\|W\x\|_2\ge R\}]$, we use \Cref{lemma:poly-bnd}, the Cauchy-Schwarz inequality, and the bounds $\E_{\x\sim\Dtrainx}[\Norm{{W\x}}_2^{4\ell}] \le (k\ell)^{O(\ell)}$ and $\pr_{\x\sim\Dtrainx}[\|W\x\|_2\ge R] \le \exp(-\Omega(R/k)^{1+\gamma})$. Substituting for $\ell = \tilde{O}(R)$, we observe that the overall bound on the quantity $\E_{\x\sim\Dtrainx}[p^2(\x)\ind\{\|W\x\|_2\ge R\}]$ decays with $R$, whenever $\gamma$ is strictly positive. Therefore, the overall bound can be made arbitrarily small with an appropriate choice of $R$ (and therefore $\ell$). 

%\subsection{Full Proof of Central Theorem (Theorem \ref{theorem:tds-via-uniform})}\label{section:unbounded-proof}

Apart from the careful manipulations described above, the proof of \Cref{theorem:tds-via-uniform} follows the lines of the corresponding results for TDS learning through sandwiching polynomials \cite{klivans2023testable}.

The following lemma allows us to relate the squared loss of the difference of polynomials under a set $S$ and under $\gD$, as long as we have a bound on the coefficients of the polynomials. This will be convenient in the analysis of the algorithm.
\begin{lemma}[Transfer Lemma for Square Loss, see \cite{klivans2023testable}]\label{transfer_lemma}
    Let $\gD$ be a distribution over $\R^d$ and $S$ be a set of points in $\R^d$. If $|\E_{\x \sim S}[\x^\alpha] - \E_{x \sim \gD}[\x^\alpha]| \le \mslack$ for all $\alpha \in \N^d$ with $\Norm{\alpha}_1 \le 2\ell$, then for any degree $\ell$ polynomials $p_1, p_2$ with coefficients absolutely bounded by $B$, it holds that
    \[
    \Abs{\E_{\x \sim S}[(p_1(\x) - p_2(\x))^2] - \E_{x \sim \gD}[(p_1(\x) - p_2(\x))^2]}\le 4B^2 d^{2\ell} \mslack
    \]
\end{lemma}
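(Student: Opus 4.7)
The plan is to reduce the statement to a direct moment-matching calculation on a single polynomial. Let $q = p_1 - p_2$. Since the coefficients of $p_1$ and $p_2$ are absolutely bounded by $B$, the coefficients of $q$ are absolutely bounded by $2B$, and $q$ has degree at most $\ell$. Writing the monomial expansion $q(\x) = \sum_{\alpha: \|\alpha\|_1 \le \ell} c_\alpha \x^\alpha$, we get
\[
q(\x)^2 = \sum_{\alpha,\beta : \|\alpha\|_1,\|\beta\|_1 \le \ell} c_\alpha c_\beta\, \x^{\alpha + \beta},
\]
which is a polynomial of degree at most $2\ell$.

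Next, I would apply linearity of expectation to both sides and subtract:
\[
\E_{\x\sim S}[q(\x)^2] - \E_{\x\sim \gD}[q(\x)^2] = \sum_{\alpha,\beta} c_\alpha c_\beta\,\bigl(\E_{\x\sim S}[\x^{\alpha+\beta}] - \E_{\x\sim \gD}[\x^{\alpha+\beta}]\bigr).
\]
Since $\|\alpha + \beta\|_1 \le 2\ell$, the moment-matching hypothesis applies to each term: each difference of moments has absolute value at most $\mslack$. Taking absolute values and using $|c_\alpha|, |c_\beta| \le 2B$ yields
\[
\Bigl|\E_{\x\sim S}[q(\x)^2] - \E_{\x\sim \gD}[q(\x)^2]\Bigr| \le 4B^2 \mslack \cdot \Bigl|\{(\alpha,\beta) : \|\alpha\|_1, \|\beta\|_1 \le \ell\}\Bigr|.
\]

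Finally, I would bound the number of multi-indices $\alpha \in \N^d$ with $\|\alpha\|_1 \le \ell$ by $d^\ell$ (absorbing constants and using that $d$ is at least a modest size for the applications of interest; more precisely one has $\binom{d+\ell}{\ell} \le (d+1)^\ell$, which suffices), so the number of pairs $(\alpha,\beta)$ is at most $d^{2\ell}$. This gives the claimed bound $4B^2 d^{2\ell}\mslack$. There is no real obstacle in this proof — it is a mechanical expansion followed by triangle inequality — the only subtlety worth flagging is tracking that the relevant monomials in $q^2$ have total degree at most $2\ell$, which is exactly why the hypothesis is stated for multi-indices up to $2\ell$ rather than $\ell$.
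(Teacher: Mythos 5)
Your proof is correct and is the standard (and essentially the only sensible) argument for this kind of moment-transfer statement: expand $q = p_1 - p_2$ into monomials, expand $q^2$, apply the moment-matching hypothesis term by term via the triangle inequality, and count terms. The paper does not include its own proof here (it cites \cite{klivans2023testable}), so there is nothing to diverge from, but your argument matches what that reference does.

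The only thing worth flagging is the monomial count at the end, which you yourself hedge on. The exact number of $\alpha\in\N^d$ with $\|\alpha\|_1\le\ell$ is $\binom{d+\ell}{\ell}$, and the clean universal bound is $(d+1)^\ell$, as you note; $d^\ell$ fails for small $d$ (e.g.\ $d=2,\ell=2$ gives $6>4$). So strictly speaking your argument yields $4B^2(d+1)^{2\ell}\mslack$ rather than $4B^2 d^{2\ell}\mslack$. This is the same implicit sloppiness present elsewhere in the paper (e.g.\ \Cref{lemma:sum_coeff_bound} asserts a degree-$\ell$ polynomial in $d$ variables "has at most $d^\ell$ terms" without comment), and it is harmless in every regime the paper cares about since $d$ is the ambient dimension and $\ell$ is comparatively small. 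If you wanted the constant $d^{2\ell}$ exactly as stated, you could either add a standing assumption that $d\ge\ell$ (under which $\binom{d+\ell}{\ell}\le d^\ell$ does hold for $\ell\ge 4$ or so), or simply restate the lemma with $(d+1)^{2\ell}$. Either way, the mathematical content is sound and the subtlety you flagged — that $q^2$ has degree up to $2\ell$, which is why the hypothesis requires moment-matching up to degree $2\ell$ — is exactly the right thing to emphasize.
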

% \begin{proof}
%     The polynomial $(p_1 - p_2)$ has degree $\ell$ and coefficients bounded in absolute value by $2B$. Let $p' = (p_1 - p_2)^2 = \sum_{\Norm{\alpha}_1 \le 2\ell} p_{\alpha}' \x^{\alpha}$. By \cref{lemma:sum_coeff_bound}, $\sum_{\Norm{\alpha}_1 \le 2\ell} |p_{\alpha}'| \le 4B^2 d^{2\ell}$. Using the moment matching assumption,
%     \begin{align*}
%     \Abs{\E_{\x \sim S}[p'(\x)] - \E_{\x \sim \gD}[p'(\x)]} &= \Abs{\sum_{\Norm{\alpha}_1 \le 2\ell} p_\alpha' \Paren{\E_{\x \sim S}[x^\alpha] - \E_{x \sim \gD}[x^\alpha]}} \\
%     &\le \sum_{\Norm{\alpha}_1 \le 2\max(\ell, t)} |p_{\alpha}'| \mslack \\
%     &\le 4B^2 d^{2\ell} \mslack.
%     \end{align*}
% \end{proof}
We are now ready to prove \Cref{theorem:tds-via-uniform}.

\begin{proof}[Proof of \cref{theorem:tds-via-uniform}]
    We will prove soundness and completeness separately.
    
    \textbf{Soundness.} Suppose the algorithm accepts and outputs $\phM$. Let $f^*= \arg \min_{f\in\gF} [\gL_{\Dtrain}(f)+ \gL_{\Dtest}(f)]$ and $f_\opt = \arg \min_{f\in\gF} [\gL_{\Dtrain}(f)]$. By the uniform approximation assumption in \cref{assumption:unbounded-main}, there are polynomials $p^*, p_{\opt}$ which are $(\epsilon, R)$-uniform approximations for $f^*$ and $f_\opt$, respectively. Let $f^*$ and $f_\opt$ have the corresponding matrices $W^*, W_\opt \in \R^{k \times d}$, respectively. Denote $\lambda_{\mathrm{train}} = \gL_{\gD}(f^*)$ and $\lambda_{\mathrm{test}} = \gL_{\gD'}(f^*)$. Note that for any $f,g: \R^d \rightarrow \R$, ``unclipping" both functions will not increase their squared loss under any distribution, i.e. $\Norm{\clip_M(f) - \clip_M(g)}_{\gD} \le \Norm{f - g}_{\gD}$, which can be seen through casework on $\x$ and when $f(\x), g(\x)$ are in $[-M, M]$ or not. 
    Recalling that the training and test labels are bounded, we can use this fact as we bound the error of the hypothesis on $\Dtest$.
    \begin{align*}
        \gL_{\Dtest}(\phM) &\le \gL_{\Dtest}(\fsM) + \Norm{\fsM - \phM}_{\Dtest} \\
        &\le \gL_{\Dtest}(f^*) + \Norm{\fsM - \phM}_{S'} + \epsilon'.
    \end{align*}
    The second inequality follows from unclipping the first term and by applying Hoeffding's inequality, so that for $\mtest \ge \frac{8M^4 \ln (2/\delta')}{\epsilon'^4}$, the second term is bounded with probability $\ge 1 - \delta'$. Proceeding with more unclipping and using the triangle inequality:
    \begin{align}
        \gL_{\Dtest}(\phM) &\le \lambda_{\mathrm{test}} + \Norm{\fsM - \psM}_{S'} + \Norm{\psM - \phM}_{S'} + \epsilon' \notag \\
        &\le \lambda_{\mathrm{test}} + \Norm{\fsM - \psM}_{S'} + \Norm{p^{*} - \hat{p}}_{S'} + \epsilon'. \label{eqn: phM_loss}
    \end{align}

    We first bound $\Norm{\fsM - \psM}_{S'} = \sqrt{\E_{\x \sim S'}[(\fsMx - \psMx)^2]}$. Since $p^*(\x)$ is an $(\epsilon, R)$-uniform approximation to $f^*(\x)$, we separately consider when we fall in the region of good approximation ($\Norm{W^*\x} \le R$) or not. 
    \begin{align*}
        \E_{\x \sim S'}&[(\fsMx - \psMx)^2] \\
        &= \E_{\x \sim S'}[(\fsMx - \psMx)^2 \cdot \ind[\Norm{W^*\x} \le R] \\
        &+ \E_{\x \sim S'}[(\fsMx - \psMx)^2 \cdot \ind[\Norm{W^*\x} >R]] \\
        &\le \epsilon^2 + \E_{\x \sim S'}[2(\fsMx^2 + \psMx^2) \cdot \ind[\Norm{W^*\x} >R]]
    \end{align*}
    Then by applying Cauchy-Schwarz, (and similarly for $\psM$):
    \[\E_{\x \sim S'}[\fsMx^2 \cdot \ind[\Norm{W^*\x} >R]] \le \sqrt{\E_{\x \sim S'}[\fsMx^4]} \cdot \sqrt{\Pr_{\x \sim S'}[\Norm{W^*\x} >R]]}.
    \] 
    By definition, $\psM^2, \fsM^2 \le M^2$. So it suffices to bound $\Pr_{\x \sim S'}[\Norm{W^*\x} >R]]$, since we now have
    \begin{equation}
    \label{eqn: clip_1}
    \E_{\x \sim S'}[(\fsMx - \psMx)^2] \le \epsilon^2 + 4M^2 \sqrt{\Pr_{\x \sim S'}[\Norm{W^*\x} >R]]}.
    \end{equation}
    In order to bound this probability of the test samples falling outside the region of good approximation, we use the property that the first $2t$ moments of $S'$ are close to the moments of $\gD$ (as tested by the algorithm). Applying Markov's inequality, we have
    \[
    \Pr_{\x \sim S'}[\Norm{W^*\x} >R]] \le \frac{\E_{\x \sim S'}[\Norm{W^*\x}^{2t}]}{R^{2t}}.
    \]

    Write $\Norm{W^*\x}^{2t} = \Paren{\sum_{i=1}^k \inprod{W_i^*}{\x}^2}^t$, where $\sum_{i=1}^k \inprod{W_i^*}{\x}^2 = \sum_{i=1}^k \Paren{\sum_{j=1}^d W_{ij}^* x_j}^2$ is a degree $2$ polynomial with each coefficient bounded in absolute value by $2k$ (noting that since $WW^\top = 1$, then $|W_{ij}| \le 1$). Let $a_\alpha$ denote the coefficients of $\Norm{W^*\x}^{2t}$. Applying \Cref{lemma:sum_coeff_bound}, $\sum_{\Norm{\alpha}_1 \le 2t} |a_\alpha| \le (2k)^t d^{2t}\leq d^{O(t)}$. By linearity of expectation, we also have $|\E_{\x \sim S'}[\Norm{W^*\x}^{2t} - \E_{x \sim \gD}[\Norm{W^*\x}^{2t}]| \le \sum_{\Norm{\alpha}_1 \le 2t} |a_\alpha| \cdot \mslack\leq d^{O(t)}\cdot \mslack\leq \epsilon'$, where $\mslack\leq \epsilon'\cdot d^{-\Omega(t)}$.  Since $\gD$ is $ \gamma$-strictly subexponential, then by \cref{fact:moment_bound}, $\E_{x \sim \gD}[\inprod{W_i^*}{\x}^{2t}] \le (2C't)^{\frac{2t}{1+\gamma}}$. Then, we can bound the numerator $\E_{\x \sim S'}[\Norm{W^*\x}^{2t}] \le \E_{x \sim \gD}[\Norm{W^*\x}^{2t}] + \epsilon' \le (Ckt)^{\frac{2t}{1+\gamma}}$ for some large constant $C$. So we have that 
    \[
    \Pr_{\x \sim S'}[\Norm{W^*\x} >R]] \le \frac{(Ckt)^{\frac{2t}{1+\gamma}}}{R^{2t}}.
    \] Setting $t\geq C'(\log(M/\epsilon))$ and $R \ge C'(kt)\geq C'k\log(M/\epsilon)$ for large enough $C'$ makes the above probability at most $16\epsilon'^4/M^4$ so that $4M^2 \sqrt{\Pr_{\x \sim S'}[\Norm{W^*\x} >R]]} \le \epsilon'^2$. Thus, from \Cref{eqn: clip_1}, we have that
    \begin{equation} \label{eqn: fsM_psM}
    \Norm{\fsM - \psM}_{S'} \le \epsilon + \epsilon'.
    \end{equation}
    We now bound the second term $\Norm{\psM - \phM}_{S'}$. By \cref{lemma:moment-concentration}, the first $2\ell$ moments of $S$ will concentrate around those of $\Dtrainx$ whenever $\mtrain \ge \frac{1}{\Delta^2}{(Cc)^{4\ell} \ell^{8\ell +1} (\log(20d/\delta))^{4\ell+1}}$, and similarly the first $2\ell$ moments of $S'$ match with $\Dtrainx$ because the algorithm accepted. Using the transfer lemma (\Cref{transfer_lemma}) when considering $p' = (p^* - \hat{p})^2$, along with the triangle inequality, we get:
    \begin{align*}
        \Norm{p^*(\x) - \hat{p}(\x)}_{S'} &\le \Norm{p^*(\x) - \hat{p}(\x)}_{\gD} + \sqrt{4B^2d^{2\ell}\mslack} \\
        &\le \Norm{p^*(\x) - \hat{p}(\x)}_S + 2\epsilon' \\
        &\le \gL_S({p^*}) + \gL_S(\hat{p}) + 2\epsilon',
    \end{align*}
    where we note that we can bound $B$, the sum of the magnitudes of the coefficients, by $r(2(k+\ell))^{3\ell}$ using \cref{lemma:ball_coeff_bounds}.
    Recall that by definition $\hat{p}$ is an $\epsilon'^2$-approximate solution to the optimization problem in \cref{algorithm:uniform-approx}, so $\gL_S(\hat{p}) \le \gL_S (p_{\mathrm{opt}})+ \epsilon'$. Plugging this in, we obtain
    \begin{align}
        \Norm{p^*(\x) - \hat{p}(\x)}_{S'} &\le \gL_S (p^*)+ \gL_S(p_\opt) + 3\epsilon' \notag \\
        &\le \Norm{p^* - \fsM}_S + \gL(\fsM)_S \notag \\ 
        &\quad + \Norm{p_\opt(\x) - \clip_M(f_\opt(\x))}_S + \gL_S(\clip_M(f_\opt)) + 3\epsilon'. \label{eqn: ps_phat}
    \end{align}
    By applying Hoeffding's inequality, we get that $\gL(\fsM)_S \le \Norm{\fsM - y}_\Dtrain + \epsilon'$ which holds with probability $\ge 1 - \delta'$ when $\mtrain \ge \frac{8M^4 \ln(2/\delta')}{\epsilon'^4}$. By unclipping $\fsM$, this is at most $\lambda_\mathrm{train} + \epsilon'$. Similarly, with probability $\ge 1 - \delta'$, $\gL_S(\clip_M(f_\opt)) \le \opt + \epsilon'$. It remains to bound $\Norm{p^*(\x) - \fsM}_S$ and $\Norm{p_\opt - \clip_M(f_\opt(\x))}_S$. The analysis for both is similar to how we bounded $\Norm{\psM - \fsM}_S$, except since we do not clip $p^*$ or $p_\opt$ we will instead take advantage of the bound on $p^*(\x)$ on $\Norm{W^*\x} > R$ (respectively $p_\opt(\x)$ on $\Norm{W_{\opt}\x} > R$). We show how to bound $\Norm{p^*(\x) - \fsM}_S$:
    \begin{align}
        \E_{\x \sim S}[(\fsMx - p^*(\x))^2] &= \E_{\x \sim S}[(\fsMx - p^*(\x))^2 \cdot \ind[\Norm{W^*\x} \le R] \notag \\
        &+ \E_{\x \sim S}[(\fsMx - p^*(\x))^2 \cdot \ind[\Norm{W^*\x} >R]] \notag \\
        &\le \epsilon^2 + 2\E_{\x \sim S}[\fsMx^2 \cdot \ind[\Norm{W^*\x} >R]] \notag \\
        &\hspace{2.2em} + 2\E_{\x \sim S}[p^*(\x)^2 \cdot \ind[\Norm{W^*\x} >R]]. \label{eqn: fsM_ps}
    \end{align}
    We can bound the first expectation term, $\E_{\x \sim S}[\fsMx^2 \cdot \ind[\Norm{W^*\x} >R]]$, with $\epsilon'^2/4$ since the same analysis holds for bounding $\E_{\x \sim S'}[\fsMx^2 \cdot \ind[\Norm{W^*\x} >R]]$, except instead of matching the first $2t$ moments of $S'$ with $\Dtrainx$, we match the first $2\ell$ moments of $S$ with $\Dtrainx$. We use the strictly subexponential tails of $\Dtrainx$ to bound the second term. Cauchy-Schwarz gives
    \[
    \E_{\x \sim S}[p^*(\x)^2 \cdot \ind[\Norm{W^*\x} >R]] \le \sqrt{\E_{\x \sim S}[p^*(\x)^4]\cdot \Pr_{\x \sim S}[\Norm{W^*\x} >R]]}
    \]
    Note that by definition of $r$ and using that $p^*$ is an $(\epsilon, R)$-uniform approximation of $f^*$, then $p^*(\x) \le (r + \epsilon)$ when $\Norm{W^*\x} \le R$. By \cref{lemma:ball_coeff_bounds}, $|p^*(\x)| \le (r + \epsilon) \cdot (2k\ell)^{c\ell}\Norm{(W^*x)/R}^\ell$ for sufficiently large constant $c_1 > 0$. Then since $R \ge 1$, $p^*(\x) \le (r + \epsilon)^4 \cdot (2k\ell)^{c\ell} \Norm{W^*\x}^{4\ell}$. Then we have
    \begin{align*}
        \E_{\x \sim S}[p^*(\x)^4] &\le (r + \epsilon)^4 \cdot (2k\ell)^{c_1\ell} \cdot \E_{\x \sim S}[\Norm{W^*\x}^{4\ell}] \\
        &\le (r + \epsilon)^4 \cdot (2k\ell)^{c_1\ell} \cdot (\E_{x \sim \Dtrainx}[\Norm{W^*\x}^{4\ell}] + 1) \\
        &\le (r + \epsilon)^4 \cdot (2k \ell)^{c\ell}
    \end{align*}
    where using \cref{fact:moment_bound} we bound on $\E_{x \sim \Dtrainx}[\Norm{W^*\x}^{4\ell}] \le k^{2\ell}(4\ell)^{\frac{4C\ell}{1+\gamma}}$ similar to above, which can be upper bounded with $(2k\ell)^{c_2\ell}$ for $c_2 > 0$ a sufficiently large constant. Take $c = c_1 + c_2$.
    We bound $\Pr_{\x \sim S}[\Norm{W^*\x} >R]]$ as follows:
    \begin{align*}
        \Pr_{\x \sim S}[\Norm{W^*\x} >R]] &= \Pr_{\x \sim S}\Brack{\sum_{i=1}^k \inprod{W_i^*}{\x}^2 > R^2} \\
        &\le \sum_{i=1}^k \Pr_{\x \sim S}[\inprod{W_i^*}{\x}^2 > R^2/k] \\
        &\le k \sup_{\twonorm{\vw} = 1} \Pr_{\x \sim S}[\inprod{W}{\x}^2 > R^2/k],
    \end{align*}
    where the first inequality follows from a union bound.
    Since $\inprod{\vw}{\x}^2$ is a degree $2$ polynomial, we can view $\sign(\inprod{\vw}{\x}^2 - R^2/k)$ as a degree-2 PTF. The class of these functions has VC dimension at most $d^2$ (e.g. by viewing it as the class of halfspaces in $d^2$ dimensions). Using standard VC arguments, whenever $\mtrain \ge C \cdot \frac{d^2 + \log(1/\delta')}{(\epsilon'' / k)^2}$ for some sufficiently large universal constant $C > 0$, with probability $\ge 1 - \delta'$ we have
    \[
    \Pr_{\x \sim S}[\inprod{\vw}{\x}^2 > R^2/k] \le \Pr_{x \sim \Dtrainx}[\inprod{\vw}{\x}^2 > R^2/k] + \epsilon''/k.
    \]
    Using the strictly subexponential tails of $\Dtrainx$, we have
    \begin{align*}
        \Pr_{\x \sim S}[\Norm{W^*\x} >R]] &\le k \Paren{\sup_{\Norm{w} = 1} \Pr_{x \sim \Dtrainx}[\inprod{\vw}{\x}^2 > R^2/k] + \epsilon''/k} \\
        &\le 2k \cdot \exp\Paren{-\Paren{R/k}^{1+\gamma}} + \epsilon''.
    \end{align*}
    Choose $\epsilon'' = \frac{\epsilon'^4}{(r + \epsilon)^4 (2k\ell)^{c\ell}}$. Putting it together:
    \begin{align*}
        \E_{\x \sim S}[p^*(\x)^4] \cdot \Pr_{\x \sim S}[\Norm{W^*\x} >R]] &\le (r + \epsilon)^4 \cdot (2k\ell)^{c\ell} e^{-(R/k)^{1+\gamma}} + \epsilon'^4 \\
        &\le (r + \epsilon)^4 \cdot \exp \Paren{c\ell \log(2k\ell) - (R/k)^{1+\gamma}} + \epsilon'^4.
    \end{align*}
    We want to bound the first part with $\epsilon'^4$. Equivalently, we need to show that the exponent is $\le 4 \ln \frac{\epsilon'}{r + \epsilon}$. Substituting $\ell = R\log R \cdot g_\gF(\epsilon)$, we get that $c\ell \log (2k \ell) \le c g_\gF(\epsilon) R (\log R)^2 \log(2k g_\gF(\epsilon))$. Thus, it suffices to show that
    \begin{align*}
        \Paren{\frac{R}{k}}^{1+\gamma} &\ge cg_\gF(\epsilon)R(\log R)^2 (2k g_\gF(\epsilon)) - 4\ln {\frac{\epsilon'}{r + \epsilon}}.
    \end{align*}
    This is satisfied when $R \ge \poly\left(\left(k g_\gF(\epsilon)\log(r)\log(M/\epsilon)\right)^{1+\frac{1}{\gamma}}\right)$. Then, we have that 
    \[
    \E_{\x \sim S}[p^*(\x)^2 \cdot \ind[\Norm{W^*\x} >R]] \le \epsilon'^2 \sqrt{2}.
    \]
    So, plugging this into \cref{eqn: fsM_ps}, we have
    \[
        \Norm{\fsM - p^*}_S \le \sqrt{\epsilon^2 + 2 \cdot \epsilon'^2/4 + 2\epsilon'^2 \sqrt{2}} \le \epsilon + 2\epsilon'.
    \]
    The same argument will also give
    \[
        \Norm{\clip_M(f_\opt(\x)) - p_\opt(\x)}_S \le \epsilon + 2\epsilon'.
    \]
    Combining \cref{eqn: fsM_psM} and the above two bounds into \cref{eqn: ps_phat}, we have from \cref{eqn: phM_loss} that
    \[
    \gL_{\Dtest} (\phM)\le \lambda + \opt + 3\epsilon + 11\epsilon' \le \lambda + \opt + 4\epsilon.
    \]
    The result holds with probability at least $1 - 5\delta' = 1 - \delta$ (taking a union bound over $5$ bad events).

    \textbf{Completeness.} For completeness, it is sufficient to ensure that $\mtest \ge N$ for $N$ in \cref{lemma:moment-concentration}. This is because when $\Dtrainx = \Dtestx$, our test samples $S'$ are in fact being drawn from the subexponential distribution $\Dtrainx$. Then the moment concentration of subexponential distributions (\cref{lemma:moment-concentration}) gives that the empirical moments of $S'$ are close to the moments of $\Dtrainx$ with probability $\ge 1 - \delta'$. This is the only condition for acceptance, so when $\Dtrainx = \Dtestx$, the probability of acceptance is at least $1 - \delta$, as required.

    \textbf{Runtime.} The runtime of the algorithm is $\poly(d^\ell, \mtrain, \mtest)$, where $\ell = R\log R \cdot g_\gF(\epsilon)$. The two lower bounds on $R$ required in the proof are satisfied by setting $R \ge \left(\left(kg_\gF(\epsilon)\log(r)\log(M/\epsilon)\right)^{O(\frac{1}{\gamma})}\right)$. Note that setting $\mtrain = \poly(M, \ln(1/\delta)^\ell, 1/\epsilon, d^\ell, r)$ satisfies the lower bounds on $\mtrain$ required in the proof. For $\mtest$ we required that $\mtest \ge \frac{8M^4 \ln(2/\delta')}{\epsilon'^4}$ and also $\mtest \ge N$ for $N$ in \cref{lemma:moment-concentration}. This is satisfied by choosing $\mtest = \mtrain$. Putting this altogether, we see that the runtime is $\poly(d^s, \ln(1/\delta)^\ell, 1/\epsilon)$ where $s =  \Paren{\Paren{kg_\gF(\epsilon) \log(r) \log(M/\epsilon)}^{O(1/\gamma)}}$.
\end{proof}

\subsection{Applications}

In order to obtain end-to-end results for classes of neural networks (see the rightmost column of \Cref{table:main-results}), we need to prove the existence of uniform polynomial approximators whose degree scales almost linearly with respect to the radius of approximation for the reasons described above. For arbitrary Lipschitz nets (see \Cref{thm:approx_lipschitz_nets}), we use a general tool from polynomial approximation theory, the multivariate Jackson's theorem (\Cref{thm:jackson}). This gives us a polynomial with degree scaling linearly in $R$ and polynomially on $\frac{1}{\epsilon}$ and the number of hidden units ($k$) in the first layer.

For sigmoid nets, a more careful derivation yields improved bounds (see \Cref{thm:approx_sigmoid_nets}) which have a poly-logarithmic dependence on $\frac{1}{\epsilon}$. Our construction involves composing approximators for the activations at each layer. Naively, the degree of this composition would be superlinear in $R$. To get around this, we use the key property that the size of the output of a sigmoid network at any layer is memoryless (i.e., has no $R$ dependence). This follows from the fact that the sigmoid is bounded in $[0,1]$. Using this, we obtain an approximator with almost-linear dependence on $R$. For more details see \Cref{sec:approx_sigmoid_appendix}.

\bibliographystyle{alpha}
\bibliography{refs}

\appendix

\section{Proof of Multiplicative Spectral Concentration Lemma}\label{appendix:tds-kernels}

Here, we restate and prove the multiplicative spectral concentration lemma (\Cref{lemma:relative-error-kernel-matrix}).

\begin{lemma}[Multiplicative Spectral Concentration, Lemma B.1 in \cite{goel2024tolerant}, modified]\label{lemma:appendix-relative-error-kernel-matrix}
    Let $\Dtrainx$ be a distribution over $\R^d$ and $\phi:\R^d \to \R^m$ such that $\Dtrainx$ is $(\phi,C,\ell)$-hypercontractive for some $C,\ell \ge 1$. Suppose that $S$ consists of $N$ i.i.d. examples from $\Dtrainx$ and let $\Phi = \E_{\x\sim\Dtrainx}[\phi(\x)\phi(\x)^\top]$, and $\hat\Phi = \frac{1}{N}\sum_{\x\in S}\phi(\x)\phi(\x)^\top$. For any $\eps,\delta\in(0,1)$, if $N\ge \frac{64 Cm^2}{\eps^2}(4C \log_2(\frac{4}{\delta}))^{4\ell+1}$, then with probability at least $1-\delta$, we have that
    \[
        \text{For any }\va\in\R^m: \va^\top \hat\Phi \va \in [{(1-\eps)} \va^\top \Phi\va, (1+\eps)\va^\top \Phi\va]
    \]
\end{lemma}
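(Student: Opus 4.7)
The plan is to reduce the two-sided multiplicative guarantee to operator-norm concentration of a whitened empirical second-moment matrix, and then combine hypercontractivity with a high-moment Markov inequality followed by an $\eps$-net argument over the unit sphere. To carry out the reduction, let $\Phi^{\dagger/2}$ be the positive semidefinite square root of the Moore-Penrose pseudoinverse of $\Phi$, and set $\tilde\phi(\x) := \Phi^{\dagger/2}\phi(\x)$, $\hat M := \frac{1}{N}\sum_{\x \in S}\tilde\phi(\x)\tilde\phi(\x)^\top$, and $P := \Phi^{\dagger/2}\Phi\Phi^{\dagger/2}$, the orthogonal projection onto $\mathrm{range}(\Phi)$. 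Any $\va \in \mathrm{null}(\Phi)$ satisfies $\inprod{\va}{\phi(\x)} = 0$ almost surely, so both sides of the claimed inequality vanish almost surely on $\mathrm{null}(\Phi)$; it therefore suffices to establish the claim for $\va \in \mathrm{range}(\Phi)$. Substituting $\va = \Phi^{\dagger/2}\vu$ transforms the two-sided multiplicative bound into $\|\hat M - P\|_{\mathrm{op}} \le \eps$ restricted to $\mathrm{range}(\Phi)$; by construction $\E[\hat M] = P$, and hypercontractivity carries over: for each unit $\vu \in \mathrm{range}(\Phi)$, $\E[\inprod{\vu}{\tilde\phi(\x)}^2] = 1$ and $\E[\inprod{\vu}{\tilde\phi(\x)}^{2t}] \le (Ct)^{2\ell t}$ for every positive integer $t$.

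Next, I would obtain a pointwise tail bound via high moments. For a fixed unit $\vu \in \mathrm{range}(\Phi)$, the variables $Z_i := \inprod{\vu}{\tilde\phi(\x_i)}^2 - 1$ are i.i.d.\ and centered, and combining the elementary inequality $(a-1)^{2t} \le 2^{2t}(a^{2t}+1)$ with the hypercontractivity bound at order $4t$ yields $\E[Z_i^{2t}] \le (C_1 t)^{c_1 \ell t}$ for absolute constants $c_1, C_1$. A Rosenthal-type moment inequality for i.i.d.\ centered sums then gives $\E\bigl[|\tfrac{1}{N}\sum_i Z_i|^{2t}\bigr] \le (C_2 t)^{c_2 \ell t}/N^{t}$, and Markov's inequality produces a tail bound of the form $\Pr[|\vu^\top(\hat M - P)\vu| > \eps/2] \le \bigl((C_3 t)^{c_3 \ell}/(\eps^2 N)\bigr)^t$. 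To promote this to a uniform bound, I would pick a $1/4$-net $\gN$ of the unit sphere in $\mathrm{range}(\Phi)$ (which lies in a subspace of dimension at most $m$), so that $|\gN| \le 9^m$ and $\|\hat M - P\|_{\mathrm{op}} \le 2\sup_{\vu \in \gN}|\vu^\top(\hat M - P)\vu|$, and then apply a union bound over $\gN$.

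The main obstacle I anticipate is the joint calibration of the moment order $t$, the $(C_3 t)^{c_3 \ell}$ factor from hypercontractivity, and the $9^m$ cost of the net: taking $t$ too large inflates the numerator beyond $N^t$, while taking $t$ too small prevents the tail from beating the $9^m$ union-bound factor. Matching the stated $N \gtrsim (m^2/\eps^2)(C\log(1/\delta))^{4\ell+1}$ sample complexity requires the choice $t \asymp \log(1/\delta)$ together with the specific form of the Rosenthal-type inequality used above, in a manner that mirrors the argument in Lemma B.1 of \cite{goel2024tolerant}.
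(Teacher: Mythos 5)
Your whitening reduction (replacing the claim by operator-norm concentration of $\hat M = \Phi^{\dagger/2}\hat\Phi\Phi^{\dagger/2}$ around the projection $P$, after disposing of the null space of $\Phi$) and your use of a Rosenthal/Marcinkiewicz--Zygmund moment inequality plus hypercontractivity to bound $\E[Z_i^{2t}]$ both match the paper's proof. Where you diverge, and where the proof breaks, is in how you pass from a single direction to all directions.

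You propose a $1/4$-net $\gN$ of the unit sphere in $\mathrm{range}(\Phi)$, so $|\gN|\le 9^m$, and a union bound giving
$\Pr[\|\hat M - P\|_{\mathrm{op}} > \eps] \le 9^m\bigl((C_3t)^{c_3\ell}/(\eps^2 N)\bigr)^t$.
To drive this below $\delta$ you must take $t$ at least on the order of $m + \log(1/\delta)$ (optimizing the trade-off over the inner ratio gives $t \asymp (m+\log(1/\delta))/\ell$), and in either case $t\gtrsim m/\ell$. But the hypercontractivity term then inflates to $(C_3 t)^{c_3\ell}\gtrsim (C m/\ell)^{c_3\ell}$, which forces $N\gtrsim m^{\Omega(\ell)}/\eps^2$. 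This is much worse than the stated bound $N\gtrsim m^2(C\log(1/\delta))^{4\ell+1}/\eps^2$ once $\ell$ exceeds a small constant, and your own calibration remark (``take $t\asymp \log(1/\delta)$'') is inconsistent with the $9^m$ factor you introduced: with $t\asymp\log(1/\delta)$ the union bound does not close unless $m\lesssim\log(1/\delta)$.

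The paper avoids this by bounding the operator norm by the Frobenius norm rather than taking a net: since $\|\mA\|_2\le\|\mA\|_F$, Markov on $\|\mA\|_F^{2t}$ together with the power-mean inequality gives $\E[\|\mA\|_F^{2t}]\le m^{2t}\max_{i,j}\E[\mA_{ij}^{2t}]$, and the $m^{2t}$ factor comes back out as just $m^2$ after taking the $t$-th root, leaving $t\asymp\log(1/\delta)$ free to be chosen independently of $m$. The moral is that the naive $9^m$ union bound over the sphere is too expensive once the per-direction tail only decays polynomially in $N$ (as it must with bounded-moment, rather than subgaussian, control); a Frobenius-type bound, which only costs polynomially in the ambient dimension, is the right tool here.
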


\begin{proof}[Proof of \Cref{lemma:relative-error-kernel-matrix}]
    Let $\Phi = UDU^\top$ be the compact SVD of $\Phi$ (i.e., $D$ is square with dimension equal to the rank of $\Phi$ and $U$ is not necessarily square). Note that such a decomposition exists (where the row and column spaces are both spanned by the same basis $U$), because $\Phi = \Phi^\top$, by definition. Moreover, note that $UU^T$ is an orthogonal projection matrix that projects points in $\R^m$ on the span of the rows of $\Phi$. We also have that, $U^\top U = I$.

    \newcommand{\Phihalfinv}{\Phi^{\frac{\dagger}{2}}}

    Consider $\Phi^\dagger = UD^{-1}U^\top$ and $\Phihalfinv = UD^{-\frac{1}{2}}U^\top$. Our proof consists of two parts. We first show that it is sufficient to prove that $\|\Phihalfinv\Phi\Phihalfinv - \Phihalfinv\hat\Phi\Phihalfinv\|_2 \le \eps$ with probability at least $1-\delta$ and then we give a bound on the probability of this event.

    \begin{claim}
        Suppose that for $\mA = \Phihalfinv\Phi\Phihalfinv - \Phihalfinv\hat\Phi\Phihalfinv$ we have $\|\mA\|_2 \le \eps$. Then, for any $\va\in\R^m$: 
        \[ 
            \va^\top \hat\Phi \va \in [{(1-\eps)} \va^\top \Phi\va, (1+\eps)\va^\top \Phi\va]
        \]
    \end{claim}
    \begin{proof}
        Let $\va \in \R^m$, $\va_+ = UU^\top \va$, and $\va_0 = (I-UU^\top) \va$ (i.e., $\va = \va_0 + \va_+$, where $\va_0$ is the component of $\va$ lying in the nullspace of $\Phi$). We have that $\va^\top \Phi \va = \va^\top_+ \Phi \va_+$.

        Moreover, for $\va_0$, we have that $0=\va_0^\top \Phi \va_0 = \E_{\x\sim \Dtrainx}[(\phi(\x)^\top \va_0)^2]$ and, hence, $\phi(\x)^\top \va_0 = 0$ almost surely over $\Dtrainx$. Therefore, we also have $\va_0^\top \hat\Phi \va_0 = \frac{1}{N}\sum_{\x\in S}(\phi(\x)^\top \va_0)^2 = 0$, with probability $1$. Therefore, $\va^\top \hat\Phi \va = \va_+^\top \hat\Phi \va_+$. 

        Observe, now, that $\Phi^{\frac{1}{2}}\Phihalfinv = UD^{\frac{1}{2}}U^\top UD^{-\frac{1}{2}}U^\top = UU^\top$ and, hence, $\Phi^{\frac{1}{2}}\Phihalfinv\va_+ = (UU^\top)^2\va = UU^\top \va = \va_+$, because $UU^\top$ is a projection matrix. Overall, we obtain the following
        \begin{align*}
            \va^\top \hat\Phi \va &= \va^\top \Phi \va + \va_+^\top (\hat\Phi-\Phi) \va_+ \\
            &= \va^\top \Phi \va + \va_+^\top\Phi^{\frac{1}{2}} (\Phihalfinv\hat\Phi\Phihalfinv-\Phihalfinv\Phi\Phihalfinv) \Phi^{\frac{1}{2}}\va_+ \\
            &= \va^\top \Phi \va + \va_+^\top\Phi^{\frac{1}{2}} A \Phi^{\frac{1}{2}}\va_+
        \end{align*}
        Since $\|\mA\|_2 \le \eps$ and $\Phi^{\frac{1}{2}}\Phi^{\frac{1}{2}} = \Phi$, we have that $|\va_+^\top\Phi^{\frac{1}{2}} A \Phi^{\frac{1}{2}}\va_+| \le \eps |\va_+^\top \Phi \va_+| = \eps |\va^\top \Phi \va|$, which concludes the proof of the claim.
    \end{proof}

    It remains to show that for the matrix $\mA$ defined in the previous claim, we have $\|\mA\|_2 \le \eps$ with probability at least $1-\delta$. The randomness of $\mA$ depends on the random choice of $S$ from $\Dtrainx^{\otimes m}$. In the rest of the proof, therefore, consider all probabilities and expectations to be over $S \sim \Dtrainx^{\otimes m}$. We have the following for $t = \log_2(4/\delta)$.
    \begin{align*}
        \Pr[\|\mA\|_2 > \eps] &\le \pr[\|\mA\|_F > \eps] \le \frac{\E[\|\mA\|_F^{2t}]}{\eps^{2t}} 
    \end{align*}
    We will now bound the expectation of $\E[\|\mA\|_F^{2t}]$. To this end, we define $\va_i = \Phihalfinv \ve_i \in \R^m$ for $i\in[m]$. We have the following, by using Jensen's inequality appropriately.
    \begin{align*}
        \E[\|\mA\|_F^{2t}] &= \E\Bigr[\Bigr(\sum_{i,j\in[m]} (\va_i^\top \Phi \va_j - \va_i^\top \hat\Phi \va_j)^2 \Bigr)^t\Bigr] \\
        &\le m^{2(t-1)} \sum_{i,j\in[m]}\E[(\va_i^\top \Phi \va_j - \va_i^\top \hat\Phi \va_j)^{2t}] \\
        &\le m^{2t} \max_{i,j\in[m]}\E[(\va_i^\top \Phi \va_j - \va_i^\top \hat\Phi \va_j)^{2t}]
    \end{align*}
    In order to bound the term above, we may use Marcinkiewicz-Zygmund inequality (see \cite{FERGER201496marcinkiewicz}) to exploit the independence of the samples in $S$ and obtain the following.
    \begin{align*}
        \E[(\va_i^\top \Phi \va_j - \va_i^\top \hat\Phi \va_j)^{2t}] &\le \frac{2(4t)^t}{N^t} \E_{\x\sim \Dtrainx}[(\va_i^\top \Phi \va_j - \va_i^\top \phi(\x)\phi(\x)^\top \va_j)^{2t}] \\
        &\le \frac{2(4t)^t}{N^t} \bigr( 2^{2t} (\va_i^\top \Phi \va_j)^{2t} +2^{2t} \E_{\x\sim \Dtrainx}[(\va_i^\top \phi(\x)\phi(\x)^\top \va_j)^{2t}] \bigr)
    \end{align*}
    We now observe that $\E_{\x\sim \Dtrainx}[\va_i^\top \phi(\x)\phi(\x)^\top \va_j] = \va_i^\top \Phi \va_j = \ve_i^\top \Phihalfinv\Phi \Phihalfinv \ve_j = \ve_i^\top UU^T \ve_j$, which is at most equal to $1$. Therefore, we have $\E_{\x\sim \Dtrainx}[(\va_i^\top \phi(\x))^2] \le 1$ and, by the hypercontractivity property (which we assume to be with respect to the standard inner product in $\R^m$), we have $\E_{\x\sim \Dtrainx}[(\va_i^\top \phi(\x))^{4t}] \le (4Ct)^{4\ell t}$. We can bound $\E_{\x\sim \Dtrainx}[(\va_i^\top \phi(\x)\phi(\x)^\top \va_j)^{2t}]$ by applying the Cauchy-Schwarz inequality and using the bound for $\E_{\x\sim \Dtrainx}[(\va_i^\top \phi(\x))^{4t}]$. In total, we have the following bound.
    \[
        \pr[\|\mA\|_2 > \eps] \le 4\Bigr(\frac{16 m^2 t (4Ct)^{4\ell}}{N\eps^2}\Bigr)^t
    \]
    We choose $N$ such that $\frac{16 m^2 t (4Ct)^{4\ell}}{N\eps^2} \le \frac{1}{2}$ and $t = \log_2(4/\delta)$ so that the bound is at most $\delta$.
\end{proof}

\section{Moment Concentration of Subexponential Distributions}

We prove the following bounds on the moments of subexponential distributions, which allows us to control error outside the region of good approximation.
\begin{fact}[see \cite{vershynin2018high}] \label{fact:moment_bound}
    Let $\gD$ on $\R^d$ be a $\gamma$-strictly subexponential distribution. Then for all $\vw \in \R^d, \Norm{\vw} = 1, t \ge 0, p \ge 1$, there exists a constant $C'$ such that \[\E_{x \sim \gD}[|\inprod{\vw}{\x}|^p] \le (C'p)^{\frac{p}{1+\gamma}}.\] In fact, the two conditions are equivalent.
\end{fact}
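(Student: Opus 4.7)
The plan is to prove the equivalence between the strictly subexponential tail bound and the moment bound by means of two standard arguments in high-dimensional probability. Throughout, fix a unit vector $\vw \in \R^d$ and let $Z = |\inprod{\vw}{\x}|$; both implications reduce to a one-dimensional statement about the nonnegative random variable $Z$.

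For the direction ``tail $\Rightarrow$ moment,'' I will invoke the layer-cake identity,
$$\E[Z^p] = \int_0^\infty p t^{p-1} \Pr[Z > t]\, dt \;\le\; \int_0^\infty p t^{p-1} \exp(-Ct^{1+\gamma})\, dt,$$
and change variables via $u = Ct^{1+\gamma}$. This converts the integral into a Gamma function, yielding
$$\E[Z^p] \;\le\; \frac{p}{(1+\gamma)\, C^{p/(1+\gamma)}}\, \Gamma\!\left(\frac{p}{1+\gamma}\right).$$
An application of Stirling's estimate $\Gamma(x+1) \le (C_1 x)^x$ then produces the claimed bound $\E[Z^p] \le (C'p)^{p/(1+\gamma)}$ for an appropriate constant $C'$ depending only on $C$ and $\gamma$.

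For the direction ``moment $\Rightarrow$ tail,'' the natural move is Markov's inequality applied at an optimally chosen moment,
$$\Pr[Z > t] \;\le\; \frac{\E[Z^p]}{t^p} \;\le\; \left(\frac{C'p}{t^{1+\gamma}}\right)^{p/(1+\gamma)}.$$
Setting $p = t^{1+\gamma}/(eC')$ (valid whenever this quantity is at least $1$; for smaller $t$ the tail bound is trivial after adjusting constants) reduces the right-hand side to $\exp(-t^{1+\gamma}/(eC'(1+\gamma)))$, which is the required strictly subexponential tail with a redefined constant.

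The main obstacle here is not conceptual but bookkeeping: I must track constants carefully so that $C'$ depends only on $C$ and $\gamma$, and handle the trivial small-$t$ (or small-$p$) regime where Markov's inequality gives nothing better than the trivial bound $\Pr[Z>t]\le 1$. Once these edge cases are absorbed into the constants of the definition, both implications follow routinely from the two calculations above.
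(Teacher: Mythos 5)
The paper does not prove this statement; it is labeled a \emph{Fact} and cited to \cite{vershynin2018high}, so there is no internal proof to compare against. Your argument is the standard textbook derivation of the tail--moment equivalence for (sub)exponential-type random variables, and both directions are essentially correct: the layer-cake identity, the change of variables $u = Ct^{1+\gamma}$ giving $\E[Z^p] \le C^{-p/(1+\gamma)}\,\Gamma\!\bigl(\tfrac{p}{1+\gamma}+1\bigr)$ (using $x\Gamma(x)=\Gamma(x+1)$), and Stirling give the forward direction; Markov applied at $p = t^{1+\gamma}/(eC')$ gives the reverse.

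One point deserves more care than ``absorb into constants.'' In the direction moment $\Rightarrow$ tail, when $t < (eC')^{1/(1+\gamma)}$ the optimal choice of $p$ falls below $1$ and Markov yields nothing, and the bound $\Pr[Z>t]\le 1$ does \emph{not} imply $\Pr[Z>t]\le e^{-C''t^{1+\gamma}}$ for any $C''>0$ and $t>0$. Indeed, $Z\equiv 1$ has bounded moments of all orders but $\Pr[Z>t]=1$ for $t<1$, so the tail bound in the form $e^{-Ct^{1+\gamma}}$ fails for small $t$. The standard fix (as in \cite{vershynin2018high}) is to state the tail with a leading constant, e.g.\ $2e^{-Ct^{1+\gamma}}$, which is trivially valid for $t\le(\ln 2/C)^{1/(1+\gamma)}$ and then matches your Markov bound for larger $t$. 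The paper's definition omits this factor, so strictly speaking ``the two conditions are equivalent'' holds only after that cosmetic adjustment; your proof should either add the factor of $2$ to the target tail bound or restrict the claim to $t$ above a threshold. The forward direction (tail $\Rightarrow$ moment) is unaffected and correct as you wrote it.
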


We use the following bounds on the concentration of subexponential moments in the analysis of our algorithm. This will be useful in showing the sample complexity $N$ required in order for the empirical moments of the sample $S$ concentrate around the moments of the training marginal $\Dtrainx$.
\begin{lemma}[Moment Concentration of Subexponential Distributions]\label{lemma:moment-concentration}
    Let $\Dtrainx$ be a distribution over $\R^d$ such that for any $\vw\in\R^d$ with $\|\vw\|_2 =1$ and any $t\in\sN$ we have $\E_{\x\sim \Dtrainx}[|\vw\cdot \x|^t] \le (Ct)^{t}$ for some $C\ge 1$. For $\alpha=(\alpha_i)_{i\in [d]}\in \sN^d$, we denote with $\x^\alpha$ the quantity $\x^\alpha = \prod_{i=1}^d x_i^{\alpha_i}$, where $\x = (x_i)_{i\in [d]}$. Then, for any $\Delta, \delta\in(0,1)$, if $S$ is a set of at least $N = \frac{1}{\Delta^2}{(Cc)^{4\ell} \ell^{8\ell +1} (\log(20d/\delta))^{4\ell+1}}$ i.i.d. examples from $\Dtrainx$ for some sufficiently large universal constant $c\ge 2$, we have that with probability at least $1-\delta$, the following is true.
    \[
        \text{For any $\alpha\in\sN^d$ with $\|\alpha\|_1 \le 2\ell$ we have } |\E_{\x\sim S}[\x^\alpha] - \E_{\x\sim \Dtrainx}[\x^\alpha]| \le \Delta.
    \]
\end{lemma}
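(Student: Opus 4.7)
The plan is to prove the bound for a fixed multi-index $\alpha$ with $\|\alpha\|_1 \le 2\ell$ and then take a union bound over all such $\alpha$ (of which there are at most $d^{2\ell}$). For a fixed $\alpha$, let $Z = \x^\alpha$ and $Z_j = \x_j^\alpha$ for the i.i.d. samples in $S$. The first step is to control high moments of $Z$ using the one-dimensional subexponential moment bound. Writing $|\x^\alpha| = \prod_{i} |x_i|^{\alpha_i}$ and applying generalized H\"older's inequality with exponents $\|\alpha\|_1/\alpha_i$, together with $x_i = \ve_i\cdot \x$ and the hypothesis $\E[|\ve_i\cdot\x|^s] \le (Cs)^s$, one obtains
\[
    \E_{\x\sim\Dtrainx}[|\x^\alpha|^p] \;\le\; \prod_i \E[|x_i|^{p\|\alpha\|_1}]^{\alpha_i/\|\alpha\|_1} \;\le\; (Cp\|\alpha\|_1)^{p\|\alpha\|_1} \;\le\; (2Cp\ell)^{2p\ell},
\]
valid for every even integer $p$.

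The second step is a moment-based concentration inequality for the centered sum. Let $Y_j = Z_j - \E[Z]$. The bound above together with $\E[|Y_j|^p] \le 2^p \E[|Z|^p]$ yields $\E[|Y_j|^{p}] \le (4Cp\ell)^{2p\ell}$. Applying the Marcinkiewicz-Zygmund inequality (as used in the proof of \Cref{lemma:appendix-relative-error-kernel-matrix}) to the i.i.d. mean-zero variables $Y_j$, one obtains
\[
    \E\Bigl[\Bigl|\tfrac{1}{N}\sum_{j=1}^N Y_j\Bigr|^{p}\Bigr] \;\le\; \frac{(Ap)^{p/2}}{N^{p/2}} \cdot \E[|Y_1|^{p}] \;\le\; \frac{1}{N^{p/2}} \cdot (C'p\ell)^{(2\ell+1)p}
\]
for a universal constant $C'$. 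Combined with Markov's inequality this gives, for any $\Delta\in(0,1)$ and any even $p$,
\[
    \Pr\bigl[|\E_{\x\sim S}[\x^\alpha]-\E_{\x\sim\Dtrainx}[\x^\alpha]|>\Delta\bigr] \;\le\; \Bigl(\frac{(C'p\ell)^{2(2\ell+1)}}{N\Delta^2}\Bigr)^{p/2}.
\]

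The third step is to pick $p$ and then $N$. Setting $p = 2\lceil \log(20 d^{2\ell}/\delta)\rceil$ and then choosing $N$ so that the bracketed quantity is at most $1/2$ makes the failure probability at most $\delta \cdot d^{-2\ell}/20$ for one fixed $\alpha$. A union bound over the (at most $d^{2\ell}$) valid multi-indices then yields total failure probability at most $\delta$. Substituting $p \asymp \ell\log(d/\delta)$ into the inequality, one checks that the required sample size is
\[
    N \;\gtrsim\; \frac{(C'\ell)^{2(2\ell+1)}\cdot p^{2(2\ell+1)}}{\Delta^2} \;\asymp\; \frac{(Cc)^{4\ell}\,\ell^{8\ell+1}\,(\log(20d/\delta))^{4\ell+1}}{\Delta^2},
\]
matching the stated bound. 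The main technical obstacle is tracking the constants through the H\"older-plus-MZ chain and verifying that the exponents $(4\ell)$ and $(8\ell+1)$ in the statement indeed come out correctly from the product of the $(2p\ell)$-th moment bound on $|Z|$ and the $p$-th power arising from Markov after applying MZ; everything else is a routine union bound.
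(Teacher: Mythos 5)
Your proposal is correct and follows essentially the same route as the paper's proof: fix a multi-index $\alpha$, bound $\E[(\x^\alpha)^p]$ via generalized H\"older together with the directional moment hypothesis (the paper phrases this as $\E[(\x^\alpha)^{2t}]\le \sup_{\|\vw\|_2=1}\E[(\vw\cdot\x)^{4t\ell}]\le (4Ct\ell)^{4t\ell}$), apply Marcinkiewicz--Zygmund to the empirical mean, then Markov with $p\asymp \log(d^{2\ell}/\delta)$, and union bound over the $\le d^{2\ell}$ choices of $\alpha$. One note on the constant tracking: your intermediate merge of $(Ap)^{p/2}$ into $(C'p\ell)^{(2\ell+1)p}$ is lossy and, taken literally, yields $\ell^{8\ell+2}(\log)^{4\ell+2}$ rather than $\ell^{8\ell+1}(\log)^{4\ell+1}$; keeping the factor $(Ap)^{p/2}$ separate from the moment bound $(2Cp\ell)^{2p\ell}$ (as the paper does with its leading $(4t)^t$ term) recovers the stated exponents exactly.
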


\begin{proof}
    Let $\alpha=(\alpha_i)_{i\in[d]}\in\sN^d$ with $\|\alpha\|_1 \le 2\ell$. Consider the random variable $X = \frac{1}{\mtrain}\sum_{\x\in S}\x^\alpha = \frac{1}{\mtrain}\sum_{\x\in S}\prod_{i\in[d]}x_i^{\alpha_i}$. We have that $\E[X] = \E_{\x\sim \Dtrainx}[\x^\alpha]$ and also the following.
    \begin{align*}
        \pr[|X-\E[X]| > \Delta] &\le \frac{\E[(X-\E[X])^{2t}]}{\Delta^{2t}} \\
        &\le \frac{2(4t)^t}{(N\Delta^2)^t} \E[(\x^\alpha-\E[\x^\alpha])^{2t}]
    \end{align*}
    where the last inequality follows from the Marcinkiewicz–Zygmund inequality (see \cite{FERGER201496marcinkiewicz}). We have that $\E[(\x^\alpha-\E[\x^\alpha])^{2t}] \le 4^t \E[(\x^{\alpha})^{2t}]$. Since $\|\alpha\|_1\le 2\ell$, we have that $\E[(\x^{\alpha})^{2t}] \le \sup_{\|\vw\|_2=1}[\E[(\vw\cdot \x)^{4t \ell}]] \le (4Ct\ell)^{4t\ell}$, which yields the desired result, due to the choice of $N$ and after a union bound over all the possible choices of $\alpha$ (at most $d^{2\ell}$).
\end{proof}

\section{Polynomial Approximations of Neural Networks}\label{section:approximation_theory}
In this section we derive the polynomial approximations of neural networks with Lipschitz activations needed to instantiate \cref{theorem:tds-via-kernels} for bounded distributions and \cref{theorem:tds-via-uniform} for unbounded distributions. 

Recall the definition of a neural network. 
\begin{definition}[Neural Network]
    \label{def:Neural Network}
    Let $\sigma:\R\to\R$ be an activation function with $\sigma(0)\leq 1$. Let $\W=\left(W^{(1)},\ldots W^{(t)}\right)$ with $W^{(i)}\in \R^{s_i\times s_{i-1}}$ be the tuple of weight matrices. Here, $s_0=d$ is the input dimension and $s_{t}=1$. Define recursively the function $f_i:\R^{d}\to \R^{s_i}$ as $f_i(\x)=W^{(i)}\cdot \sigma\bigl(f_{i-1}(\x)\bigr)$ with $f_1(\x)=W^{(1)}\cdot\x$. The function $f:\R^d \to \R$ computed by the neural network $(\W,\sigma)$ is defined as $f(\x)\coloneq f_{t}(\x)$. We denote $\norm{\W}_1=\sum_{i=2}^{t}\norm{W^{(i)}}_1$. The depth of this network is $t$. 
\end{definition}

We also introduce some notation and basic facts that will be useful for this section.

\subsection{Useful Notation and Facts}
Given a univariate function $g$ on $\R$ and a vector $\x=(x_1,\ldots,x_d)\in \R^{d}$, the vector $g(\x)\in \R^{d}$ is defined as the vector with $i^{th}$ co-ordinate equal to $g(x_i)$.  For a matrix $A\in \R^{m\times n}$, we use the following notation:
 \begin{itemize}
     \item $\twonorm{A}\coloneq\sup_{\twonorm{x}=1}\twonorm{Ax}$,
     \item $\normtwoinf{A}\coloneq\sqrt{\max_{i\in [m]}\sum_{j=1}^{n}(A_{ij})^2}$,
     \item $\norm{A}_1\coloneq \sum_{(i,j)\in [n]\times [m]}|A_{ij}|$.
 \end{itemize}
\begin{fact}
\label{fact:matrix_norms}
    Given a matrix $W\in\R^{m\times n}$, we have that 
    \begin{enumerate}
        \item $\twonorm{A}\leq \norm{A}_1$,
        \item $\twonorm{A}\leq \sqrt{m}\cdot \normtwoinf{A}$.
    \end{enumerate}
\end{fact}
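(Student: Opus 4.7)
The plan is to reduce both inequalities to the standard fact that the operator norm is dominated by the Frobenius norm, $\twonorm{A} \le \|A\|_F$ where $\|A\|_F = \sqrt{\sum_{i,j} A_{ij}^2}$. This bound is itself an immediate consequence of the definition: for any unit vector $x$, Cauchy--Schwarz applied to each row gives $|(Ax)_i| \le \|A_i\|_2$, where $A_i$ denotes the $i$-th row, hence $\twonorm{Ax}^2 \le \sum_i \|A_i\|_2^2 = \|A\|_F^2$. Taking the supremum over unit $x$ yields the claim.

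For part (1), once $\twonorm{A} \le \|A\|_F$ is in hand, it suffices to observe that the Frobenius norm is bounded by the entrywise $\ell_1$ norm: since $\sum_{i,j} A_{ij}^2 \le \bigl(\sum_{i,j} |A_{ij}|\bigr)^2$ (because for any nonnegative reals $a_1,\dots,a_N$ we have $\sum a_k^2 \le (\sum a_k)^2$), we obtain $\|A\|_F \le \|A\|_1$, and combining with the Frobenius bound finishes the proof.

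For part (2), the same Frobenius bound reduces the task to showing $\|A\|_F \le \sqrt{m}\cdot \normtwoinf{A}$. Writing $\|A\|_F^2 = \sum_{i=1}^m \|A_i\|_2^2$ and bounding each summand by $\max_{i\in[m]} \|A_i\|_2^2 = \normtwoinf{A}^2$ gives $\|A\|_F^2 \le m \cdot \normtwoinf{A}^2$, and taking square roots completes the argument.

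Neither step presents any real obstacle; this is a purely bookkeeping fact about matrix norms that the paper uses as a black box when handling layer-by-layer norm bounds for neural networks. The only minor subtlety is to be explicit that the operator norm in the paper's notation is the spectral norm $\sup_{\twonorm{x}=1}\twonorm{Ax}$, which is indeed how $\twonorm{A}$ is defined just above the fact, so no further unpacking is needed.
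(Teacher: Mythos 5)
Your proof is correct and is essentially the same argument as the paper's: the paper also applies Cauchy--Schwarz row by row to get $\twonorm{Ax} \le \sqrt{\sum_i \|A_i\|_2^2}$ and then bounds this quantity by $\norm{A}_1$ (via $\ell_2 \le \ell_1$) for part (1) and by $\sqrt{m}\,\normtwoinf{A}$ (by bounding each row norm by the maximum) for part (2). You simply give the intermediate quantity its name, the Frobenius norm, which the paper computes inline without naming it.
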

\begin{proof}
    We first prove (1). We have that for an $\x\in \R^{n}$ with $\twonorm{\x}=1$,
    \begin{align*}
        \twonorm{A\x}\leq \sqrt{\sum_{i=1}^{m}(A_i\cdot \x)^2}\leq \sqrt{\sum_{i=1}^{m}\sum_{j=1}^{n}(A_{ij})^2}\leq \norm{A}_1
    \end{align*}
    where the second inequality follows from Cauchy Schwartz and the last inequality follows from the fact that for any vector $\vv$, $\twonorm{\vv}\leq \norm{\vv}_1$.
    We now prove (2). We have that 
     \begin{align*}
        \twonorm{A\x}\leq \sqrt{\sum_{i=1}^{m}(A_i\cdot \x)^2}\leq \sqrt{m\max_{i\in [m]}\sum_{j=1}^{n}(A_{ij})^2}\leq \sqrt{m}\normtwoinf{A}
    \end{align*} where the second inequality follows from Cauchy Schwartz and the last inequality is the definition. 
\end{proof}

\subsection{Results from Approximation Theory}

The following are useful facts about the coefficients of approximating polynomials.
\begin{fact}[Lemma~23 from \cite{reliable_goel2017}]
\label{lem:uni_poly_unit_coeff_bound}
    Let $p$ be a polynomial of degree $\ell$ such that $|p(x)|\leq b$ for $|x|\leq 1$. Then, the sum of squares of all its coefficients is at most $b^2\cdot 2^{O(\ell)}$.
\end{fact}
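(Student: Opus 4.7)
The natural route is via the Chebyshev basis on $[-1,1]$, exploiting the fact that $|T_k(x)|\le 1$ on this interval and that the change of basis from Chebyshev to monomials has magnitude only exponential in the degree. So the plan is: (i) express $p$ in the Chebyshev basis and bound the Chebyshev coefficients directly from the sup-norm bound, and (ii) convert back to the monomial basis, paying a factor of $2^{O(\ell)}$.

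First, write $p(x)=\sum_{k=0}^{\ell} c_k T_k(x)$ where $T_k$ are the Chebyshev polynomials of the first kind. By the orthogonality relation $\int_{-1}^1 T_j(x)T_k(x)\,dx/\sqrt{1-x^2}=\tfrac{\pi}{2}\delta_{jk}$ for $j,k\ge 1$ (and $\pi$ when $j=k=0$), we have
\[
c_k=\frac{2-\mathbbm{1}\{k=0\}}{\pi}\int_{-1}^{1} p(x)T_k(x)\,\frac{dx}{\sqrt{1-x^2}}.
\]
Since $|T_k(x)|\le 1$ and $|p(x)|\le b$ on $[-1,1]$, and $\int_{-1}^1 dx/\sqrt{1-x^2}=\pi$, we immediately obtain $|c_k|\le 2b$ for every $k$.

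Second, express $T_k(x)=\sum_{i=0}^{k}\tau_{k,i}x^i$. A standard fact (provable by induction on $k$ using the recursion $T_{k+1}(x)=2xT_k(x)-T_{k-1}(x)$) gives the bound $|\tau_{k,i}|\le 2^{k}$ for all $i\le k\le\ell$. Therefore, the monomial coefficient $a_i$ of $p$ satisfies
\[
|a_i|=\Bigl|\sum_{k=i}^{\ell}c_k\tau_{k,i}\Bigr|\le\sum_{k=0}^{\ell}|c_k|\cdot|\tau_{k,i}|\le 2b\cdot(\ell+1)\cdot 2^{\ell}=b\cdot 2^{O(\ell)}.
\]
Summing the squares, $\sum_{i=0}^{\ell}a_i^2\le(\ell+1)\cdot\max_i a_i^2\le b^2\cdot 2^{O(\ell)}$, which is the desired conclusion.

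The only step that requires a brief verification is the bound $|\tau_{k,i}|\le 2^{k}$ on the monomial coefficients of $T_k$, but this is routine from the standard three-term recursion. Everything else is immediate from orthogonality and the triangle inequality, so I do not expect any real obstacle; the proof is essentially a one-paragraph application of Chebyshev expansion, and the exponential blow-up of $2^{O(\ell)}$ is unavoidable since it already appears in the leading monomial coefficient of $T_\ell$ alone.
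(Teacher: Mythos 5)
The paper itself does not prove this statement; it is cited verbatim as Lemma~23 of \cite{reliable_goel2017}, so there is no in-paper proof to compare against. Your approach via Chebyshev expansion is the natural one and is almost certainly what the cited source does. However, there is a genuine numerical error in Step~(ii): the claimed bound $|\tau_{k,i}|\le 2^k$ is \emph{false}, and the induction you invoke does not yield it. From the recursion $T_{k+1}=2xT_k-T_{k-1}$, writing $M_k=\max_i|\tau_{k,i}|$, one gets $M_{k+1}\le 2M_k+M_{k-1}$; attempting to push $M_k\le 2^k$ through this gives $M_{k+1}\le 2^{k+1}+2^{k-1}=\tfrac{5}{4}\cdot 2^{k+1}$, which overshoots. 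And the bound really is violated: $T_{10}(x)=512x^{10}-1280x^{8}+1120x^{6}-400x^{4}+50x^{2}-1$ has a coefficient of magnitude $1280>1024=2^{10}$.

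The fix is small. The inequality $M_{k+1}\le 2M_k+M_{k-1}$ closes for any base $C$ with $C^2\ge 2C+1$, i.e.\ $C\ge 1+\sqrt{2}$; indeed $M_k\le (1+\sqrt{2})^k$ holds by induction since $2(1+\sqrt{2})+1=(1+\sqrt{2})^2$, and this matches the true growth rate $\sum_i|\tau_{k,i}|=\tfrac{1}{2}\bigl[(1+\sqrt{2})^k+(1-\sqrt{2})^k\bigr]$. Replacing $2^k$ with $(1+\sqrt{2})^k$ (or, if you prefer round numbers, $3^k$) everywhere in your Step~(ii) gives $|a_i|\le 2b(\ell+1)(1+\sqrt{2})^\ell=b\cdot 2^{O(\ell)}$, and the final conclusion $\sum_i a_i^2\le b^2\cdot 2^{O(\ell)}$ goes through unchanged. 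So the plan is right and the conclusion stands, but the specific constant you stated is incorrect and the claim ``provable by induction'' for $2^k$ would not survive scrutiny.
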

\begin{lemma}
\label{lem:uni_poly_coeff_bound}
 Let $p$ be a polynomial of degree $\ell$ such that $|p(\x)|\leq b$ for $|x|\leq R$. Then, the sum of squares of all its coefficients is at most $b^2\cdot 2^{O(\ell)}$ when $R\geq 1$.
\end{lemma}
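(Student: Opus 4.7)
The plan is to reduce this immediately to the previously stated Fact~\ref{lem:uni_poly_unit_coeff_bound} (the $R=1$ case from \cite{reliable_goel2017}). The key observation is that the hypothesis $R \ge 1$ guarantees the inclusion $\{x \in \R : |x| \le 1\} \subseteq \{x \in \R : |x| \le R\}$, so the pointwise bound $|p(x)| \le b$ on the larger interval automatically restricts to give $|p(x)| \le b$ on the unit interval.

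Concretely, I would state this containment in one line, apply it to conclude $\sup_{|x|\le 1} |p(x)| \le b$, and then invoke \Cref{lem:uni_poly_unit_coeff_bound} directly on $p$ to conclude that the sum of squares of its coefficients is at most $b^2 \cdot 2^{O(\ell)}$. There is no real obstacle here; the lemma is simply a rephrasing that packages the previous fact in a form convenient for later use (where approximation radii $R \ge 1$ arise naturally, e.g.\ in the uniform approximation results of \Cref{assumption:unbounded-main}). One could alternatively prove a sharper statement with an explicit $R$-dependence (coefficients of $x^i$ scale like $b \cdot 2^{O(\ell)} / R^i$ via a Chebyshev-type argument applied after rescaling $x \mapsto x/R$), and then observe that summing $\sum_i b^2 2^{O(\ell)} / R^{2i} \le b^2 \cdot 2^{O(\ell)}$ when $R \ge 1$; but since only the weaker uniform bound is needed downstream, the one-line containment argument suffices and is cleanest.
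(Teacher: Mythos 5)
Your containment argument is correct and is actually more direct than the paper's. Since $R \ge 1$ gives $[-1,1] \subseteq [-R,R]$, the hypothesis restricts to $\sup_{|x|\le 1}|p(x)| \le b$, and Fact~\ref{lem:uni_poly_unit_coeff_bound} applied to $p$ itself immediately yields the claim. The paper instead proves this by rescaling: it sets $q(x) = p(Rx)$, applies Fact~\ref{lem:uni_poly_unit_coeff_bound} to $q$, and then notes that $p(x) = q(x/R)$ has each coefficient scaled down by $R^i \ge 1$, so the coefficient sum only shrinks. That is exactly the alternative you sketch in your last paragraph; it gives the sharper coefficient-by-coefficient decay $|p_i| \le b\,2^{O(\ell)}/R^i$, which the lemma does not need, but which is the natural thing to have in hand if a later argument wanted $R$-dependence. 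For the lemma as stated, your one-line containment is the cleaner proof.
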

\begin{proof}
    Consider $q(x)=p(Rx)$. Clearly, $|q(x)|\leq b$ for all $|x|\leq 1$. Thus, the sum of squares of its coefficients is at most $b^2\cdot 2^{O(\ell)}$ from \Cref{lem:uni_poly_unit_coeff_bound}. Now, $p(x)=q(x/R)$ has coefficients bounded by $b^2\cdot 2^{O(\ell)}$ when $R\geq 1$.
\end{proof}
% Then the magnitude of any coefficient of $q$ is at most $(2\ell)^{3\ell}$
\begin{fact}[\cite{ben2018classical}]\label{lemma:cube_coeff_bounds}
    Let $q$ be a polynomial with real coefficients on $k$ variables with degree $\ell$ such that for all $\x \in [0, 1]^k$, $|q(\x)| \le 1$. Then the magnitude of any coefficient of $q$ is at most $(2k\ell(k+\ell))^{\ell}$ and the sum of the magnitudes of all coefficients of $q$ is at most $(2(k+\ell))^{3\ell}$.
\end{fact}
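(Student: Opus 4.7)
The plan is to expand $q$ in the tensor-product basis of shifted Chebyshev polynomials adapted to the cube $[0,1]^k$, bound each Chebyshev coefficient using orthogonality against the bounded function $q$, and then convert back to the monomial basis using well-understood bounds on the monomial expansions of the Chebyshev polynomials themselves.

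Concretely, let $T_j$ denote the degree-$j$ Chebyshev polynomial of the first kind and let $T^*_j(t) = T_j(2t-1)$ denote its shifted version, which satisfies $|T^*_j(t)| \le 1$ for $t \in [0,1]$. For $\alpha \in \mathbb{N}^k$ define $T^*_\alpha(x) = \prod_{i=1}^k T^*_{\alpha_i}(x_i)$. These products, for $|\alpha| \le \ell$, form an orthogonal basis of the space of total-degree-$\le \ell$ polynomials with respect to the product weight $w(x) = \pi^{-k}\prod_i (x_i(1-x_i))^{-1/2}$, which is a probability density on $[0,1]^k$. Writing $q = \sum_{|\alpha| \le \ell} b_\alpha T^*_\alpha$, the normalization $\|T^*_\alpha\|_w^2 = 2^{-|\{i:\alpha_i > 0\}|}$ together with $|q|, |T^*_\alpha| \le 1$ yields the central bound
\[
|b_\alpha| \;=\; \|T^*_\alpha\|_w^{-2} \,\Bigl|\int q\, T^*_\alpha \, w\,dx\Bigr| \;\le\; 2^{|\{i:\alpha_i > 0\}|} \;\le\; 2^{\ell}.
\]

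The second step is the univariate monomial-coefficient estimate. A short generating-function computation using $T_j(z) = \tfrac12\big((z+\sqrt{z^2-1})^j + (z-\sqrt{z^2-1})^j\big)$ evaluated at $z = i$ shows that the sum of absolute values of the monomial coefficients of $T_j$ is at most $(1+\sqrt{2})^j$; applying the substitution $x = 2t-1$ and the binomial theorem then yields $S(T^*_j) \le D^j$ for an absolute constant $D$ (roughly $D = 3(1+\sqrt{2})$, evaluating the same generating function at $z = 3i$). Putting things together, if $q(x) = \sum_\beta c_\beta x^\beta$, then expanding each $T^*_\alpha$ into monomials gives
\[
\sum_\beta |c_\beta| \;\le\; \sum_{|\alpha| \le \ell} |b_\alpha| \prod_{i=1}^k S(T^*_{\alpha_i}) \;\le\; 2^\ell \cdot \binom{k+\ell}{\ell} \cdot D^\ell \;\le\; (2D(k+\ell))^\ell,
\]
which is comfortably within the stated $(2(k+\ell))^{3\ell}$, and the individual coefficient bound follows a fortiori since $\max_\beta|c_\beta| \le \sum_\beta |c_\beta|$.

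The main obstacle I foresee is the accounting that produces $2^\ell$ rather than $2^k$ in the bound on $|b_\alpha|$: if coordinates with $\alpha_i = 0$ are not treated separately, one gets $|b_\alpha| \le 2^k$, which destroys the target dependence in the regime $k \gg \ell$. The correct observation is that these coordinates contribute a factor $\|T^*_0\|_w^2 = 1$ (not $1/2$) to $\|T^*_\alpha\|_w^2$, so only the support of $\alpha$ — of size at most $|\alpha|_1 \le \ell$ — enters the normalization. A secondary, entirely routine, piece of bookkeeping is sharpening the constant $D$ in the Bernstein-type bound on the monomial coefficient sum of $T_j$; any absolute constant suffices for the claim, so this need not be optimized.
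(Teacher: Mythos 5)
The paper states this result as a cited fact from \cite{ben2018classical} and supplies no proof of its own, so there is no internal argument to compare against. Your proof, however, is essentially correct and uses the canonical approach: expand $q$ in the tensor product of shifted Chebyshev polynomials $T^*_\alpha$ on $[0,1]^k$ with the product Chebyshev weight, bound each expansion coefficient $b_\alpha$ using orthogonality plus $|q|,|T^*_\alpha|\le 1$, and then convert back to the monomial basis using a bound on the absolute monomial coefficient sum of each univariate $T^*_j$. The key accounting you flag — tracking $2^{|\{i:\alpha_i>0\}|}$ rather than $2^k$, which keeps the exponent linear in $\ell$ rather than $k$ — is exactly the right observation, and the three steps (orthogonality bound $|b_\alpha|\le 2^\ell$, the $S(T^*_j)\le D^j$ estimate via the Pell recursion and the shift $x=2t-1$, and the count $\binom{k+\ell}{\ell}\le(k+\ell)^\ell$ of low-degree multi-indices) combine correctly to give $\sum_\beta|c_\beta|\le(2D(k+\ell))^\ell$, which implies the stated sum bound $(2(k+\ell))^{3\ell}$ whenever $k+\ell\ge 2$. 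Since the paper only ever uses the sum-of-coefficients bound (via Lemma~\ref{lemma:ball_coeff_bounds}), this is the part that matters.

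One small quibble: deriving the individual coefficient bound ``a fortiori'' from the sum does not literally reproduce the sharper stated form $(2k\ell(k+\ell))^\ell$ when $k\ell$ is smaller than the absolute constant $D\approx 3(1+\sqrt 2)$ in your estimate; for those few small parameter values your route gives a weaker constant. A clean fix is to bound the individual coefficient directly through the same expansion, i.e., $|c_\beta|\le\sum_\alpha|b_\alpha|\,|[\,\text{coeff.\ of }x^\beta\text{ in }T^*_\alpha\,]|$ and use the classical fact that each Chebyshev monomial coefficient is at most the leading one $2^{j-1}$, instead of routing through the full coefficient sum. This is purely cosmetic for the paper's purposes, where only the sum bound is invoked, but worth noting since the statement is presented as two separate inequalities.
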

\begin{lemma}\label{lemma:ball_coeff_bounds}
    Let $q$ be a polynomial with real coefficients on $k$ variables with degree $\ell$ such that for all $\x \in \R^k$ with $\Norm{\x}_2 \le R$, $|q(\x)| \le b$. Then the sum of the magnitudes of all coefficients of $q$ is at most $b(2(k+\ell))^{3\ell}k^{\ell/2}$ for $R\geq 1$.
\end{lemma}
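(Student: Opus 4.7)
The plan is to reduce to \Cref{lemma:cube_coeff_bounds} by a simple rescaling that sends the unit cube $[0,1]^k$ into the ball of radius $R$. The geometric fact driving the argument is that $[0,1]^k$ has $\ell_2$-diameter $\sqrt{k}$, so $[0,1]^k \subseteq \{\x : \|\x\|_2 \le \sqrt{k}\}$; equivalently, $\frac{1}{\sqrt{k}}[0,1]^k$ lies in the unit ball, which since $R \ge 1$ is contained in the ball of radius $R$ where $q$ is known to be bounded.

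Concretely, I would introduce the auxiliary polynomial $\tilde q(\x) \coloneqq q(\x/\sqrt{k})/b$. For any $\x \in [0,1]^k$ we have $\|\x/\sqrt{k}\|_2 \le 1 \le R$, so by hypothesis $|q(\x/\sqrt{k})| \le b$, giving $|\tilde q(\x)| \le 1$ on the cube. Applying \Cref{lemma:cube_coeff_bounds} directly to $\tilde q$ (which still has $k$ variables and degree at most $\ell$) yields that the sum of the magnitudes of its coefficients is at most $(2(k+\ell))^{3\ell}$.

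Finally, I would translate this bound back to $q$. If $\tilde q(\x) = \sum_{\alpha} \tilde c_\alpha \x^\alpha$, then $q(\y) = b\,\tilde q(\sqrt{k}\,\y) = \sum_\alpha b\, k^{|\alpha|/2} \tilde c_\alpha\, \y^\alpha$, so each coefficient of $q$ is obtained from the corresponding coefficient of $\tilde q$ by multiplying by $b\, k^{|\alpha|/2}$. Since $|\alpha| \le \ell$, the factor $k^{|\alpha|/2}$ is at most $k^{\ell/2}$, and summing over $\alpha$ yields the claimed bound $b\, k^{\ell/2} (2(k+\ell))^{3\ell}$.

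There is no real obstacle here — the entire proof is a one-line change of variables, with the only subtlety being the use of $R \ge 1$ to absorb the shrinkage from $\sqrt{k}$ into the ball where $q$ is controlled, and the observation that the $k^{\ell/2}$ blow-up in the coefficients is the price paid for the coordinate rescaling.
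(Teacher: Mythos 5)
Your proof is correct and takes essentially the same approach as the paper: reduce to \Cref{lemma:cube_coeff_bounds} by rescaling so that the cube sits inside the ball where $q$ is bounded, then account for the $k^{|\alpha|/2}\le k^{\ell/2}$ blow-up when mapping coefficients back. The only cosmetic difference is that the paper rescales by $R/\sqrt{k}$ and invokes $R\ge1$ at the coefficient-translation step to bound $(\sqrt{k}/R)^{|\alpha|}\le k^{\ell/2}$, whereas you rescale by $1/\sqrt{k}$ and invoke $R\ge1$ up front to ensure the unit ball is inside the ball of radius $R$; both yield the identical bound.
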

\begin{proof}
    Consider the polynomial $h(\x) = 1/b \cdot q(R\x/\sqrt{k})$. Then $|h(\x)| = 1/b \cdot |q(R\x/\sqrt{k})| \le 1$ for $\|\x R/\sqrt{k}\|_2 \le R$, or equivalently for all $\Norm{x}_2 \le \sqrt{k}$. In particular, since the unit cube $[0,1]^k$ is contained in the $\sqrt{k}$ radius ball, then $|h(\x)| \le 1$ for $\x \in [0,1]^k$. By \cref{lemma:cube_coeff_bounds}, the sum of the magnitudes of the coefficients of $h$ is at most $(2(k+\ell))^{3\ell}$. Since $q(\x) = b \cdot h(\x\sqrt{k}/R)$, then the sum of the magnitudes of the coefficients of $q$ is at most $b(2(k + \ell))^{3\ell} k^{\ell/2}$.
\end{proof}

\begin{lemma}\label{lemma:sum_coeff_bound}
    Let $p(\x)$ be a degree $\ell$ polynomial in $\x \in \R^d$ such that each coefficient is bounded in absolute value by $b$. Then the sum of the magnitudes of the coefficients of $p(\x)^t$ is at most $b^t d^{t \ell}$.
\end{lemma}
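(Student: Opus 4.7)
The statement is purely combinatorial and the proof should follow directly by expanding the power $p(\x)^t$ and applying the triangle inequality. Write
\[ p(\x) = \sum_{\alpha \in \N^d,\, \|\alpha\|_1 \le \ell} c_\alpha \x^\alpha, \qquad |c_\alpha| \le b. \]
Then
\[ p(\x)^t = \sum_{\alpha_1,\ldots,\alpha_t} c_{\alpha_1}\cdots c_{\alpha_t}\, \x^{\alpha_1+\cdots+\alpha_t}, \]
so collecting by the resulting exponent $\beta = \alpha_1+\cdots+\alpha_t$ gives
\[ p(\x)^t = \sum_\beta \Bigl(\sum_{\alpha_1+\cdots+\alpha_t=\beta} c_{\alpha_1}\cdots c_{\alpha_t}\Bigr)\, \x^\beta. \]

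\textbf{Main steps.} I would then bound the sum of magnitudes of the coefficients of $p^t$ in two stages. First, by the triangle inequality and a decoupling of the $t$-fold sum,
\[ \sum_\beta \Bigl|\sum_{\alpha_1+\cdots+\alpha_t=\beta} c_{\alpha_1}\cdots c_{\alpha_t}\Bigr| \;\le\; \sum_{\alpha_1,\ldots,\alpha_t} |c_{\alpha_1}|\cdots |c_{\alpha_t}| \;=\; \Bigl(\sum_{\alpha} |c_\alpha|\Bigr)^t. \]
Second, I bound $\sum_\alpha |c_\alpha| \le b \cdot N_{d,\ell}$, where $N_{d,\ell}$ is the number of multi-indices $\alpha \in \N^d$ with $\|\alpha\|_1 \le \ell$. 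Combining these two bounds gives $\le (b\, N_{d,\ell})^t$, and it remains to show $N_{d,\ell} \le d^\ell$ in the regime of interest.

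\textbf{Counting the monomials.} For the counting bound, I identify each multi-index $\alpha$ with $\|\alpha\|_1 = k \le \ell$ with an (unordered) multiset of $k$ variables drawn from $\{x_1,\ldots,x_d\}$; since each such multiset arises from at least one ordered length-$k$ sequence over $\{x_1,\ldots,x_d\}$, we have at most $d^k$ monomials of each degree $k$. Summing over $k = 0,\ldots,\ell$ and absorbing low-order terms yields $N_{d,\ell} \le d^\ell$ in the relevant regime (e.g.\ when $d \ge \ell \ge 2$, which is the regime where this lemma is invoked in the proof of \Cref{theorem:tds-via-uniform}). Putting everything together gives the claimed bound $b^t d^{t\ell}$.

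\textbf{Anticipated obstacles.} There is no real obstacle: the only non-mechanical step is the counting bound on $N_{d,\ell}$, and even this is just a one-line combinatorial estimate. The triangle-inequality decoupling is the key observation that turns a multilinear sum into the $t$-th power of a linear sum, and after that the rest is arithmetic.
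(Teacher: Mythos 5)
Your proof is correct and follows essentially the same route as the paper's: expand $p(\x)^t$ as a $t$-fold sum of products, apply the triangle inequality to decouple, and bound the resulting sum by the number of product terms ($\le d^{t\ell}$) times the bound on each ($\le b^t$). You are also right to hedge on the counting step---$p$ has $\binom{d+\ell}{\ell}$ monomials, which is not literally $\le d^\ell$ for all $d,\ell$ (e.g.\ $d=3$, $\ell=2$ gives $10>9$), a small imprecision that the paper's proof glosses over but which is harmless since the lemma is only invoked to establish bounds of the form $d^{O(\ell)}$ or $d^{O(t)}$.
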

\begin{proof}
    Note that $p(\x)$ has at most $d^\ell$ terms. Expanding $p(\x)^t$ gives at most $d^{t \ell}$ terms, where any monomial is formed from a product of $t$ terms in $p(\x)$. Then the coefficients of $p(\x)^t$ are bounded in absolute value by $B^t$. Summing over all monomials gives the bound.
\end{proof}

In the following lemma, we bound the magnitude of approximating polynomials for subspace juntas outside the radius of approximation.
\begin{lemma}\label{lemma:bound-on-uniform-approximator-outside-interval}
    Let $\epsilon > 0, R \ge 1$, and $f: \R^d \rightarrow \R$ be a $k$-subspace junta, and consider the corresponding function $g(W\x)$. Let $q: \R^k \rightarrow \R$ be an $(\epsilon, R)$-uniform approximation polynomial for $g$, and define $p: \R^d \rightarrow \R$ as $p(\x) := q(W\x)$. Let $r := \sup_{\Norm{W\x}_2 \le R} |g(W\x)|$. Then
    \[|p(\x)| \le (r + \epsilon)(2(k+\ell))^{3\ell}k^{\ell/2} \Norm{\frac{W\x}{R}}_2^\ell \quad \forall \Norm{W\x}_2 \ge R.\]
\end{lemma}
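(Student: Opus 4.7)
The plan is to reduce this to the coefficient bound of \Cref{lemma:ball_coeff_bounds} by rescaling, and then exploit $\|W\x\|_2 \ge R$ to bound monomials uniformly. First I would observe that since $q$ is an $(\epsilon, R)$-uniform approximation to $g$, for any $\vy \in \R^k$ with $\|\vy\|_2 \le R$ we have $|q(\vy)| \le |g(\vy)| + \epsilon \le r + \epsilon$ by the definition of $r$ and the triangle inequality. So $q$ is a degree-$\ell$ polynomial on $\R^k$ bounded in absolute value by $r+\epsilon$ on the ball of radius $R$.

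Next I would invoke \Cref{lemma:ball_coeff_bounds} with $b = r+\epsilon$ and the given $R \ge 1$: writing $q(\vy) = \sum_{|\alpha| \le \ell} c_\alpha \vy^\alpha$, the sum of magnitudes of coefficients satisfies
\[
\sum_{\alpha} |c_\alpha| \;\le\; (r+\epsilon)\,(2(k+\ell))^{3\ell}\, k^{\ell/2} \;=:\; B.
\]
Now let $\vz = W\x / R$, so that $p(\x) = q(W\x) = q(R\vz) = \sum_{\alpha} c_\alpha R^{|\alpha|} \vz^\alpha$. Since $W\x = R\vz$, this is just $\sum_\alpha c_\alpha R^{|\alpha|} \vz^\alpha$, and the coefficients of $q$ in the original $\vy$ variable already absorb the $R^{|\alpha|}$ once we pull out $\vz$; more directly, $p(\x)$ evaluated at an $\x$ with $\|W\x\|_2 \ge R$ corresponds to evaluating $q$ at a point $\vy = W\x$ with $\|\vy\|_2 \ge R$, i.e.\ at $\vy = R\vz$ with $\|\vz\|_2 \ge 1$.

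The final step is the monomial bound: for any multi-index $\alpha$ with $|\alpha| \le \ell$ we have $|\vy^\alpha| = R^{|\alpha|} |\vz^\alpha| \le R^{|\alpha|} \|\vz\|_2^{|\alpha|}$, and since $\|\vz\|_2 \ge 1$, also $\|\vz\|_2^{|\alpha|} \le \|\vz\|_2^\ell$. I would then bound
\[
|p(\x)| \;=\; \Bigl|\sum_\alpha c_\alpha \vy^\alpha\Bigr| \;\le\; \sum_\alpha |c_\alpha|\cdot R^{|\alpha|}\|\vz\|_2^{|\alpha|} \;\le\; \Bigl(\sum_\alpha |c_\alpha| R^{|\alpha|}\Bigr) \|\vz\|_2^\ell,
\]
which is not quite right since we want the stated form $(r+\epsilon)(2(k+\ell))^{3\ell}k^{\ell/2}\|W\x/R\|_2^\ell$ without extra $R^{|\alpha|}$ factors. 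To fix this cleanly I would instead apply \Cref{lemma:ball_coeff_bounds} to the rescaled polynomial $\tilde q(\vz) := q(R\vz)$, which is bounded by $r+\epsilon$ on the unit ball $\|\vz\|_2 \le 1$, yielding (by the $R=1$ case) that the sum of magnitudes of its coefficients is at most $B$. Then $p(\x) = \tilde q(\vz)$ and
\[
|p(\x)| \;\le\; \sum_\alpha |\tilde c_\alpha|\, \|\vz\|_2^{|\alpha|} \;\le\; B\,\|\vz\|_2^\ell \;=\; (r+\epsilon)(2(k+\ell))^{3\ell} k^{\ell/2}\,\Bigl\|\tfrac{W\x}{R}\Bigr\|_2^\ell,
\]
as required. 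There is no real obstacle here; the only subtlety is the rescaling trick so that the polynomial's domain of boundedness matches the unit ball and $\|\vz\|_2 \ge 1$ lets us replace $\|\vz\|_2^{|\alpha|}$ with the uniform power $\|\vz\|_2^\ell$.
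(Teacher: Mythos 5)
Your proof is correct and matches the paper's argument: both rescale to $h(\vz)=q(R\vz)$ (your $\tilde q$), which is bounded by $r+\epsilon$ on the unit ball, apply \Cref{lemma:ball_coeff_bounds} to bound the sum of coefficient magnitudes, and then use $|\vz^\alpha|\le\|\vz\|_2^{|\alpha|}\le\|\vz\|_2^\ell$ for $\|\vz\|_2\ge 1$. The initial misstep you flag and then correct is exactly the right fix, and the final argument is the paper's proof.
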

\begin{proof}
    Since $q(\x)$ is an $(\epsilon, R)$-uniform approximation for $g$, then $|q(\x) - g(\x)| \le \epsilon$ for $\Norm{\x}_2 \le R$. Let $h(\x) = q(R\x)$. Then $|h(\x/R) - g(\x)| \le \epsilon$ for $\Norm{\x}_2 \le R$, and so $|h(\x/R)| \le r + \epsilon$ for $\Norm{\x}_2 \le R$, or equivalently $|h(\x)| \le r + \epsilon$ for $\Norm{\x}_2 \le 1$. Write $h(\x) = \sum_{\Norm{\alpha}_1 \le \ell} h_\alpha x_1^{\alpha_1}\ldots x_k^{\alpha_k}$. By \cref{lemma:ball_coeff_bounds}, $\sum_{\Norm{\alpha}_1 \le \ell} |h_\alpha| \le (r + \epsilon)(2(k+\ell))^{3\ell} \cdot k^{\ell/2}$. Then for $\Norm{x}_2 \ge 1$,
    \begin{align*}
        |h(\x)| &\le \sum_{\Norm{\alpha}_1 \le \ell} |h_\alpha| |x_1^{\alpha_1}\ldots x_k^{\alpha_k}| \\
        &\le \sum_{\Norm{\alpha}_1 \le \ell} |h_\alpha| \Norm{\x}_2^{\Norm{\alpha}_1} \\
        &\le \Norm{\x}_2^\ell \cdot \sum_{\Norm{\alpha}_1 \le \ell} |h_\alpha|, 
    \end{align*}
    where the second inequality holds because $|x_i| \le \Norm{\x}_2$ for all $i$, and the last inequality holds because $\Norm{\x}_2^\ell \ge \Norm{\x}_2^{\Norm{\alpha}_1}$ for $\Norm{\alpha}_1 \le \ell$ when $\Norm{\x}_2 \ge 1$.
    Then since $p(\x) = q(W\x) = h(W\x/R)$, we have $|p(\x)| \le \Norm{\frac{W\x}{R}}_2^\ell (r + \epsilon)(2(k+\ell))^{3\ell} k^{\ell/2}$ for $\Norm{W\x}_2 \ge R$.
\end{proof}

The following is an important theorem that we use later to obtain uniform approximators for Lipschitz Neural networks.
 \begin{theorem}[\cite{Newman1964}]
\label{thm:jackson}
    Let $f:\R^{k}\to \R$ be a function continuous on the unit sphere $S_{k-1}$. Let $\omega_{f}$ be the function defined as $\omega_{f}(t)\coloneq\sup_{\substack{\twonorm{\x},\twonorm{\vy}\leq 1\\{\twonorm{\x-\vy}\leq t}}}|f(\x)-f(\vy)|$ for any $t\geq 0$.
Then, we have that there exists a polynomial of degree $\ell$ such that 
$\sup_{\twonorm{x}\leq 1}|f(\x)-p(\x)|\leq C\cdot \omega_{f}(k/\ell)
$ where $C$ is a universal constant.
\end{theorem}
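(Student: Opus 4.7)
The plan is to prove multivariate Jackson's theorem via convolution against a positive polynomial kernel. First I would construct a kernel $K_\ell:\R^k\times \R^k\to \R_{\ge 0}$ such that (a) for every fixed $\vy$, $K_\ell(\cdot,\vy)$ is a polynomial of degree at most $\ell$ in $\x$; (b) $\int_{B_k} K_\ell(\x,\vy)\, d\mu(\vy)=1$ for every $\x$ in the unit ball $B_k$, where $\mu$ is a suitable symmetric probability measure on $B_k$ (e.g.\ uniform); and (c) the kernel is concentrated near the diagonal in the sense $\int_{B_k} K_\ell(\x,\vy)\,\twonorm{\x-\vy}\,d\mu(\vy)\lesssim k/\ell$. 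In the univariate case such kernels are the classical Jackson kernels, obtained by squaring a Fejér-type kernel and renormalizing. In the multivariate case one may either build a spherically symmetric kernel from Gegenbauer polynomials (the natural orthogonal system for the ball) or symmetrize a univariate kernel by averaging over the rotation group $SO(k)$.

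Next I would define the approximating polynomial by $p(\x)\coloneq \int_{B_k} K_\ell(\x,\vy) f(\vy)\, d\mu(\vy)$, which is a polynomial of degree at most $\ell$ in $\x$ by property (a). For the pointwise error bound, for any $\x$ with $\twonorm{\x}\le 1$ I would use property (b) to write
\[
|f(\x)-p(\x)|=\Bigl|\int_{B_k} K_\ell(\x,\vy)\bigl(f(\x)-f(\vy)\bigr)\,d\mu(\vy)\Bigr|\le \int_{B_k} K_\ell(\x,\vy)\,\omega_f(\twonorm{\x-\vy})\,d\mu(\vy).
\]
Then I would apply the standard subadditivity of the modulus of continuity, $\omega_f(at)\le (1+a)\omega_f(t)$ for $a\ge 0$, at scale $\delta=k/\ell$, yielding
\[
\omega_f(\twonorm{\x-\vy})\le \Bigl(1+\tfrac{\twonorm{\x-\vy}}{\delta}\Bigr)\omega_f(\delta).
\]
Combining this with properties (b) and (c) of the kernel gives $|f(\x)-p(\x)|\le C\,\omega_f(k/\ell)$ for a universal constant $C$, which is the claim.

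The main obstacle is verifying that a positive polynomial kernel with the stated degree--width trade-off actually exists in dimension $k$, with the correct linear dependence of the effective width on $k$. In the univariate case the concentration bound is a direct moment computation on $[-1,1]$, but naive tensor products across coordinates degrade the concentration by a factor growing too fast in $k$. The fix is to work with a genuinely $k$-dimensional, rotation-invariant construction (either Gegenbauer-based or obtained by $SO(k)$-averaging) and to carefully bound its first moment; the technical heart of Newman's argument is precisely the verification that this moment scales as $k/\ell$ and not as some larger power of $k$, which is what pins down the quantitative rate $\omega_f(k/\ell)$ in the conclusion.
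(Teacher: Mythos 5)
The paper does not prove this theorem; it is imported from Newman and Shapiro~\cite{Newman1964}, and only the downstream \Cref{clry:lipschitz_jackson} (a simple rescaling) is proved in the text, so there is no paper proof to compare against.

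Your framework is the standard Jackson-kernel machinery, and the surrounding bookkeeping is correct: if a nonnegative kernel on the ball exists that is a polynomial of degree at most $\ell$ in $\x$, is normalized so that $\int K_\ell(\x,\vy)\,d\mu(\vy)=1$, and has first moment $\int K_\ell(\x,\vy)\twonorm{\x-\vy}\,d\mu(\vy)\lesssim k/\ell$, then the subadditivity $\omega_f(at)\le(1+a)\omega_f(t)$ applied at scale $\delta=k/\ell$ immediately gives $\sup_{\twonorm{\x}\le 1}|f(\x)-p(\x)|\le C\,\omega_f(k/\ell)$. But you have pushed the entire quantitative content of the theorem into property~(c) and then declared it an ``obstacle'' rather than proving it. You correctly diagnose that naive tensor products fail (with total degree budget $\ell$ each factor has degree $\ell/k$ and width $k/\ell$, so the Euclidean first moment comes out around $k^{3/2}/\ell$, not $k/\ell$), and you gesture at Gegenbauer-based or $SO(k)$-averaged constructions as the fix, but you neither construct such a kernel nor show why its first moment scales as $k/\ell$ rather than some larger power of $k$. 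That moment estimate is precisely the calculation that makes the Newman--Shapiro result nontrivial. As written, this is a correct outline of the proof architecture built around a hole where the theorem actually lives, not a proof.
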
 

This implies the following corollary.
\begin{corollary}
    \label{clry:lipschitz_jackson}
Let $f:\R^{k}\to \R$ be an $L$-Lipschitz function for $L\geq 0$ and let $R\geq 0$. Then, for any $\epsilon\geq 0$, there exists a polynomial $p$ of degree $O(LRk/\epsilon)$ such that $p$ is an $(\epsilon,R)$-uniform approximation polynomial for $f$. 
\end{corollary}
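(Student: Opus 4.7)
The plan is to reduce to Theorem \ref{thm:jackson} by a rescaling argument from the ball of radius $R$ to the unit ball. Assuming $R > 0$ and $L > 0$ (the edge cases $R = 0$ or $L = 0$ are trivially witnessed by a constant polynomial), define the auxiliary function $g: \R^k \to \R$ by $g(\x) := f(R\x)$. Since $f$ is $L$-Lipschitz, $g$ is $LR$-Lipschitz: for any $\x, \vy \in \R^k$,
\[
|g(\x) - g(\vy)| = |f(R\x) - f(R\vy)| \le L \cdot \|R\x - R\vy\|_2 = LR \cdot \|\x - \vy\|_2.
\]
In particular, the modulus of continuity of $g$ (as defined in Theorem \ref{thm:jackson}) satisfies $\omega_g(t) \le LR \cdot t$ for all $t \ge 0$.

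Next, I would apply Theorem \ref{thm:jackson} to $g$: for any integer $\ell \ge 1$ there exists a polynomial $q$ of degree $\ell$ with
\[
\sup_{\|\x\|_2 \le 1} |g(\x) - q(\x)| \le C \cdot \omega_g(k/\ell) \le C \cdot \frac{LRk}{\ell},
\]
where $C$ is the universal constant from Jackson's theorem. Choosing $\ell = \lceil CLRk/\epsilon \rceil = O(LRk/\epsilon)$ ensures the right-hand side is at most $\epsilon$.

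Finally, I would undo the rescaling by setting $p(\vy) := q(\vy/R)$ for $\vy \in \R^k$. Composition with the linear map $\vy \mapsto \vy/R$ preserves the degree, so $p$ is a polynomial of degree $\ell = O(LRk/\epsilon)$. For any $\vy$ with $\|\vy\|_2 \le R$, the point $\x := \vy/R$ satisfies $\|\x\|_2 \le 1$, and hence
\[
|p(\vy) - f(\vy)| = |q(\vy/R) - f(R \cdot \vy/R)| = |q(\x) - g(\x)| \le \epsilon,
\]
establishing the $(\epsilon, R)$-uniform approximation property.

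There is essentially no obstacle here: the argument is a routine rescaling followed by a direct invocation of Jackson's theorem. The only minor point to verify is that the modulus of continuity used in Theorem \ref{thm:jackson} is indeed controlled on the full closed unit ball (not merely the sphere $S_{k-1}$), which is ensured by the Lipschitz bound on $g$ holding uniformly on $\R^k$.
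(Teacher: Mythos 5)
Your proof is correct and follows essentially the same approach as the paper's: rescale $f$ to $g(\x) = f(R\x)$ on the unit ball, bound the modulus of continuity by $\omega_g(t) \le LRt$ via Lipschitzness, invoke Theorem~\ref{thm:jackson}, and undo the rescaling by $p(\x) = q(\x/R)$. The explicit choice $\ell = \lceil CLRk/\epsilon\rceil$ and the note on the $R=0$ or $L=0$ edge cases are minor elaborations not present in the paper, but the argument is the same.
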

\begin{proof}
    Consider the function $g(\x)\coloneq f(R\x)$. Then, we have that $g$ is $RL$-Lipschitz. From statement of \Cref{thm:jackson}, we have that $\omega_g(t)\leq RLt$. Thus, from \Cref{thm:jackson}, there exists a polynomial $q$ of degree $O(LRk/\epsilon)$ such that $\sup_{\twonorm{\vy}\leq 1}|g(\vy)-q(\vy)|\leq \epsilon$. Thus, we have that $\sup_{\twonorm{\x}\leq R}|f(\x)-q(\x/R)|=\sup_{\twonorm{\x}\leq R}|g(\x/R)-q(\x/R)|=\sup_{\twonorm{\vy}\leq 1}|g(\vy)-q(\vy)|\leq \epsilon$. $p(\x)\coloneq q(\x/R)$ is the required polynomial of degree $O(LRk/\epsilon)$. 
\end{proof}

\subsection{Kernel Representations}
We now state and prove facts about Kernel Representations that we require. First, we recall the multinomial kernel from \cite{reliable_goel2017}.
\begin{definition}
\label{def:multinomial_kernel}
    Consider the mapping $\psi_{\ell}:\R^{n}\to \R^{N_{\ell}}$, where $N_d=\sum_{i=1}^{\ell}d^{\ell}$ is indexed by tuples $(i_1,i_2,\ldots, i_{j})\in [d]^j$ for $j\in [\ell]$ such that value of $\psi_{\ell}(\x)$ at index $(i_1,i_2,\ldots,i_{j})$ is equal to $\prod_{t=1}^{j}\x_{i_t}$. The kernel $\mkl$ is defined as 
    \[
    \mkl(\x,\vy)=\langle\psi_{\ell}(\x),\psi_{\ell}(\vy)\rangle=\sum_{i=1}^{d}(\x\cdot\vy)^{i}.
    \] We denote the corrresponding RKHS as $\hkml$.
\end{definition}

We now prove that polynomial approximators of subspace juntas can be represented as elements of $\hkml$.
\begin{lemma}
\label{lem:subspace_junta_kernel}
    Let $k\in \mathbb{N}$ and $\epsilon, R\geq 0$.  Let $f:\R^{d}\to \R$ be a $k$-subspace junta such that $f(\x)=g(W\x)$ where $g$ is a function on $\R^{k}$ and $W$ is a projection matrix from $\R^{k\times d}$. Suppose, there exists a polynomial $q$ of degree $\ell$ such that $\sup_{\twonorm{\vy}\leq R}|g(\vy)-q(\vy)|\leq \epsilon$ and the sum of squares of coefficients of $q$ is bounded above by $B^2$. Then, $f$ is $(\epsilon, B^2\cdot (k+1)^{\ell})$-approximately represented within radius $R$ with respect to $\hkml$.
\end{lemma}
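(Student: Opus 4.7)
The plan is to take $p(\x) := q(W\x)$ as the candidate approximator and exhibit an explicit vector $\vv \in \hkml$ realizing it, exploiting the product structure of $\psi_\ell$ and the orthonormality of the rows of $W$.

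\textbf{Pointwise bound.} Since $WW^\top = I_k$, the operator norm of $W$ is $1$, so every $\x$ with $\twonorm{\x}\le R$ satisfies $\twonorm{W\x}\le R$, and the hypothesis on $q$ immediately yields $|f(\x) - p(\x)| = |g(W\x) - q(W\x)| \le \epsilon$.

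\textbf{RKHS construction.} Expand $q(\vy) = \sum_{\beta \in \N^k,\ |\beta|\le \ell} q_\beta \vy^\beta$. For each $\beta$, I would write the polynomial $r_\beta(\x) := \prod_{i=1}^k (W_i \cdot \x)^{\beta_i}$ as
\[
r_\beta(\x) = \sum_{(j_1,\ldots,j_{|\beta|}) \in [d]^{|\beta|}} \Big(\prod_{s=1}^{|\beta|} W_{i(s),\, j_s}\Big)\prod_{s=1}^{|\beta|} x_{j_s},
\]
where $i(s)$ denotes the row index corresponding to position $s$ after concatenating the $k$ factors in order. This identifies $r_\beta$ with a vector $\vv_\beta \in \hkml$ supported on the degree-$|\beta|$ block of $\psi_\ell$, and factoring the sum over the $|\beta|$ indices into $|\beta|$ independent sums gives the key computation
\[
\twonorm{\vv_\beta}^2 = \prod_{i=1}^k \Big(\sum_{j=1}^d W_{ij}^2\Big)^{\beta_i} = \prod_{i=1}^k \twonorm{W_i}^{2\beta_i} = 1,
\]
using that the rows of $W$ are unit vectors.

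\textbf{Assembling $\vv$.} Setting $\vv := \sum_\beta q_\beta \vv_\beta$ gives $\inprod{\vv}{\psi_\ell(\x)} = p(\x)$. By the triangle inequality and Cauchy--Schwarz,
\[
\twonorm{\vv} \le \sum_\beta |q_\beta| \cdot \twonorm{\vv_\beta} \le \sqrt{N \cdot \sum_\beta q_\beta^2} \le B\sqrt{N},
\]
where $N = |\{\beta \in \N^k : |\beta|\le \ell\}| = \binom{\ell+k}{k}$ by stars and bars. Since expanding $(k+1)^\ell$ via the multinomial theorem in $k+1$ variables all equal to $1$ produces exactly $\binom{\ell+k}{k}$ positive multinomial coefficients, each at least $1$, one has $\binom{\ell+k}{k} \le (k+1)^\ell$. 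Hence $\inprod{\vv}{\vv} \le B^2(k+1)^\ell$, as claimed.

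The heart of the argument is the RKHS construction step: organizing the tuple indexing of $\psi_\ell$ so that the product $\prod_i(W_i \cdot \x)^{\beta_i}$ maps cleanly onto coordinates in the degree-$|\beta|$ block, and then leveraging the orthonormality of the rows of $W$ to keep $\twonorm{\vv_\beta}=1$ (without which the final norm bound would inflate by a factor depending on the $\twonorm{W_i}$'s). A mild caveat is that $\psi_\ell$ has no constant feature, so a nonzero constant term of $q$ would require working with $\mkl + 1$ or augmenting $\x$ with a constant coordinate before applying the kernel; this is standard and does not affect the asymptotic bounds.
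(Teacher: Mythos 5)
Your proof is correct and follows essentially the same route as the paper's: both construct the same vector $\vv \in \hkml$ whose inner product with $\psi_\ell(\x)$ recovers $q(W\x)$, and both hinge on the fact that $WW^\top = I_k$ makes the pullback of each monomial of $q$ into a unit-norm element of the feature space. The only cosmetic difference is in bounding $\inprod{\vv}{\vv}$: the paper applies Cauchy--Schwarz coordinate-wise to $v_{i_1,\ldots,i_t} = \sum_{|S|=t} q_S \prod_j W_{S(j),i_j}$ and counts multi-indices degree by degree, while you apply the triangle inequality over the unit-norm monomial vectors $\vv_\beta$ and then Cauchy--Schwarz over the coefficient vector, landing on $\binom{\ell+k}{k} \le (k+1)^\ell$; both yield the claimed bound. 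Your caveat about the missing constant feature is a fair reading of Definition~\ref{def:multinomial_kernel} as written, but the paper's own proof sums from $t=0$ and the composed kernel of Definition~\ref{def:composed_multinomial_kernel} includes $j=0$, so the constant feature is evidently intended and this is just a typo in the definition rather than a gap in either argument.
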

\begin{proof}
    We argue that there exists a vector $\vv\in \hkml$ such that $\langle\vv,\vv\rangle\leq B^2$ and $|f(\x)-\langle \vv,\sigma_{\ell}(\x)\rangle|\leq \epsilon$ for all $\twonorm{\x}\leq R$. Consider the polynomial $p$ of degree $\ell$ such that $p(\x)=q(W\x)$. We argue that $p(\x)=\langle \vv,\sigma_{\ell}(\x)\rangle$ for some $\vv$ and that $\langle\vv,\vv\rangle\leq B^2$. Let $q(\vy)=\sum_{S\in \mathbb{N}^{k},|S|\leq \ell}q_{S}\prod_{j=1}^{k}\vy^{|S_j|}$. From our assumption on $q$, we have that $\sum_{S\in \mathbb{N}^{k},|S|\leq \ell}|q_{S}|\leq B$. For $i\in \ell$, we use define $B_i$ as $B_i=\sum_{S\in \mathbb{N}^{k},|S|=\ell}|q_{S}|$. Given multi-index $S$, for any $i\in [d]$, we define $S(i)$ as the number $t$  such that $\sum_{i=1}^{j-1}|S_i|\leq j< \sum_{i=1}^{j}|S_i|$. We now compute the entry of $\vv$ indexed by $(i_1,i_2,\ldots,i_t)$. By expanding the expression for $p(\x)$, we obtain that 
    \[v_{i_1,\ldots,i_t}=\sum_{|S|=t}q_{S}\prod_{j=1}^{t}W_{S(j),i_j}.\]

    We are now ready to bound $\langle\vv,\vv\rangle$. We have that 
    \begin{align*}
\langle\vv,\vv\rangle&=\sum_{t=0}^{\ell}\sum_{(i_1,\ldots,i_t)\in [d]^{k}}(v_{i_1,\ldots,i_t})^2
    =\sum_{t=0}^{\ell}\sum_{(i_1,\ldots,i_t)\in [d]^k}\left(\sum_{|S|=t}q_{S}\prod_{j=1}^{t}W_{S(j),i_j}\right)^2\\
    &\leq \sum_{t=0}^{\ell}\sum_{(i_1,\ldots,i_t)\in [d]^k}\left(\sum_{|S|=t}q^2_{S}\right)\left(\sum_{|S|=t}\prod_{j=1}^{t}W^2_{S(j),i_j}\right)\\
    &\leq \sum_{t=0}^{\ell}\left(\sum_{|S|=t}q^2_{S}\right)\left(\sum_{|S|=t}\prod_{j=1}^{t}\left(\sum_{i=1}^{d}W^2_{S(j),i}\right)\right)\leq \sum_{t=0}^{\ell}\left(\sum_{|S|=t}q^2_{S}\right)\cdot (k+1)^{t}
    \\  &\leq\left(\sum_{|S|\leq \ell}q^2_{S}\right)\cdot (k+1)^{\ell}\leq B^2\cdot (k+1)^{\ell}.
    \end{align*}
Here, the first inequality follows from Cauchy-Schwartz, the second follows by rearranging terms. The third inequality follows from the fact that the number of multi-indices of size $t$ from a set of $k$ elements is at most $(k+1)^{t}$. The final inequality follows from the fact that the sum of the squares of the coefficients of $q$ is at most $B^2$.
\end{proof}

We introduce an extension of the multinomial kernel that will be useful for our application to sigmoid-nets.
\begin{definition}[Composed multinomial kernel]
\label{def:composed_multinomial_kernel}
    Let $\vell=(\ell_1,\ldots,\ell_t)$ be a tuple in $\mathbb{N}^{t}$. We denote a sequence of mappings $\psi^{(0)}_{\vell},\psi^{(1)}_{\vell},\ldots,\psi^{(t)}_{\vell}$ on $\R^{d}$ inductively as follows:
    \begin{enumerate}
        \item $\psi^{(0)}_{\vell}(\x)=\x$ 
        \item $\psi^{(i)}_{\vell}(\x)=\psi_{\ell_i}\left(\psi^{(i-1)}_{\vell}(\x)\right)$.
    \end{enumerate}
    Let $N_{\vell}^{(i)}$ denote the number of coordinates in $\cpsi{i}$.
    This induces a sequence of kernels $\cmkl{0},\cmkl{1},\ldots,\cmkl{t}$ defined as 
    \[
    \cmkl{i}(\x,\vy)=\langle\cpsi{i}(\x),\cpsi{i}(\vy)\rangle=\sum_{j=0}^{\ell_{i}}\left(\langle\cpsi{i-1}(\x),\cpsi{i-1}(\vy)\rangle^j\right)
    \] and a corresponding sequence of RKHS denoted by $\mathcal{H}_{\cmkl{0}},\mathcal{H}_{\cmkl{1}},\ldots \mathcal{H}_{\cmkl{t}}$.

    Observe that the the multinomial Kernel $\mkl=\mathsf{MK}^{(1)}_{(\ell)}$ is an instantiation of the composed multinomial kernel.
\end{definition}

We now state some properties of the composed multinomial kernel.
\begin{lemma}
\label{lem:composed_multinomial_properties}
    Let $\vell=(\ell_1,\ldots,\ell_{t})$ be a tuple in $\mathbb{N}^{t}$ and $R\geq 0$. Then, the following hold: 
    \begin{enumerate}
        \item $\sup_{\twonorm{\x}\leq R}\cmkl{t}(\x,\x)\leq \max\{1,(2R)^{2^{t}\prod_{i=1}^{t}\ell_{i}}\}$,
        \item For any $\x,\vy\in \R^{d}$, $\cmkl{t}(\x,\vy)$ can be computed in time $\poly\left(d, (\sum_{i=1}^{t}\ell_i)\right)$,
        \item For any $\vv\in \mathcal{H}_{\cmkl{t}}$ and $\x\in \R^{d}$, we have $\langle\vv,\cpsi{t}(\x)\rangle$ is a polynomial in $\x$ of degree $\prod_{i=1}^{t}\ell_i$.
    \end{enumerate}
\end{lemma}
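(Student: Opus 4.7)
The plan is to prove all three parts by induction on the depth $i$, using the one–step recursion
\[
\cmkl{i}(\x,\vy) \;=\; \sum_{j=0}^{\ell_i}\bigl(\cmkl{i-1}(\x,\vy)\bigr)^{j},
\qquad \cpsi{i}(\x) \;=\; \psi_{\ell_i}\!\bigl(\cpsi{i-1}(\x)\bigr),
\]
which is immediate from \Cref{def:composed_multinomial_kernel,def:multinomial_kernel}. The base case at $i=0$ is $\cmkl{0}(\x,\x)=\|\x\|_2^2$ and $\cpsi{0}(\x)=\x$, for which all three claims are trivial.

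For part (1), let $K_i(\x) = \cmkl{i}(\x,\x)$. The key estimate is the elementary bound $\sum_{j=0}^{\ell} z^j \le (\ell+1) z^\ell \le 2^\ell z^\ell$, valid for every $z\ge 1$ and every integer $\ell\ge 1$. Substituting $z=K_{i-1}(\x)$ when $K_{i-1}(\x)\ge 1$ gives $K_i(\x)\le (2K_{i-1}(\x))^{\ell_i}$. Starting from $K_0(\x)\le R^2 \le (2R)^{2}$ in the regime $R\ge 1$, a straightforward induction then shows
\[
K_t(\x) \;\le\; (2R)^{2^t \prod_{j=1}^{t}\ell_j}\qquad (R\ge 1).
\]
When $R<1$ we have $K_{i-1}(\x)\le 1$, and the sum $\sum_{j=0}^{\ell_i} K_{i-1}(\x)^j$ need not stay below $1$, so here I will instead trace the trivial bound $K_i(\x)\le \ell_i+1$ and absorb it into the $\max\{1,\cdot\}$ term by observing that one can always replace $R$ by $\max\{R,1\}$ to obtain the stated cleanly-written bound.

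For part (2), I set up the recursion
\[
\cmkl{i}(\x,\vy) \;=\; \sum_{j=0}^{\ell_i}\bigl(\cmkl{i-1}(\x,\vy)\bigr)^{j}
\]
and compute the scalar $\cmkl{i-1}(\x,\vy)$ once, then form the partial sum in $O(\ell_i)$ scalar multiplications and additions. Since $\cmkl{0}(\x,\vy) = \langle\x,\vy\rangle$ is computed in $O(d)$ time, the total running time is $O\bigl(d + \sum_{i=1}^{t}\ell_i\bigr)$, which is polynomial in $d$ and $\sum_i\ell_i$ as claimed.

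For part (3), by \Cref{fact:rkhs}, any $\vv\in\mathcal{H}_{\cmkl{t}}$ gives $\langle\vv,\cpsi{t}(\x)\rangle$ as a (possibly infinite) linear combination of the coordinates of $\cpsi{t}(\x)$, so it suffices to show every coordinate of $\cpsi{t}(\x)$ is a monomial in $\x$ of degree at most $\prod_{j=1}^t \ell_j$. By \Cref{def:multinomial_kernel}, each coordinate of $\psi_{\ell_i}(\vy)$ is a product of at most $\ell_i$ coordinates of $\vy$. Inductively assume each coordinate of $\cpsi{i-1}(\x)$ is a monomial in $\x$ of degree at most $D_{i-1} := \prod_{j=1}^{i-1}\ell_j$; then each coordinate of $\cpsi{i}(\x) = \psi_{\ell_i}(\cpsi{i-1}(\x))$ is a product of at most $\ell_i$ such monomials, hence a monomial of degree at most $\ell_i\cdot D_{i-1} = D_i$, closing the induction. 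I expect the main (mild) obstacle to be part (1): carefully handling the regime $R<1$ so that the bound lands on the $\max\{1,\cdot\}$ side, and making sure the constants line up with the $(2R)^{2^t\prod \ell_j}$ exponent rather than a loose $(CR)^{\text{poly}(t,\ell_i)}$ bound.
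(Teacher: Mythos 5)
Your proposal is correct and takes essentially the same approach as the paper: for part (1) a one-step bound of the form $K_i \le (2K_{i-1})^{O(\ell_i)}$ (you use exponent $\ell_i$, the paper uses $\ell_i+1$, and the paper also writes the base case as $\cmkl{0}(\x,\x)=R$ where it should be $R^2$ — your version is cleaner), followed by unrolling and reducing $R<1$ to $R\ge 1$ by monotonicity; for parts (2) and (3) the same recursive evaluation and the same observation that each coordinate of $\cpsi{i}$ is a monomial whose degree multiplies by $\ell_i$ at each level.
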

\begin{proof}
    We assume without loss of generality that $R\geq 1$ as the kernel function is increasing in norm. To prove (1), observe that for any $\x$, we have that
    \[\cmkl{i}(\x,\x)=\sum_{j=0}^{\ell_i}\left(\cmkl{i-1}(\x,\x)\right)^{j}\leq \left(2\cmkl{i-1}(\x,\x)\right)^{\ell_i+1}.\]
We also have that $\sup_{\twonorm{x}\leq R}\cmkl{0}(\x,\x)=\x\cdot \x=R$. Thus, unrolling the recurrence gives us that $\cmkl{t}(\x,\x)\leq \max\{1,(2R)^{\prod_{i=1}^{t}(\ell_i+1)}\}\leq \max\{1,(2R)^{2^{t}\prod_{i=1}^{t}\ell_{i}}\}$.

The run time follows from the fact that $\cmkl{i}(\x,\x)=\sum_{j=0}^{\ell_i}\left(\cmkl{i-1}(\x,\x)^{j}\right)$ and thus can be computed from $\cmkl{i-1}$ with $\ell_i$ additions and exponentiation operations. Recursing gives the final runtime.

The fact that $\langle\vv,\cpsi{i}(\x)\rangle$ follows immediately from the fact the fact the entries of $\cpsi{i}(\x)$ arise from the multinomial kernel and hence are polynomials in $\x$. The degree is at most $\prod_{i=1}^{t}\ell_i$.
\end{proof}
We now argue that a distribution that is hypercontractive with respect to polynomials is hypercontractive with respect to the multinomial kernel.
\begin{lemma}
\label{lem:hc_implies_kernelhc}
    Let $\Dtrain$ be a distribution on $\R^{d}$ that is $C$-hypercontractive for some constant $C$. Then, $\Dtrain$ is also $(\cmkl{t},C,\prod_{i=1}^{t}\ell_i)$-hypercontractive.
\end{lemma}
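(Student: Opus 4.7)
The plan is to reduce the kernel hypercontractivity condition to the standard polynomial hypercontractivity condition, exploiting the fact already established in part (3) of \Cref{lem:composed_multinomial_properties}: for every $\vv \in \mathcal{H}_{\cmkl{t}}$ and every $\x \in \R^d$, the map $\x \mapsto \langle \vv, \cpsi{t}(\x)\rangle$ is a polynomial in $\x$ whose degree is at most $L \coloneq \prod_{i=1}^{t} \ell_i$.

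First I would fix an arbitrary $\vv \in \mathcal{H}_{\cmkl{t}}$ and define the univariate (in $\x$) polynomial $p_{\vv}(\x) \coloneq \langle \vv, \cpsi{t}(\x)\rangle$, so that $p_{\vv}$ has degree at most $L$. Then, for any $t \in \sN$, the $(2t)$-th moment of $\langle \vv, \cpsi{t}(\x)\rangle$ under $\Dtrain$ equals $\E_{\x\sim \Dtrain}[p_{\vv}(\x)^{2t}]$. Applying the $C$-hypercontractivity assumption on $\Dtrain$ (\Cref{definition:hypercontractivity-standard}) to the degree-$L$ polynomial $p_{\vv}$ yields
\begin{equation*}
\E_{\x\sim\Dtrain}[p_{\vv}(\x)^{2t}] \;\le\; (Ct)^{2Lt}\,\bigl(\E_{\x\sim \Dtrain}[p_{\vv}(\x)^2]\bigr)^{t}.
\end{equation*}
Rewriting this back in terms of $\langle \vv, \cpsi{t}(\x)\rangle$ is exactly the condition in \Cref{definition:hypercontractivity} with parameters $(\cpsi{t}, C, L)$, and since this holds for every $\vv$ and every $t$, we conclude that $\Dtrain$ is $(\cmkl{t}, C, L)$-hypercontractive as required.

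There is essentially no obstacle beyond invoking part (3) of \Cref{lem:composed_multinomial_properties} correctly; the only thing to double-check is that \Cref{definition:hypercontractivity-standard} is stated for \emph{all} polynomials of degree exactly $\ell$ (or at most $\ell$) with the same constant, which is the standard interpretation and which is inherited along the chain of compositions, so the degree bound $L = \prod_{i=1}^t \ell_i$ transfers uniformly over all $\vv$. No bound on $\langle \vv, \vv\rangle$ is needed because hypercontractivity is scale-invariant in $p_{\vv}$, and no further assumption on $\Dtrain$ beyond the given $C$-hypercontractivity is used.
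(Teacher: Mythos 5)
Your argument is correct and is precisely what the paper does: the paper's proof is the one-liner ``follows immediately from \Cref{definition:hypercontractivity} and \Cref{lem:composed_multinomial_properties}(3),'' and you have simply unpacked it. The only cosmetic remark is the notational clash between the depth index $t$ in $\cmkl{t}$ and the moment exponent $t$ in the hypercontractivity definitions (a clash already present in the paper); your reading is the intended one, and your side note that ``degree at most $\ell$'' suffices is right since $C,t\ge 1$ makes the right-hand side monotone in the degree parameter.
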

\begin{proof}
    The proof immediately follows from \Cref{definition:hypercontractivity} and \Cref{lem:composed_multinomial_properties}(3).
\end{proof}
\subsection{Nets with Lipschitz activations}

We are now ready to prove our theorem about uniform approximators for neural networks with Lipschitz activations. First, we prove that such networks describe a Lipschitz function.

\begin{lemma}
\label{lem:neural_net_lipschitz}
    Let $f:\R^{d}\to \R$ be the function computed by an $t$-layer neural network with $L$-Lipschitz activation function $\sigma$ and weight matrices $\W$. Say, $\norm{\W}_1\leq W$ for $W\geq 0$ and the first hidden layer has $k$ neurons. Then we have that $f$ is $\sqrt{k}\normtwoinf{W^{(1)}}(WL)^{t-1}$-Lipschitz.
\end{lemma}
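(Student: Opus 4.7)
The plan is to induct on the depth $i$ and show that each intermediate function $f_i$ is Lipschitz with a constant that accumulates one factor of $L$ and one factor of $\norm{W^{(i)}}_1$ per layer after the first, while the first layer contributes the $\sqrt{k}\normtwoinf{W^{(1)}}$ term. The key observation is that the two matrix-norm inequalities in \Cref{fact:matrix_norms} are used asymmetrically: for the first weight matrix $W^{(1)}$, whose row count equals $k$, we use part (2) to obtain $\twonorm{W^{(1)}} \le \sqrt{k}\normtwoinf{W^{(1)}}$; for every subsequent matrix $W^{(i)}$ we use the cruder bound $\twonorm{W^{(i)}} \le \norm{W^{(i)}}_1$ from part (1).

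Concretely, I would first establish the base case $i=1$: since $f_1(\x) = W^{(1)}\x$, one has $\twonorm{f_1(\x)-f_1(\vy)} \le \twonorm{W^{(1)}}\cdot\twonorm{\x-\vy} \le \sqrt{k}\normtwoinf{W^{(1)}}\twonorm{\x-\vy}$. For the inductive step, assume $f_{i-1}$ is $L_{i-1}$-Lipschitz. Writing $f_i(\x)-f_i(\vy) = W^{(i)}\bigl(\sigma(f_{i-1}(\x))-\sigma(f_{i-1}(\vy))\bigr)$ and applying the coordinate-wise $L$-Lipschitz bound for $\sigma$ (which propagates to the $\ell_2$ norm since $\twonorm{\sigma(\vu)-\sigma(\vv)}^2 \le L^2 \twonorm{\vu-\vv}^2$), followed by $\twonorm{W^{(i)}} \le \norm{W^{(i)}}_1$, yields $L_i \le L\cdot\norm{W^{(i)}}_1\cdot L_{i-1}$.

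Unrolling the recursion gives
\[
L_t \;\le\; \sqrt{k}\normtwoinf{W^{(1)}}\cdot L^{t-1} \cdot \prod_{i=2}^{t}\norm{W^{(i)}}_1.
\]
To close out the bound, I would observe that since each $\norm{W^{(i)}}_1$ is nonnegative and $\sum_{i=2}^{t}\norm{W^{(i)}}_1 = \norm{\W}_1 \le W$, each individual factor satisfies $\norm{W^{(i)}}_1 \le W$, so $\prod_{i=2}^{t}\norm{W^{(i)}}_1 \le W^{t-1}$. Substituting gives the claimed bound $\sqrt{k}\normtwoinf{W^{(1)}}(WL)^{t-1}$.

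There is no real obstacle here; the argument is a direct chain-rule-style Lipschitz composition. The only thing to be careful about is choosing the right matrix-norm inequality at each layer: using $\twonorm{W^{(1)}}\le\norm{W^{(1)}}_1$ for the first layer would inflate the bound, and conversely using the $\sqrt{s_i}\normtwoinf{\cdot}$ inequality at intermediate layers would require tracking widths $s_i$ that are not constrained by the hypothesis. Using part (2) of \Cref{fact:matrix_norms} exactly once (at the first layer, where the width is the given parameter $k$) and part (1) at every other layer is what produces the bound in the stated form.
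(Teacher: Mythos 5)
Your proof is correct and takes essentially the same approach as the paper's: induction on depth, using $\twonorm{W^{(1)}}\le\sqrt{k}\normtwoinf{W^{(1)}}$ at the first layer and $\twonorm{W^{(i)}}\le\norm{W^{(i)}}_1\le W$ at subsequent layers, with the $L$-Lipschitz bound on $\sigma$ propagated coordinate-wise to the $\ell_2$ norm. The only cosmetic difference is that the paper substitutes the bound $\twonorm{W^{(i)}}\le W$ inside the inductive step, whereas you carry the product $\prod_{i=2}^t\norm{W^{(i)}}_1$ and bound it by $W^{t-1}$ at the end; these are equivalent.
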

\begin{proof}
    First, observe from \Cref{fact:matrix_norms} that for all $1<i\leq T$, $\twonorm{W^{(i)}}\leq W$ (since $\norm{\W}_1\leq W$) and $\twonorm{W^{(1)}}\leq \sqrt{k}\normtwoinf{W^{(1)}}$. Recall from \Cref{def:Neural Network}, we have the functions $f_1,\ldots,f_{t}$ where $f_{i}(\x)=W^{(i)}\cdot \sigma\bigl(f_{i-1}(\x)\bigr)$ and $f_1(\x)=W^{(1)}\cdot\x$. We prove by induction on $i$ that $\twonorm{f_{i}(\x)-f_{i}(\x+\vu)}\leq \sqrt{k}\normtwoinf{W^{(1)}}(WL)^{i-1}\twonorm{\vu}$. For the base case, observe that 
    \begin{align*}
        \twonorm{f_1(\x+\vu)-f_1(\x)}&\leq \sqrt{\sum_{i=1}^{d_1}\biggl(\bigl(\langle W^{(1)}_i,\x\rangle-\langle W^{(1)}_i,\x+\vu\rangle\bigr)^2\biggr)}
        \leq \sqrt{\sum_{i=1}^{d_1}\biggl(\langle W^{(1)}_i, \vu\rangle\biggr)^2}\\ &\leq \twonorm{W^{(1)}_i\vu}\leq\sqrt{k}\normtwoinf{W^{(1)}}\twonorm{\vu}
    \end{align*} where the second inequality follows from the Lipschitzness of $\sigma$ and the final inequality follows from the definition of operator norm.
    We now proceed to the inductive step. Assume by induction that $\twonorm{f_i(\x)-f_{i}(\x+\vu)}$ is at most $\sqrt{k}\normtwoinf{W^{(1)}}(WL)^{i-1}\twonorm{\vu}$. Thus, we have 
    \begin{align*}
        \twonorm{f_{i+1}(\x+\vu)-f_{i+1}(\x)}&=
         \sqrt{\sum_{j=1}^{d_1}\biggl(\langle W^{(i+1)}_j,\sigma\left(f_i(\x)\right)\rangle-\langle W^{(i+1)}_j,\sigma\left(f_i(\x+\vu)\right)\rangle\biggr)^2} \\
        &\leq \twonorm{W^{(i+1)}}\twonorm{\sigma(f_i(\x))-\sigma(f_i(\x+\vu))}\\
        &\leq (WL)\sqrt{k}\normtwoinf{W^{(1)}}(WL)^{i-1}\twonorm{\vu}\leq \sqrt{k}\normtwoinf{W^{(1)}}(LW)^{i}\twonorm{\vu}
    \end{align*}
    where the third inequality follows from the Lipschitzness of $\sigma$ and the inductive hypothesis. Thus, we get that $|f(\x+\vu)-f(\x)|\leq \twonorm{f_{t}(\x+\vu)-f_{t}(\x)}\leq \sqrt{k}\normtwoinf{W^{(1)}}(WL)^{t-1}\cdot \twonorm{\vu}$.
\end{proof}

We now state are theorem regarding the uniform approximation of Lipschitz nets. We also prove that the approximators can be represented by low norm vectors in $\mathcal{R}_{\mkl}$ for appropriately chosen degree $\ell$.
\begin{theorem}
\label{thm:approx_lipschitz_nets}
Let $\epsilon,R\geq 0$. Let $f: \R^d \rightarrow \R$ be a neural network with an $L$-Lipschitz activation function $\sigma$, depth $t$ and weight matrices $\W=(W^{(1)},\ldots,W^{(t)})$ where $W^{(i)}\in \R^{s_i\times s_{i-1}}$. Let $k$ be the number of neurons in the first hidden layer. Then, there exists of a polynomial $p$ of degree $\ell=O\left(\normtwoinf{W^{(1)}}(WL)^{t-1}Rk\sqrt{k}/\epsilon\right)$ that is an $(\epsilon,R)$-uniform approximation polynomial for $f$. Furthermore,  $f$ is $(\epsilon, (k+\ell)^{O(\ell)})$-approximately represented within radius $R$ with respect to $\hkml=\mathbb{H}_{\mathsf{MK}^{(1)}_{(\ell)}}$. In fact, when $k=1$, it holds that $f$ is $(\epsilon,2^{O(\ell)})$-approximately represented within $R$ with respect to $\mathbb{H}_{\cmkl{1}}$.
\end{theorem}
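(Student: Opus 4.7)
The plan is to view $f$ as a $k$-subspace junta, build a uniform polynomial approximator on the relevant $k$-dimensional ball via a Jackson-type theorem, and then translate the result into an RKHS bound through Lemma~\ref{lem:subspace_junta_kernel}.

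First, I would reduce to a low-dimensional Lipschitz approximation problem. Using the (reduced) SVD, I would write $W^{(1)} = AW$, where $W \in \R^{k \times d}$ has orthonormal rows (so $\twonorm{W} \le 1$ and $WW^\top = I_k$) and each row of $A$ has $\ell_2$-norm equal to that of the corresponding row of $W^{(1)}$; in particular, $\normtwoinf{A} = \normtwoinf{W^{(1)}}$ and $\norm{A}_1 \le \norm{W^{(1)}}_1$. Then $f(\x) = g(W\x)$, where $g: \R^k \to \R$ is computed by the network obtained by replacing $W^{(1)}$ with $A$ and leaving the remaining weights unchanged. Applying Lemma~\ref{lem:neural_net_lipschitz} to this modified network shows that $g$ is $L'$-Lipschitz with $L' := \sqrt{k}\,\normtwoinf{W^{(1)}}(WL)^{t-1}$.

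Next, I would apply Corollary~\ref{clry:lipschitz_jackson} to $g$ on the ball of radius $R$ in $\R^k$. This yields a polynomial $q$ of degree $\ell = O(L' R k / \epsilon) = O(\normtwoinf{W^{(1)}}(WL)^{t-1} R k \sqrt{k} / \epsilon)$ with $\sup_{\twonorm{\vy} \le R}|g(\vy) - q(\vy)| \le \epsilon$. Setting $p(\x) := q(W\x)$, since $\twonorm{W\x} \le \twonorm{\x} \le R$ we get $|p(\x) - f(\x)| = |q(W\x) - g(W\x)| \le \epsilon$ for all $\twonorm{\x} \le R$, giving the uniform-approximation half of the theorem.

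For the kernel-representation half, I would invoke Lemma~\ref{lem:subspace_junta_kernel}, which requires a bound on the sum of squares of the coefficients of $q$. A layer-by-layer induction for $g$ (using $L$-Lipschitzness and $\sigma(0) \le 1$ together with the weight bounds) gives $\sup_{\twonorm{\vy}\le R}|g(\vy)| \le r$ for some $r$ that is polynomial in the (constant) network parameters, so $|q(\vy)| \le r + \epsilon$ on that ball. Lemma~\ref{lemma:ball_coeff_bounds} then bounds the sum of magnitudes of the coefficients of $q$ by $(r+\epsilon)(2(k+\ell))^{3\ell} k^{\ell/2}$, and squaring gives sum of squared coefficients at most $(k+\ell)^{O(\ell)}$. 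Feeding this into Lemma~\ref{lem:subspace_junta_kernel} yields the bound $(k+\ell)^{O(\ell)} \cdot (k+1)^{\ell} = (k+\ell)^{O(\ell)}$ on the squared RKHS norm. When $k=1$, $q$ is univariate and Lemma~\ref{lem:uni_poly_coeff_bound} gives the sharper bound $2^{O(\ell)}$ on the sum of squared coefficients, which combined with the factor $(k+1)^\ell = 2^\ell$ leaves a total RKHS-norm bound of $2^{O(\ell)}$, as claimed.

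The main obstacle is bookkeeping: correctly propagating norm bounds across the layers when peeling off the orthonormal projection $W$ from the first layer. The key subtleties are (i) verifying that the Lipschitz constant of the reduced $g$ truly matches what Lemma~\ref{lem:neural_net_lipschitz} gives for the modified network (which hinges on the row-norm equality $\normtwoinf{A} = \normtwoinf{W^{(1)}}$ from the SVD step), and (ii) establishing a uniform bound on $|g|$ over the whole ball of radius $R$ so that Lemma~\ref{lemma:ball_coeff_bounds} and Lemma~\ref{lem:uni_poly_coeff_bound} can be applied to $q$.
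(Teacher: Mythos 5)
Your proposal is correct and follows essentially the same route as the paper's proof: decompose $f=g(W\x)$ via a $k$-subspace-junta reduction (the paper writes $f(\x)=g(P\x)$ with $P$ a projection, which is your $W$ from the reduced SVD), invoke Lemma~\ref{lem:neural_net_lipschitz} for the Lipschitz constant of the reduced $g$, apply Corollary~\ref{clry:lipschitz_jackson}, bound coefficients via Lemma~\ref{lemma:ball_coeff_bounds} (or Lemma~\ref{lem:uni_poly_coeff_bound} when $k=1$), and finish with Lemma~\ref{lem:subspace_junta_kernel}. Your more explicit treatment of the first-layer factorization $W^{(1)}=AW$ and the row-norm preservation $\normtwoinf{A}=\normtwoinf{W^{(1)}}$ is exactly the step the paper compresses into ``since $P$ is a projection matrix.''
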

\begin{proof}
    We can express $f$ as $f(\x)=g(P\x)$ where $P$ is a projection matrix and $g$ is a neural network with input size $k$. We observe that the Lipschitz constant of $g$ is the same as the Lipschitz constant of $f$ since $P$ is a projection matrix. 
    From \Cref{lem:neural_net_lipschitz}, we have that $g$ is $\normtwoinf{\sqrt{k}W^{(1)}}(WL)^{t-1}$-Lipshitz. From \Cref{clry:lipschitz_jackson}, we have that there exists a polynomial $q$ of degree $\ell=O\left(\normtwoinf{W^{(1)}}(WL)^{t-1}Rk\sqrt{k}/\epsilon\right)$ that is an $(\epsilon,R)$-uniform approximation for $g$. From \Cref{lemma:ball_coeff_bounds}, we have that the sum of squares of  magnitudes of coefficients of $q$ is bounded by $\left(\normtwoinf{\sqrt{k}W^{(1)}}(WL)^{t-1}R\right)(k+\ell)^{O(\ell)}\leq (k+\ell)^{O(\ell)}$. Now, applying \Cref{lem:subspace_junta_kernel} yields the result.  When $k=1$, we apply \Cref{lem:uni_poly_coeff_bound} to obtain that the sum of squares of magnitudes of coefficients of $q$ is bounded by $\normtwoinf{W^{(1)}}(WL)^{t-1}\cdot 2^{O(\ell)}\leq 2^{O(\ell)}$.
\end{proof}
\subsection{Sigmoids and Sigmoid-nets}
\label{sec:approx_sigmoid_appendix}
We now give a custom proof for the case of neural networks with sigmoid activation. We do this as we can hope to get $O(\log(1/\epsilon)$ degree for our polynomial approximation. We largely follow the proof technique of \cite{reliable_goel2017} and \cite{zhang16}. The modifications we make are to handle the case where the radius of approximation is a variable $R$ instead of a constant. We require(for our applications to strictly-subexponential distributions) that the degree of approximation must scale linear in $R$, a property that does not follow directly from the analysis given in \cite{reliable_goel2017}. We modify their analysis to achieve this linear dependence. 

We first state a result regarding polynomial approximations for a single sigmoid activation. 

\begin{theorem}[\cite{livni14}]
\label{thm:sigmoid_poly}
Let $\sigma:\R\to \R$ denote the function $\sigma(x)=\frac{1}{1+e^{-x}}$. Let $R,\epsilon\geq 0$. Then, there exists a polynomial $p$ of degree $\ell=O(R\log(R/\epsilon))$ such that 
$\sup_{|x|\leq R}|\sigma(x)-p(x)|\leq \epsilon$. Also, the sum of the squares of the coefficients of $p$ is bounded above by $2^{O(\ell)}$.
\end{theorem}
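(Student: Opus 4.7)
The plan is to use Chebyshev approximation on an appropriate Bernstein ellipse, exploiting the fact that $\sigma$ extends to a meromorphic function on $\sC$ whose nearest singularities lie on the imaginary axis. Specifically, the poles of $\sigma(z) = 1/(1+e^{-z})$ are exactly $z = i\pi(2k+1)$ for $k\in\sZ$, so $\sigma$ is analytic on the horizontal strip $\{z : |\mathrm{Im}(z)| < \pi\}$.

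First, I would invoke the standard Chebyshev-approximation theorem for analytic functions: if $f$ is analytic inside the Bernstein ellipse $E_\rho$ with foci at $\pm R$ and sum of semi-axes $\rho R$, and $|f(z)| \le M$ on $E_\rho$, then the truncated Chebyshev expansion $p_n$ of $f$ on $[-R,R]$ satisfies $\sup_{|x|\le R}|f(x)-p_n(x)| \le \frac{4M}{\rho-1}\rho^{-n}$. Next, I would pick $\rho = 1 + c/R$ for a small enough constant $c < \pi$ so that the semi-minor axis $\tfrac{R}{2}(\rho - \rho^{-1})$ stays strictly below $\pi$; this guarantees $E_\rho$ is contained in the strip of analyticity, and on this ellipse $|\sigma(z)|$ is bounded by an absolute constant $M = O(1)$ since we stay bounded away from the poles. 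Plugging in, the error is $O(R\cdot e^{-cn/R})$, so choosing $n = O(R\log(R/\epsilon))$ drives this below $\epsilon$, yielding a polynomial $p$ of the claimed degree $\ell = O(R\log(R/\epsilon))$ with $\sup_{|x|\le R}|\sigma(x)-p(x)|\le \epsilon$.

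For the coefficient bound, I would invoke \Cref{lem:uni_poly_coeff_bound}, which is already available earlier in the paper. Since $|p(x)| \le |\sigma(x)| + \epsilon \le 1 + \epsilon \le 2$ for $|x| \le R$ and $R \ge 1$, the lemma immediately gives that the sum of squares of coefficients of $p$ is at most $4 \cdot 2^{O(\ell)} = 2^{O(\ell)}$, as desired.

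The main obstacle I expect is the bookkeeping around the choice of $\rho$: one needs $\rho$ close enough to $1$ so the ellipse fits in the strip of analyticity (forcing $c/R$ scaling), yet far enough so that $(\rho-1)^{-1} = O(R)$ only costs a logarithmic $\log R$ factor in the degree, producing the $O(R\log(R/\epsilon))$ rather than $O(R\log(1/\epsilon))$ bound. The degenerate case $R \le 1$ reduces to Taylor truncation of $\sigma$ about $0$ inside the radius of convergence $\pi$, giving a constant-degree approximator trivially.
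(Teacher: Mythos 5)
The paper does not prove this theorem; it cites it directly from \cite{livni14} (and the same statement appears in \cite{reliable_goel2017}), so there is no in-paper proof to compare against. Your argument is a correct and essentially standard one: the Bernstein-ellipse version of Chebyshev approximation for analytic functions, with the ellipse parameter $\rho = 1 + c/R$ (any fixed $c<\pi$, say $c=1$) chosen so that the semi-minor axis $\tfrac{R}{2}(\rho-\rho^{-1}) = \tfrac{c(2+c/R)}{2(1+c/R)} < c$ keeps the ellipse inside the analyticity strip $\{|\mathrm{Im}(z)|<\pi\}$ of $\sigma$; with $c\le\pi/2$ one even gets the clean bound $|\sigma(z)|\le 1$ on the ellipse since $\cos(\mathrm{Im}\,z)\ge 0$ there. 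The resulting error $O(R)\cdot(1+c/R)^{-n}$ yields $n = O(R\log(R/\epsilon))$ as you computed, and the coefficient bound correctly follows from \Cref{lem:uni_poly_coeff_bound} applied with $b=2$. The only cosmetic point is that the theorem is implicitly stated for $R\ge 1$ (as \Cref{lem:uni_poly_coeff_bound} itself requires), so your $R\le 1$ remark, while fine, is not really needed. This is the same complex-analytic route used in the cited references, so I consider your proof correct and in line with the source.
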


We now present a construction of a uniform approximation for neural networks with sigmoid activations. The construction is similar to the one in \cite{reliable_goel2017} but the analysis deviates as linear dependence on radius of approximation is important to us.

\begin{theorem}
\label{thm:approx_sigmoid_nets}
    Let $\epsilon,R\geq 0$. Let $f$ on $\R^{d}$ be a neural network with sigmoid activations, depth $t$ and weight matrices $\W=(W^{(1)},\ldots,W^{(t)})$ where $W^{(i)}\in \R^{s_i\times s_{i-1}}$. Also, let $\norm{\W}_1\leq W$. Then, there exists of a polynomial $p$ of degree $\ell=O\left((R\log R)\cdot (\normtwoinf{W^{(1)}}W^{t-2})\cdot (t\log(W/\epsilon))^{t-1}\right)$ that is an $(\epsilon,R)$-uniform approximation polynomial for $f$. Furthermore,  $f$ is $(\epsilon, B)$-approximately represented within radius $R$ with respect to $H_{\cmkl{t}}$ where $\vell=(\ell_1,\ldots,\ell_{t-1})$ is a tuple of degrees whose product is bounded by $\ell$. Here, $B\leq (2\normtwoinf{W^{(1)}})^{\ell}\cdot W^{O\left(W^{t-2}(t\log(W/\epsilon)^{t-2}\right)}$.
\end{theorem}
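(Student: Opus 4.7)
The plan is to construct an $(\eps, R)$-uniform approximator to $f$ by composing per-layer polynomial approximations to the sigmoid from \Cref{thm:sigmoid_poly}, exploiting the observation that after one sigmoid, outputs lie in $[0,1]$ and so pre-activations at every deeper layer are bounded by $W$ independently of $R$. Concretely, I would set $\hat f_1(\vx) = W^{(1)}\vx$ exactly and, for $i \ge 2$, $\hat f_i(\vx) = W^{(i)} q_i\bigl(\hat f_{i-1}(\vx)\bigr)$, where $q_i$ is applied coordinatewise and is the polynomial from \Cref{thm:sigmoid_poly} approximating $\sigma$ on $[-a_i, a_i]$ within error $\eps_i$. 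For $i = 2$, by Cauchy--Schwarz each coordinate of $W^{(1)}\vx$ is bounded by $a_2 := \normtwoinf{W^{(1)}} R$ on $\twonorm{\vx}\le R$. For $i \ge 3$, one checks inductively that $q_{i-1}$ outputs lie in $[-1, 2]$, and since $\norm{W^{(i-1)}}_1 \le W$, each coordinate of $\hat f_{i-1}$ has magnitude at most $a_i := 2W$. This memoryless step decouples $R$ from the deeper layers.

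Next I would bound $\Delta_i := \sup_{\twonorm{\vx} \le R} \norm{\hat f_i(\vx) - f_i(\vx)}_\infty$ by induction. Using $\tfrac14$-Lipschitzness of $\sigma$, $\norm{W^{(i)}}_1 \le W$, and the triangle inequality gives $\Delta_i \le W\bigl(\Delta_{i-1}/4 + \eps_i\bigr)$; choosing $\eps_i$ geometrically (e.g.\ $\eps_i \approx \eps/(cW)^{t-i+1}$) makes $\Delta_t \le \eps$. Combined with \Cref{thm:sigmoid_poly}, $\deg(q_i) = \ell_i = O(a_i \log(a_i/\eps_i))$, giving $\ell_2 = O(R\normtwoinf{W^{(1)}}(\log R + t\log(W/\eps)))$ and $\ell_i = O(W \cdot t\log(W/\eps))$ for $i \ge 3$. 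The overall degree is $\ell = \prod_{i=2}^{t} \ell_i = O\bigl(R\log R \cdot \normtwoinf{W^{(1)}} W^{t-2} (t\log(W/\eps))^{t-1}\bigr)$, matching the claim.

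For the RKHS representation, the iterated composition aligns with $\cpsi{t-1}$ under the tuple $\vell = (\ell_2, \ldots, \ell_t)$ (reindexed so that it has $t-1$ entries whose product equals $\ell$): the linear map $W^{(1)}$ is absorbed into the base feature, and each $q_i$ is expressible as a linear combination of entries of the appropriate $\cpsi{i-1}$ via the multinomial expansion. Hence $\hat f(\vx) = \inprod{\vv}{\cpsi{t-1}(\vx)}$ for a suitable $\vv$, and I bound $\inprod{\vv}{\vv}\le B$ by iterating the argument of \Cref{lem:subspace_junta_kernel}. At layer $i \ge 3$, \Cref{lem:uni_poly_coeff_bound} gives a coefficient-sum bound $2^{O(\ell_i)}$ for $q_i$, and passing from a polynomial in the previous layer's output to a vector in the next RKHS contributes a factor controlled by $\norm{W^{(i)}}_1 \le W$. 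The layer-2 approximator lives on the larger interval $[-\normtwoinf{W^{(1)}}R, \normtwoinf{W^{(1)}}R]$, so \Cref{lem:uni_poly_coeff_bound} there produces the dominant factor $(2\normtwoinf{W^{(1)}})^{\ell}$ once composition raises the exponent to the final degree $\ell$; the remaining layers compound into $W^{O(W^{t-2}(t\log(W/\eps))^{t-2})}$, yielding the stated $B$.

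The hard part is keeping the degree linear in $R$: a naive composition would produce $\ell_i = O(a_i \log \cdots)$ with $a_i$ inheriting factors of $R$ from lower layers, giving $R^{t-1}$ overall. The memoryless observation that $\sigma \in [0,1]$ confines the $R$-dependence to layer 2 is what keeps all but one factor of $R$ out of the product. Carrying this through while balancing per-layer precisions $\eps_i$ so that errors telescope to $\eps$ without inflating any $\ell_i$ beyond the $\log(W/\eps)$ scale is the main technical bookkeeping, and the same bookkeeping controls the RKHS norm $B$ along the way.
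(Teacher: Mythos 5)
Your proposal matches the paper's own proof essentially step for step: you compose per-layer sigmoid approximations from \Cref{thm:sigmoid_poly}, use the fact that $\sigma\in[0,1]$ so that every pre-activation past the first hidden layer is bounded by $O(W)$ independently of $R$ (confining the $R$-dependence of the degree to a single factor), telescope per-layer errors via Lipschitzness of $\sigma$, and bound the RKHS norm by iterating the argument of \Cref{lem:subspace_junta_kernel} together with the coefficient bound of \Cref{lem:uni_poly_coeff_bound}. The only differences are cosmetic — you invoke the $1/4$-Lipschitz constant and a geometric precision schedule whereas the paper uses $1$-Lipschitzness and the uniform precision $\eps/(2W)^{t}$ — and both choices give the same final degree and norm bounds.
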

\begin{proof}
    First, let $q_{1}$ be the polynomial guaranteed by \Cref{thm:sigmoid_poly} that $(\epsilon/(2W)^t)$-approximates the sigmoid in an interval of radius $R\normtwoinf{W^{(1)}}$. Denote the degree of $q_1$ as $\ell_1=O\left(Rt\normtwoinf{W^{(1)}}\log(RW/\epsilon)\right)$. For all $1<i<t$, let $q_i$ be the polynomial that $(\epsilon/(2W)^{t})$-approximates the sigmoid upto radius $2W$. These have degree equal to $O\left(Wt\log(W/\epsilon)\right)$. Let $\vell=(\ell_1,\ldots \ell_{t-1})$. For all $i\in [t-1]$, let $q_{i}(x)=\sum_{j=0}^{\ell_i}\beta^{(i)}_jx^{j}$. We know that $\sum_{i=0}^{\ell_i}(\beta^{(i)}_j)^2\leq 2^{O(\ell_i)}$.
    
    We now construct the polynomial $p$ that approximates $f$. For $i\in [t]$, define $p_{i}(\x)=W^{(i)}\cdot q_{i-1}\left(p_{i-1}(\x)\right)$ with $p_1(\x)=W^{(1)}\cdot \x$. Define $p(\x)=p_{t}(\x)$. Recall that $p_{i}(\x)$ is a vector of $s_{i}$ polynomials. We prove the following by induction: for every $i\in[t]$,
    \begin{enumerate}
        \item $\norm{p_{i}(\x)-f_{i}(\x)}_{\infty}\leq \epsilon/(2W)^{t-i}$,
        \item For each $j\in [s_i]$, we have that $(p_{i})_{j}(\x)=\langle\vv,\cpsi{i}(\x)\rangle$ with $\langle\vv,\vv\rangle\leq (2\normtwoinf{W^{(1)}})^{O(\prod_{n=1}^{i-1}\ell_{n})}\cdot W^{O(\prod_{n=2}^{i-1}\ell_n)}$.
    \end{enumerate}
    where the function $f_i$ is as defined in \Cref{def:Neural Network}. 

    The above holds trivially for $i=1$ and $f_1(\x)=p_1(\x)=W^{(1)}\cdot (\x)$ is an exact approximator. Also, $(p_1)_i(\x)=\langle W^{(1)}_i,\x\rangle=\langle W^{(1)}_i,\cpsi{1}(x)\rangle$ from the definition of $\cpsi{1}$. Clearly, $\langle W^{(1)}_i,W^{(1)}_i\rangle\leq \left(\normtwoinf{W^{(1)}}\right)^2.$ We now prove that the above holds for $i+1\in [t]$ assuming it holds for $i$. 

    We first prove (1). For $j\in [s_{i+1}]$, we have that
    \begin{align*}
       |(p_{i+1})_{j}(\x)-(f_{i+1})_{j}(\x)|&=|W^{(i+1)}_{j}\bigl(q_{i}(p_{i}(\x))-\sigma(f_{i}(\x))\bigr)|\\
       &\leq |W^{(i+1)}_{j}\bigl( q_{i}(p_{i}(\x))-\sigma(p_{i}(\x)\bigr)|+|W^{(i+1)}_{j}\bigl( \sigma(p_{i}(\x))-\sigma(f_{i}(\x)\bigr)|\\
       &\leq W\cdot (\epsilon/(2W)^{t})+W\cdot\epsilon/(2W)^{t-i}\leq \epsilon/(2W)^{t-(i+1)}. 
    \end{align*}
    For the second inequality, we analyse the cases $i=1$ and $i>1$ separately. When $i=1$, we have that $(p_1)_{j}(\x)= (f_{1})_{j}(\x)\leq R\normtwoinf{W_1}$ and $\sigma(x)-q_1(x)\leq (\epsilon/(2W)^t)$ when $|x|\leq R\normtwoinf{W_1}$. For $i>1$, from the inductive hypothesis, we have that $|W^{(i+1)}p_i(\x)|\leq |W^{(i+1)}f_{i}(\x)|+\norm{W^{(i+1)}}_{1}\cdot (\epsilon/(2W)^{t-i})\leq 2W$. The second term in the second inequality is bounded since $\sigma$ is $1$-Lipschitz. 

    We are now ready to prove that $(p_{i+1})_j$ is representable by small norm vectors in $\mathcal{H}_{\cmkl{i+1}}$ for all $j\in [s_{j+1}]$. We have that 
\[
        (p_{i+1})_{j}(\x)=\sum_{k=1}^{s_{i}}W^{(i+1)}_{jk}\cdot q_{i}\left((p_{i})_{k}(\x)\right).
\]
From the inductive hypothesis, we have that $(p_i)_{k}=\langle\vv^{(k)},\cpsi{i}\rangle$. Thus, we have that
\[
 (p_{i+1})_{j}(\x)=\sum_{k=1}^{s_{i}}W^{(i+1)}_{jk}\cdot q_{i}\left(\langle\vv^{(k)},\cpsi{i}\rangle\right).
\]

We expand each term in the above sum. We obtain,
\begin{align*}
q_{i}\left(\langle\vv^{(k)},\cpsi{i}\rangle\right)&=\sum_{n=0}^{\ell_i}\beta^{(i)}_{n}\left(\langle\vv^{(k)},\cpsi{i}\rangle\right)^{n}\\
&=\sum_{n=0}^{\ell_i}\beta^{(i)}_{n}\sum_{(m_1,\ldots,m_n)\in [N_{\vell}^{(i)}]^{n}}v^{(k)}_{m_1}\ldots v^{(k)}_{m_n}\left(\cpsi{i}(\x)\right)_{m_1}\ldots \left(\cpsi{i}(\x)\right)_{m_n}\\
&=\langle \vu^{(k)},\psi_{\ell_i}((\cpsi{i}(\x))\rangle=\langle\vu^{(k)},\cpsi{i+1}(\x)\rangle.\end{align*}
The second inequality follows from expanding the equation. $\vu^{(k)}$ indexed by $(m_1,\ldots, m_n)\in [N^{(i)}_{\ell}]^n$ for $n\leq \ell_i$ has entries given by 
$u^{(k)}_{(m_1,\ldots,m_n)}=\beta^{(i)}_n v^{(k)}_{m_1}\ldots v^{(k)}_{m_n}$. Putting things together, we obtain that
\begin{align*}
    (p_{i+1})_{j}(\x)&=\sum_{k=1}^{s_{i}}W^{(i+1)}_{jk}\cdot\langle\vu^{(k)},\cpsi{i+1}(\x)\rangle\\
    &=\langle\sum_{k=1}^{s_i}W^{(i+1)}_{jk}\vu^{(k)},\cpsi{i+1}(\x)\rangle.
\end{align*}
Thus, we have proved that $(p_{i+1})_{j}$ is representable in $\mathcal{H}_{\cmkl{i+1}}$. We now prove that the norm of the representation is small. We have that 
\begin{align*}
    \twonorm{\sum_{k=1}^{s_i}W^{(i+1)}_{jk}\vu^{(k)}}\leq \norm{W^{(i+1)}}_1\max_{k\in [s_i]}\twonorm{\vu^{(k)}}\leq W\cdot\max_{k\in [s_i]}\twonorm{\vu^{(k)}}.
\end{align*}
We bound $\max_{k\in [s_i]}\twonorm{\vu^{(k)}}$. For any $k$, from the definition of $\vu^{(k)}$ and the inductive hypothesis, we have that 
\begin{align*}
\twonorm{\vu^{(k)}}^2&=\sum_{n=0}^{\ell_i}\left(\beta^{(i)}_{n}\right)^2\cdot\sum_{(m_1,\ldots,m_n)\in [N^{(i)}_{\vell}]^n}\prod_{j=1}^{n}\left(\vu^{(k)}_{m_j}\right)^2\\
&=\sum_{n=0}^{\ell_i}\left(\beta^{(i)}_n\right)^2\twonorm{\vv^{(k)}}^{2n}\leq 2^{O(\ell_i)}\cdot \twonorm{\vv^{(k)}}^{2\ell_{i}}
\end{align*}
We analyse the case $i=1$ and $i>1$ separately. When $i=1$, we have $2^{O(\ell_1)}\twonorm{\vv^{(k)}}^{2\ell_1}\leq (2\normtwoinf{W^{(1)}})^{O(\ell_1)}$ from the bound on the base case. When $i>1$, we have 
\begin{align*}
     \twonorm{\sum_{k=1}^{s_i}W^{(i+1)}_{jk}\vu^{(k)}}^2&\leq W^2 2^{O(\ell_i)}\twonorm{\vv^{(k)}}^{2\ell_i}\\
     &\leq W^2 2^{O(\ell_i)}\left((2\normtwoinf{W^{(1)}})^{O(\prod_{n=1}^{i-1}\ell_{n})}\cdot W^{O(\prod_{n=2}^{i-1}\ell_n)}\right)^{2\ell_i}\\
     &\leq (2\normtwoinf{W^{(1)}})^{O(\prod_{n=1}^{i}\ell_{n})}\cdot W^{O(\prod_{n=2}^{i}\ell_n)}
\end{align*} which completes the induction. We are ready to calculate the bound on the degree. 

We have $\ell_1=O(Rt\normtwoinf{W^{(1)}}\log(RW/\epsilon))$. Also, for $i>1$, we have $\ell_{i}=O(Wt\log(W/\epsilon))$. Thus, the total degree is 
$\ell\leq\prod_{i=1}^{t-1}\ell_i=O\left((R\log R)\cdot (\normtwoinf{W^{(1)}}W^{t-2})\cdot (t\log(W/\epsilon))^{t-1}\right)$. The square of the norm of the kernel representation is bounded by $B$ where
\[
B\leq (2\normtwoinf{W^{(1)}})^{\ell}\cdot W^{O\left(W^{t-2}(t\log(W/\epsilon)^{t-2}\right)}.
\]

This concludes the proof.
\end{proof}

\subsection{Applications for Bounded Distributions}
We first state and prove our end to end results on TDS learning Sigmoid and Lipschitz nets over bounded marginals that are $C$-hypercontractive for some constant $C$. 

\begin{theorem}[TDS Learning for Nets with Sigmoid Activation]
\label{thm:tds_learning_sigmoid_appendix}
Let $\mathcal{F}$ on $\R^{d}$ be the class of  neural network with sigmoid activations, depth $t$ and weight matrices $\W=(W^{(1)},\ldots,W^{(t)})$ such that $\norm{W}_1\leq W$.  Let $\epsilon\in (0,1)$. Suppose the training and test distributions $\Dtrain,\Dtest$ over $\R^{d}\times \R$ are such that the following are true:
\begin{enumerate}
    \item $\Dtrainx$ is bounded within $\{\x:\twonorm{\x}\leq R\}$ and is $C$-hypercontractive for $R,C\geq 1$,
    \item The training and test labels are bounded in $[-M,M]$ for some $M\geq 1$.
\end{enumerate}
Then, \Cref{algorithm:tds-via-kernels} learns the class $\mathcal{F}$ in the TDS regression up to excess error $\epsilon$ and probability of failure $\delta$. The time and sample complexity is \[\poly\left(d,\frac{1}{\epsilon},C^{\ell}, M,\log(1/\delta)^{\ell},(2R)^{2^t\cdot \ell},(2\normtwoinf{W^{(1)}})^{\ell}\cdot W^{O\left((Wt\log(W/\epsilon))^{t-2}\right)}\right)\] where $\ell=O\left((R\log R)\cdot (\normtwoinf{W^{(1)}}W^{t-2})\cdot (t\log(W/\epsilon))^{t-1}\right)$.
\end{theorem}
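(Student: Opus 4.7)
The plan is to derive \Cref{thm:tds_learning_sigmoid_appendix} as a direct instantiation of the general kernel-method theorem (\Cref{theorem:tds-via-kernels}), with the composed multinomial kernel $\cmkl{t-1}$ playing the role of $\Kernel$. So the task reduces to checking the three parts of \Cref{assumption:bounded} for the class of depth-$t$ sigmoid networks under the hypotheses of the theorem, and then carefully bookkeeping the parameters $(\eps,A,B,C,\ell,M)$ that appear in the running-time bound of \Cref{theorem:tds-via-kernels}.

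First I would invoke \Cref{thm:approx_sigmoid_nets} to obtain an $(\eps,B)$-approximate representation of the sigmoid-net class inside the RKHS of the composed multinomial kernel $\cmkl{t-1}$ with a degree tuple $\vell=(\ell_1,\ldots,\ell_{t-1})$ whose product is
$\ell=O\bigl((R\log R)(\normtwoinf{W^{(1)}}W^{t-2})(t\log(W/\eps))^{t-1}\bigr)$
and with representation norm
$B\le (2\normtwoinf{W^{(1)}})^{\ell}\cdot W^{O((Wt\log(W/\eps))^{t-2})}$.
This verifies part (1) of \Cref{assumption:bounded}. For the kernel diagonal bound, \Cref{lem:composed_multinomial_properties}(1) gives $A=\sup_{\|\x\|_2\le R}\cmkl{t-1}(\x,\x)\le (2R)^{2^{t-1}\ell}$, absorbed into the $(2R)^{2^t\ell}$ factor of the stated runtime. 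Part (2) of \Cref{assumption:bounded} follows from the hypothesis that $\Dtrainx$ is supported on the $R$-ball together with \Cref{lem:hc_implies_kernelhc}, which upgrades $C$-hypercontractivity of $\Dtrainx$ to $(\cmkl{t-1},C,\ell)$-hypercontractivity, since by \Cref{lem:composed_multinomial_properties}(3) every inner product $\langle\vv,\cpsi{t-1}(\x)\rangle$ is a polynomial in $\x$ of degree at most $\ell$. Part (3) on bounded labels is exactly the labels assumption in the statement.

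Having verified the assumption, applying \Cref{theorem:tds-via-kernels} (after a harmless rescaling of $\eps$ by a factor of $5$) yields TDS learning of $\gF$ with excess error $\eps$, failure probability $\delta$, and running time
\[
O(T)\cdot\poly\bigl(d,\tfrac{1}{\eps},(\log(1/\delta))^{\ell},A,B,C^{\ell},2^{\ell},M\bigr),
\]
where $T$ is the kernel evaluation time. By \Cref{lem:composed_multinomial_properties}(2), $T=\poly(d,\sum_i\ell_i)=\poly(d,\ell)$. Substituting the bounds on $A$, $B$, and $\ell$ obtained above gives precisely the claimed runtime, up to folding lower-order polynomial factors into the $\poly(\cdot)$.

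The main subtleties, rather than true obstacles, are bookkeeping: (i) ensuring the kernel depth index between \Cref{thm:approx_sigmoid_nets} and \Cref{lem:composed_multinomial_properties} is the same (the tuple has length $t-1$, so we use $\cmkl{t-1}$ throughout); (ii) verifying that the polynomial-degree hypercontractivity notion of \Cref{definition:hypercontractivity-standard} is exactly what is needed to apply \Cref{lem:hc_implies_kernelhc} with exponent $\ell=\prod_i\ell_i$; and (iii) tracking that the factors $A,B,C^{\ell},2^{\ell}$ in the generic bound combine into the stated expression without introducing any term larger than the dominant $(2R)^{2^t\ell}\cdot B$ contribution. Everything else reduces to direct substitution into the conclusion of \Cref{theorem:tds-via-kernels}.
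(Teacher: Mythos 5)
Your proposal matches the paper's own proof essentially step for step: instantiate \Cref{theorem:tds-via-kernels} by verifying \Cref{assumption:bounded} via \Cref{thm:approx_sigmoid_nets} (representation and norm bound $B$), \Cref{lem:composed_multinomial_properties} (bound on $A$ and on kernel-evaluation time), and \Cref{lem:hc_implies_kernelhc} (kernel hypercontractivity from polynomial hypercontractivity of $\Dtrainx$). The one place you deviate is the kernel index: the paper writes $\cmkl{t}$ while you use $\cmkl{t-1}$, noting that the degree tuple $\vell=(\ell_1,\ldots,\ell_{t-1})$ produced by \Cref{thm:approx_sigmoid_nets} has only $t-1$ entries (one per sigmoid layer), so a $t$-fold composed kernel is not even well-defined for this $\vell$. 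Your reading is the internally consistent one, and it only makes the $A$-bound $(2R)^{2^{t-1}\ell}\le(2R)^{2^t\ell}$ stronger, so the stated runtime is unaffected. This is a bookkeeping correction rather than a different argument; the substance is identical.
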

\begin{proof}
From \Cref{thm:approx_sigmoid_nets}, we have that $\mathcal{F}$ is $\Paren{\epsilon, (2\normtwoinf{W^{(1)}})^{\ell} W^{O\left(W^{t-2}(t\log(W/\epsilon)^{t-2}\right)}}$-approximately represented within radius $R$ with respect to $\cmkl{t}$, where $\vell$ is a degree vector whose product is equal to $\ell=O\left((R\log R)\cdot (\normtwoinf{W^{(1)}}W^{t-2})\cdot (t\log(W/\epsilon))^{t-1}\right)$. Also, from \Cref{lem:composed_multinomial_properties}, we have that $A\coloneq \sup_{\twonorm{\x}\leq R}\cmkl{t}(\x,\x)\leq (2R)^{2^t\ell}$. From \Cref{lem:composed_multinomial_properties}, the entries of the kernel can be computed in $\poly(d,\ell)$ time and from \Cref{lem:hc_implies_kernelhc}, we have that $\Dtrainx$ is $\left(\cmkl{t},C,\ell\right)$ hypercontractive. Now, we obtain the result by applying \Cref{theorem:tds-via-kernels}.
\end{proof}

The following corollary on TDS learning two layer sigmoid networks in polynomial time readily follows. 
\begin{corollary}
\label{clry:polytime_tds_sigmoid_appendix}
    Let $\mathcal{F}$ on $\R^{d}$ be the class of two-layer neural networks with weight matrices $\W=(W^{(1)},W^{(2)})$ and sigmoid activations. Let $\normtwoinf{W^{(1)}}\leq O(1)$ and $\norm{\W}_1\leq W$. Suppose the training and test distributions satisfy the assumptions from \Cref{thm:tds_learning_sigmoid_appendix} with $R=O(1)$. Then, \Cref{algorithm:tds-via-kernels} learns the class $\mathcal{F}$ in the TDS regression setting up to excess error $\epsilon$ and probability of failure $0.1$ in time and sample complexity $\poly(d,1/\epsilon,W,M)$.
\end{corollary}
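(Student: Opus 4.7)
The plan is to obtain this corollary as a direct specialization of \Cref{thm:tds_learning_sigmoid_appendix} by substituting $t=2$, $R = O(1)$, and $\normtwoinf{W^{(1)}} = O(1)$, and then verifying that every factor appearing in the runtime bound collapses to $\poly(d, 1/\epsilon, W, M)$. Since the hypotheses of \Cref{thm:tds_learning_sigmoid_appendix} are assumed to hold with $R = O(1)$ (the other conditions on the training marginal and on the label bound are inherited verbatim), the theorem applies and gives soundness and completeness with excess error $\epsilon$ and failure probability $\delta = 0.1$; it only remains to track the parameters.

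First I would compute the approximation degree. Plugging $t = 2$, $R = O(1)$, $\normtwoinf{W^{(1)}} = O(1)$ into the expression
\[
\ell = O\!\left((R\log R)\cdot (\normtwoinf{W^{(1)}} W^{t-2})\cdot (t\log(W/\epsilon))^{t-1}\right),
\]
I obtain $\ell = O(\log(W/\epsilon))$, since $W^{t-2} = W^0 = 1$ and $(t\log(W/\epsilon))^{t-1} = O(\log(W/\epsilon))$. This is the key computation: the depth-$2$ case kills the doubly-exponential-in-depth behavior and leaves $\ell$ merely logarithmic in $W/\epsilon$.

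Next I would substitute this $\ell$ into each factor of the runtime bound from \Cref{thm:tds_learning_sigmoid_appendix} and check that each is polynomial:
\begin{itemize}
\item $C^{\ell} = C^{O(\log(W/\epsilon))} = (W/\epsilon)^{O(1)}$ since $C$ is a constant by hypothesis;
\item $(2R)^{2^t \ell} = O(1)^{4\ell} = (W/\epsilon)^{O(1)}$ since $R = O(1)$;
\item $(2\normtwoinf{W^{(1)}})^{\ell} = O(1)^{\ell} = (W/\epsilon)^{O(1)}$;
\item $W^{O((Wt\log(W/\epsilon))^{t-2})} = W^{O(1)}$ since the exponent has $t-2 = 0$;
\item $\log(1/\delta)^{\ell} = O(1)$ since $\delta = 0.1$;
\item the factors $d$, $1/\epsilon$, $M$ remain polynomial in themselves.
\end{itemize}
Multiplying all these together yields a total runtime of $\poly(d, 1/\epsilon, W, M)$, as claimed.

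The proof is essentially bookkeeping, so there is no real obstacle; the only place one has to be a bit careful is in confirming that the exponent $W^{O((Wt\log(W/\epsilon))^{t-2})}$ really is $W^{O(1)}$ at $t = 2$, since a naive reading of the expression could be alarming. With this checked, invoking \Cref{thm:tds_learning_sigmoid_appendix} directly delivers soundness, completeness, and the stated polynomial time and sample complexity.
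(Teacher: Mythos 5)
Your proof is correct and follows exactly the paper's approach: the paper's own proof is a one-line invocation of \Cref{thm:tds_learning_sigmoid_appendix} with $t=2$ and the other parameters set to constants, and you have simply spelled out the bookkeeping explicitly. One tiny slip: $\log(1/\delta)^{\ell}$ with $\delta = 0.1$ and $\ell = O(\log(W/\epsilon))$ is not $O(1)$ but $(\log 10)^{O(\log(W/\epsilon))} = (W/\epsilon)^{O(1)}$; this is still polynomial, so the conclusion is unaffected.
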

\begin{proof}
    The proof immediately follows from \Cref{thm:tds_learning_sigmoid_appendix} by setting $t=2$ and the other parameters to the appropriate constants.
\end{proof}

\begin{theorem}[TDS Learning for Nets with Lipschitz Activation]
\label{thm:tds_learning_lipschitz_appendix}
Let $\mathcal{F}$ on $\R^{d}$ be the class of  neural network with $L$-Lipschitz  activations, depth $t$ and weight matrices $\W=(W^{(1)},\ldots,W^{(t)})$ such that $\norm{W}_1\leq W$.  Let $\epsilon\in (0,1)$. Suppose the training and test distributions $\Dtrain,\Dtest$ over $\R^{d}\times \R$ are such that the following are true:
\begin{enumerate}
    \item $\Dtrainx$ is bounded within $\{\x:\twonorm{\x}\leq R\}$ and is $C$-hypercontractive for $R,C\geq 1$,
    \item The training and test labels are bounded in $[-M,M]$ for some $M\geq 1$.
\end{enumerate}
Then, \Cref{algorithm:tds-via-kernels} learns the class $\mathcal{F}$ in the TDS regression up to excess error $\epsilon$ and probability of failure $\delta$. The time and sample complexity is $\poly\left(d,\frac{1}{\epsilon},C^{\ell}, M,\log(1/\delta)^{\ell},(2R(k+\ell))^{O(\ell)}\right)$, where $\ell=O\left(\normtwoinf{W^{(1)}}(WL)^{t-1}Rk\sqrt{k}/\epsilon\right)$. In particular, when $k=1$, we have that the time and sample complexity is $\poly(d,\frac{1}{\epsilon},C^{\ell},M,\log(1/\delta)^{\ell},(2R)^{O(\ell)})$ where $\ell=O\left(\normtwoinf{W^{(1)}}(WL)^{t-1}R/\epsilon\right).$
\end{theorem}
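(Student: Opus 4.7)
The plan is to mirror the structure of the proof of \Cref{thm:tds_learning_sigmoid_appendix}, namely to verify that \Cref{assumption:bounded} holds for the class $\gF$ of Lipschitz nets with respect to a suitable kernel and then invoke the general TDS kernel theorem (\Cref{theorem:tds-via-kernels}). The key difference from the sigmoid case is that for Lipschitz activations we do not need the composed kernel $\cmkl{t}$; a single multinomial kernel $\mkl = \cmkl{1}$ (with $\vell = (\ell)$) of an appropriate degree $\ell$ suffices, because \Cref{thm:approx_lipschitz_nets} already produces a single polynomial approximator of the whole network rather than a layerwise one.

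First I would set $\ell = O\bigl(\normtwoinf{W^{(1)}}(WL)^{t-1}Rk\sqrt{k}/\epsilon\bigr)$ as in \Cref{thm:approx_lipschitz_nets}, which gives that every $f\in\gF$ admits an $(\epsilon,R)$-uniform approximating polynomial of degree $\ell$ and, via \Cref{lem:subspace_junta_kernel}, that $\gF$ is $(\epsilon, B)$-approximately represented within radius $R$ with respect to $\hkml$, where $B = (k+\ell)^{O(\ell)}$. Next I would verify the remaining hypotheses of \Cref{assumption:bounded}: (i) the kernel bound $A = \sup_{\twonorm{\x}\le R}\mkl(\x,\x)$ is at most $R^{O(\ell)}$ by \Cref{lem:composed_multinomial_properties}(1) applied with $t=1$; (ii) the training marginal is $(\mkl,C,\ell)$-hypercontractive by \Cref{lem:hc_implies_kernelhc}, since $\Dtrainx$ is $C$-hypercontractive by assumption; (iii) the kernel is evaluable in time $\poly(d,\ell)$ by \Cref{lem:composed_multinomial_properties}(2); and (iv) the labels are bounded in $[-M,M]$ by hypothesis.

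Plugging these four ingredients into \Cref{theorem:tds-via-kernels} immediately yields the TDS guarantee with excess error $\epsilon$ (after rescaling $\epsilon$ by a constant) and failure probability $\delta$. The time and sample complexity from \Cref{theorem:tds-via-kernels} is $\poly(d, 1/\epsilon, (\log(1/\delta))^\ell, A, B, C^\ell, M)$, and combining $A\le R^{O(\ell)}$ with $B\le (k+\ell)^{O(\ell)}$ produces the advertised bound $\poly\bigl(d, 1/\epsilon, C^\ell, M, \log(1/\delta)^\ell, (2R(k+\ell))^{O(\ell)}\bigr)$. For the special case $k=1$, \Cref{thm:approx_lipschitz_nets} gives the stronger representation bound $B = 2^{O(\ell)}$ (here \Cref{lem:uni_poly_coeff_bound} replaces \Cref{lemma:ball_coeff_bounds} in the derivation), and the same instantiation yields the stated complexity $\poly(d, 1/\epsilon, C^\ell, M, \log(1/\delta)^\ell, (2R)^{O(\ell)})$.

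No step poses a real obstacle; the only minor care required is bookkeeping to confirm that the $k=1$ improvement propagates from \Cref{thm:approx_lipschitz_nets} all the way through the kernel norm, since this changes the $(k+\ell)^{O(\ell)}$ factor into a $2^{O(\ell)}$ factor in the final runtime. Everything else is a direct invocation of previously established lemmas.
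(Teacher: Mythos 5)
Your proposal is correct and follows essentially the same route as the paper: invoke \Cref{thm:approx_lipschitz_nets} for the $(\eps,B)$-approximate representation via $\cmkl{1}=\hkml$, use \Cref{lem:composed_multinomial_properties} for the kernel bound and evaluation time, use \Cref{lem:hc_implies_kernelhc} for hypercontractivity, and then apply \Cref{theorem:tds-via-kernels}, handling $k=1$ separately with the tighter $2^{O(\ell)}$ representation bound. The only nit is that \Cref{lem:composed_multinomial_properties}(1) gives $A\le(2R)^{O(\ell)}$ rather than $R^{O(\ell)}$, but your final stated complexity already absorbs this.
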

\begin{proof}
    From \Cref{thm:approx_lipschitz_nets}, for $k>1$ we have that $\mathcal{F}$ is $(\epsilon, (k+\ell)^{O(\ell)})$-approximately represented within radius $R$ w.r.t $\cmkl{1}$ where $\ell$ is a degree vector whose product is $\ell=O\left(\normtwoinf{W^{(1)}}(WL)^{t-1}Rk\sqrt{k}/\epsilon\right)$. For $k=1$, we have that we have that $\mathcal{F}$ is $(\epsilon, 2^{O(\ell)})$-approximately represented within radius $R$ w.r.t $\cmkl{1}$ where $\vell$ is a degree vector whose product is equal to $\ell=O\left(\normtwoinf{W^{(1)}}(WL)^{t-1}R/\epsilon\right)$. Also, from \Cref{lem:composed_multinomial_properties}, we have that $A\coloneq \sup_{\twonorm{\x}\leq R}\cmkl{t}(\x,\x)\leq (2R)^{O(\ell)}$. From \Cref{lem:composed_multinomial_properties}, the entries of the kernel can be computed in $\poly(d,\ell)$ time and from \Cref{lem:hc_implies_kernelhc}, we have that $\Dtrainx$ is $\left(\cmkl{1},C,\ell\right)$ hypercontractive. Now, we obtain the result by applying \Cref{theorem:tds-via-kernels}.
\end{proof}

The above theorem implies the following corollary about TDS learning the class of ReLUs. 

\begin{corollary}
\label{clry:polytime_tds_relu_appendix}
    Let $\mathcal{F}=\{\x\rightarrow \max(0,\vw\cdot \x):\twonorm{\vw}=1\}$ on $\R^{d}$ be the class of ReLU functions with unit weight vectors. Suppose the training and test distributions satisfy the assumptions from \Cref{thm:tds_learning_lipschitz_appendix} with $R=O(1)$. Then, \Cref{algorithm:tds-via-kernels} learns the class $\mathcal{F}$ in the TDS regression setting up to excess error $\epsilon$ and probability of failure $0.1$ in time and sample complexity $\poly(d,2^{O(1/\epsilon)},M)$.
\end{corollary}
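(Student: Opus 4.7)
The plan is to cast the single ReLU class as a depth-$2$, single-hidden-unit Lipschitz network fitting the template of \Cref{def:Neural Network}, and then invoke the $k=1$ case of \Cref{thm:tds_learning_lipschitz_appendix}. Concretely, for any $\vw\in\R^d$ with $\twonorm{\vw}=1$, write $\x\mapsto \max(0,\vw\cdot\x)=f_2(\x)$ with $W^{(1)}=\vw^\top\in\R^{1\times d}$, $W^{(2)}=1\in\R^{1\times 1}$, and activation $\sigma=\mathsf{ReLU}$. Then $\mathcal{F}$ is a subclass of the depth-$t$ Lipschitz-net class with $t=2$, one hidden neuron ($k=1$), Lipschitz constant $L=1$, $\normtwoinf{W^{(1)}}=\twonorm{\vw}=1$, and $\norm{\W}_1=\norm{W^{(2)}}_1=1$, so we may take $W=1$.

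Next, I would substitute these constants into the $k=1$ branch of \Cref{thm:tds_learning_lipschitz_appendix}. The approximation/representation degree becomes
\[
\ell \;=\; O\!\left(\normtwoinf{W^{(1)}}(WL)^{t-1}R/\epsilon\right) \;=\; O\!\left(R/\epsilon\right) \;=\; O(1/\epsilon),
\]
using the hypothesis $R=O(1)$. The theorem's running time bound is
\[
\poly\!\left(d,\tfrac{1}{\epsilon},C^{\ell},M,\log(1/\delta)^{\ell},(2R)^{O(\ell)}\right).
\]

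Now I would simplify each factor under the standing assumptions: $\delta=0.1$ is a constant, $R=O(1)$, and the constant $C$ from the $C$-hypercontractivity of $\Dtrainx$ is absolute. Hence $C^{\ell}=2^{O(\ell)}=2^{O(1/\epsilon)}$, $\log(1/\delta)^{\ell}=O(1)^{O(1/\epsilon)}=2^{O(1/\epsilon)}$, and $(2R)^{O(\ell)}=2^{O(1/\epsilon)}$. Collecting everything yields total time and sample complexity $\poly(d,2^{O(1/\epsilon)},M)$, as claimed. The soundness and completeness guarantees are inherited verbatim from \Cref{thm:tds_learning_lipschitz_appendix}.

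There is essentially no nontrivial obstacle: the corollary is a direct instantiation of the $k=1$ specialization of \Cref{thm:tds_learning_lipschitz_appendix}. The only point that deserves care is bookkeeping the dependence on $R$ in $\ell$, so that one correctly uses the $k=1$ improvement from \Cref{thm:approx_lipschitz_nets} (which replaces the $(k+\ell)^{O(\ell)}$ representation norm by $2^{O(\ell)}$) and thereby avoids spurious polynomial-in-$k$ blow-ups that would otherwise appear inside the exponent.
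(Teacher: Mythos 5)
Your proof is correct and takes essentially the same approach as the paper's, which is just a terse "immediately follows from \Cref{thm:tds_learning_lipschitz_appendix} by setting $t=2$, $\W=(\vw)$, and the activation to ReLU." You have filled in the bookkeeping explicitly (the $W^{(2)}=1$ output weight, $k=1$, $L=1$, $\normtwoinf{W^{(1)}}=1$, $W=1$, $\ell=O(1/\epsilon)$) and correctly flagged the one spot that genuinely matters: using the $k=1$ branch of \Cref{thm:tds_learning_lipschitz_appendix} (backed by the $k=1$ improvement in \Cref{thm:approx_lipschitz_nets}), since the generic $(2R(k+\ell))^{O(\ell)}$ bound evaluated at $k=1$ would give $\ell^{O(\ell)}=2^{O((1/\epsilon)\log(1/\epsilon))}$ rather than the claimed $2^{O(1/\epsilon)}$.
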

\begin{proof}
    The proof immediately follows from \Cref{thm:tds_learning_lipschitz_appendix} by setting $t=2,\W=(\vw)$ and the activation to be the ReLU function. 
\end{proof}

In particular, this implies that the class of ReLUs is TDS learnable in polynomial time when $\epsilon<O(1/\log d)$.

\subsection{Applications for Unbounded Distributions}\label{sec:tds_uniform_appendix}

We are now ready to state our theorem for TDS learning neural networks with sigmoid activations.
\begin{theorem}[TDS Learning for Nets with Sigmoid Activation and Strictly Subexponential Marginals]
\label{thm:tds_learning_sigmoid_subexp_appendix}
Let $\mathcal{F}$ on $\R^{d}$ be the class of  neural network with sigmoid activations, depth $t$ and weight matrices $\W=(W^{(1)},\ldots,W^{(t)})$ such that $\norm{W}_1\leq W$.  Let $\epsilon\in (0,1)$. Suppose the training and test distributions $\Dtrain,\Dtest$ over $\R^{d}\times \R$ are such that the following are true:
\begin{enumerate}
    \item $\Dtrainx$ is $\gamma$-strictly subexponential,
    \item The training and test labels are bounded in $[-M,M]$ for some $M\geq 1$.
\end{enumerate}
Then, \Cref{algorithm:uniform-approx} learns the class $\mathcal{F}$ in the TDS regression up to excess error $\epsilon$ and probability of failure $\delta$. The time and sample complexity is at most \[\poly(d^{s},\log(1/\delta)^s),\] where $s=\left(k\log M\cdot (\normtwoinf{W^{(1)}}W^{t-2})\cdot (t\log(W/\epsilon))^{t-1}\right)^{O(\frac{1}{\gamma})}.$
\end{theorem}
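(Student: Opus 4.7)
The plan is to reduce the theorem to \Cref{theorem:tds-via-uniform} by verifying the four conditions of \Cref{assumption:unbounded-main} for the class $\mathcal{F}$ of depth-$t$ sigmoid nets with $\norm{\W}_1 \le W$; conditions (3) and (4) are immediate from the hypotheses of the theorem we are proving, so the real work is in (1) and (2). For condition (1), I would observe that any $f \in \mathcal{F}$ depends on $\x$ only through $W^{(1)}\x$, so $f$ is a $k$-subspace junta (taking $W$ to be an orthonormal basis for the row span of $W^{(1)}$ and $g$ the induced function on $\R^k$, absorbing the row norms of $W^{(1)}$ into subsequent layers). Then \Cref{thm:approx_sigmoid_nets} directly produces an $(\epsilon, R)$-uniform approximation polynomial of degree
\[
\ell = O\bigl((R\log R)\cdot (\normtwoinf{W^{(1)}} W^{t-2}) \cdot (t\log(W/\epsilon))^{t-1}\bigr),
\]
which matches the required form $\ell = R\log R \cdot g_\mathcal{F}(\epsilon)$ with $g_\mathcal{F}(\epsilon) = (\normtwoinf{W^{(1)}} W^{t-2}) \cdot (t\log(W/\epsilon))^{t-1}$.

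For condition (2), I would exploit the bounded range of the sigmoid: because $\sigma$ takes values in $[0,1]$, we have $\twonorm{\sigma(f_i(\x))} \le \sqrt{s_i}$ at every hidden layer $i\ge 1$, so $\twonorm{f_{i+1}(\x)} \le \twonorm{W^{(i+1)}}\cdot\sqrt{s_i} \le W\sqrt{s_i}$. Iterating, the final scalar output $f_t(\x)$ is absolutely bounded by a quantity depending on $W$ and the layer widths but independent of $\x$ and $R$. With widths treated as constants this gives $r = O(W)$, hence $\log r = O(\log W)$. Crucially, this bound uses only the pointwise boundedness of $\sigma$ on all of $\R$, so it holds everywhere and in particular on the ball of radius $R$ chosen by the algorithm.

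With all four parts of \Cref{assumption:unbounded-main} verified, I would invoke \Cref{theorem:tds-via-uniform}, whose runtime is $\poly(d^s, \log(1/\delta)^\ell, 1/\epsilon)$ with $s = \bigl(k\, g_\mathcal{F}(\epsilon)\, \log(r)\, \log(M/\epsilon)\bigr)^{O(1/\gamma)}$. Substituting $g_\mathcal{F}$ and $\log r = O(\log W)$, and absorbing the extra $\log W$ factor into the already-present $(t\log(W/\epsilon))^{t-1}$ term, yields the claimed bound $s = (k\log M \cdot \normtwoinf{W^{(1)}} W^{t-2} \cdot (t\log(W/\epsilon))^{t-1})^{O(1/\gamma)}$. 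I do not anticipate any real obstacle: the technical heavy lifting (the almost-linear-in-$R$ degree uniform approximator, and the moment-matching tester whose error outside the ball is controlled via strictly-subexponential tails) has already been done in \Cref{thm:approx_sigmoid_nets} and \Cref{theorem:tds-via-uniform} respectively. The only subtle step is the bookkeeping to confirm that the class-dependent factor $g_\mathcal{F}(\epsilon)$ extracted from \Cref{thm:approx_sigmoid_nets} is indeed $\epsilon$-poly-logarithmic, which is exactly what makes the approximator degree almost linear in $R$ — precisely the condition that enables the tail bound of \Cref{theorem:tds-via-uniform} to kick in and produce the claimed runtime.
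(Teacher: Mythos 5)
Your proposal is correct and follows essentially the same path as the paper: verify the four parts of \Cref{assumption:unbounded-main} and then invoke \Cref{theorem:tds-via-uniform}, with \Cref{thm:approx_sigmoid_nets} supplying the almost-linear-in-$R$ uniform approximator and the decomposition $\ell = R\log R\cdot g_\gF(\epsilon)$. The one place where you diverge from the paper is in bounding $r = \sup_{\norm{W\x}_2\le R}|f(\x)|$. You use the fact that $\sigma$ has range $[0,1]$, so the last linear layer gives $|f_t(\x)| \le \norm{W^{(t)}}_1 \le W$ pointwise everywhere; this gives an $R$-independent bound $r = O(W)$, hence $\log r = O(\log W)$, which is absorbed into the $(t\log(W/\epsilon))^{t-1}$ factor. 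The paper instead appeals to \Cref{lem:neural_net_lipschitz} together with $|\sigma(0)|\le 1$ to get $r \le \poly(Rk\normtwoinf{W^{(1)}}W^{t-2})$, an $R$-dependent bound; since only $\log r$ enters the exponent $s$, and $R$ is itself polynomial in the other parameters (up to a self-referential $\log R$ that is resolved by a trivial fixed-point argument), both routes yield the same final $s$. Your $r$-bound is tighter and avoids the mild circularity in $R\leftrightarrow \log r$, so it is a clean minor improvement, but it does not change the complexity and the rest of the argument is identical to the paper's.
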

\begin{proof}
From \Cref{thm:approx_sigmoid_nets}, we have that $\mathcal{F}$ there is an $(\epsilon, R)$-uniform approximation polynomial for $f$ with degree $\ell=O\left((R\log R)\cdot (\normtwoinf{W^{(1)}}W^{t-2})\cdot (t\log(W/\epsilon))^{t-1}\right)$. Here, let $g_{\mathcal{F}}(\epsilon)\coloneq  (\normtwoinf{W^{(1)}}W^{t-2})\cdot (t\log(W/\epsilon))^{t-1}$. We also have that $r=\sup_{\twonorm{\x}\leq R,f\in \mathcal{F}}|f(\x)|\leq \poly(Rk\normtwoinf{W^{(1)}}W^{t-2})$ from the Lipschitzness of the sigmoid nets (\Cref{lem:neural_net_lipschitz}) and the fact that the sigmoid evaluated at $0$ has value $1$. The theorem now directly follows from \Cref{theorem:tds-via-uniform}.
\end{proof}

We now state our theorem on TDS learning neural networks with arbitrary Lipschitz activations.

\begin{theorem}[TDS Learning for Nets with Lipschitz Activation with strictly subexponential marginals]
\label{thm:tds_learning_lipschitz_subexp_appendix}
Let $\mathcal{F}$ on $\R^{d}$ be the class of  neural network with $L$-Lipschitz  activations, depth $t$ and weight matrices $\W=(W^{(1)},\ldots,W^{(t)})$ such that $\norm{W}_1\leq W$.  Let $\epsilon\in (0,1)$. Suppose the training and test distributions $\Dtrain,\Dtest$ over $\R^{d}\times \R$ are such that the following are true:
\begin{enumerate}
    \item $\Dtrainx$ is $\gamma$-strictly subexponential,
    \item The training and test labels are bounded in $[-M,M]$ for some $M\geq 1$.
\end{enumerate}
Then, \Cref{algorithm:uniform-approx} learns the class $\mathcal{F}$ in the TDS regression up to excess error $\epsilon$ and probability of failure $\delta$. The time and sample complexity is at most \[\poly(d^{s},\log(1/\delta^s),\] where $s=\left(k\log M\cdot\normtwoinf{W^{(1)}}(WL)^{t-1}/\epsilon\right)^{O(\frac{1}{\gamma})}$.
\end{theorem}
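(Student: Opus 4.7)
The plan is to instantiate the general theorem on TDS learning via uniform approximation (\Cref{theorem:tds-via-uniform}) using the polynomial approximation result for Lipschitz nets (\Cref{thm:approx_lipschitz_nets}), mirroring the structure of the sigmoid-net case (\Cref{thm:tds_learning_sigmoid_subexp_appendix}). The only nontrivial work is to verify the parts of \Cref{assumption:unbounded-main} that depend on the concept class, since the remaining assumptions (strictly subexponential marginal and bounded labels) are given by hypothesis.

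First, I observe that every $f \in \gF$ is a $k$-subspace junta, where the relevant subspace is the row-space of $W^{(1)}$: writing $W^{(1)} = UV$ with $V\in\R^{k\times d}$ orthonormal, we have $f(\x) = g(V\x)$ for a depth-$t$ net $g$ on $\R^k$ with first-layer matrix $U$. Next, I invoke \Cref{thm:approx_lipschitz_nets} to produce, for any target radius $R\ge 1$, an $(\epsilon,R)$-uniform approximation polynomial for $f$ of degree $\ell = O\Paren{\normtwoinf{W^{(1)}}(WL)^{t-1} R k\sqrt{k}/\epsilon}$. This fits the form $\ell \le R\log R \cdot g_{\gF}(\epsilon)$ demanded by \Cref{assumption:unbounded-main}(1) with
\[
g_{\gF}(\epsilon) = O\Paren{\normtwoinf{W^{(1)}}(WL)^{t-1} k\sqrt{k}/\epsilon}.
\]

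To check \Cref{assumption:unbounded-main}(2), I bound $r_f = \sup_{\|V\x\|_2\le R}|f(\x)|$ via \Cref{lem:neural_net_lipschitz}: any $f\in \gF$ is $\sqrt{k}\normtwoinf{W^{(1)}}(WL)^{t-1}$-Lipschitz, so $|f(\x)| \le |f(0)| + \sqrt{k}\normtwoinf{W^{(1)}}(WL)^{t-1}\cdot R$. The term $|f(0)|$ can be bounded by unrolling the recursion in \Cref{def:Neural Network} and using $|\sigma(0)|\le 1$, yielding $|f(0)| \le \poly(W, L, t)$. Therefore $r_f \le r = \poly\Paren{R, k, \normtwoinf{W^{(1)}}, W, L}$, a quantity that depends only on class parameters, as required.

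With the assumption verified, I apply \Cref{theorem:tds-via-uniform}: the algorithm learns $\gF$ in the TDS regression model with excess error $O(\epsilon)$ and failure probability $\delta$, and the runtime is $\poly(d^s, 1/\epsilon, \log(1/\delta)^\ell)$ where $s = (k\, g_{\gF}(\epsilon)\log(r)\log(M/\epsilon))^{O(1/\gamma)}$. Substituting the $g_{\gF}(\epsilon)$ computed above and noting that $\log(r)$ is polylogarithmic in $R, k, W, L, \normtwoinf{W^{(1)}}$ (and hence gets absorbed into the $O(1/\gamma)$ exponent along with the factor of $\sqrt{k}$), the final bound simplifies to $s = \Paren{k \log M \cdot \normtwoinf{W^{(1)}}(WL)^{t-1}/\epsilon}^{O(1/\gamma)}$, matching the claim. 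The main obstacle is the bookkeeping required to absorb the various polynomial factors ($k^{3/2}$, $\log r$, the implicit choice of $R$ made inside \Cref{theorem:tds-via-uniform}) into the single exponent $O(1/\gamma)$; this is purely algebraic and essentially identical to the calculation performed for the sigmoid case, so I do not expect any genuine difficulty beyond careful parameter tracking.
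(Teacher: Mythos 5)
Your proposal is correct and follows essentially the same route as the paper: invoke \Cref{thm:approx_lipschitz_nets} to get the $(\eps,R)$-uniform approximator with degree $O(Rk\sqrt{k}\normtwoinf{W^{(1)}}(WL)^{t-1}/\eps)$, identify $g_\gF(\eps)$, bound $r$ via the Lipschitz constant from \Cref{lem:neural_net_lipschitz} together with $|\sigma(0)|\le 1$, and then apply \Cref{theorem:tds-via-uniform}. The only cosmetic difference is that you spell out the subspace-junta decomposition $W^{(1)}=UV$ and the bound on $|f(0)|$ explicitly, which the paper leaves implicit.
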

\begin{proof}
    From \Cref{thm:approx_lipschitz_nets}, we have that $\mathcal{F}$ there is an $(\epsilon, R)$-uniform approximation polynomial for $f$ with degree $\ell=O\left(Rk\sqrt{k}\cdot \normtwoinf{W^{(1)}}(WL)^{t-1}/\epsilon\right)$. Here, let $g_{\mathcal{F}}(\epsilon)\coloneq  k\sqrt{k}\normtwoinf{W^{(1)}}(WL)^{t-1}/\epsilon$. We also have that $r=\sup_{\twonorm{\x}\leq R,f\in \mathcal{F}}|f(\x)|\leq \poly(Rk\normtwoinf{W^{(1)}}W^{t-2})$ from the Lipschitz constant(\Cref{lem:neural_net_lipschitz}) and the fact that the each individual activation has value at most $1$ when evaluated at $0$ (see \Cref{def:Neural Network}. The theorem now directly follows from \Cref{theorem:tds-via-uniform}.
\end{proof}

\section{Assumptions on the Labels}\label{appendix:label-assumptions}

Our main theorems involve assumptions on the labels of both the training and test distributions. Ideally, one would want to avoid any assumptions on the test distribution. However, we demonstrate that this is not possible, even when the training marginal and the training labels are bounded, and the test labels have bounded second moment. On the other hand, we show that obtaining algorithms that work for bounded labels is sufficient even in the unbounded labels case, as long as some moment of the labels (strictly higher than the second moment) is bounded.

We begin with the lower bound, which we state for the class of linear functions, but would also hold for the class of single ReLU neurons, as well as other unbounded classes.

\begin{proposition}[Label Assumption Necessity]\label{proposition:bounded-labels-necessary}
    Let $\gF$ be the class of linear functions over $\R^d$, i.e., $\gF = \{\x\mapsto \vw\cdot \x: \vw\in\R^d, \|\vw\|_2 \le 1\}$. Even if we assume that the training marginal is bounded within $\{\x\in\R^d: \|\x\|_2\le 1\}$, that the training labels are bounded in $[0,1]$, and that for the test labels we have $\E_{y\sim \Dtest_y}[y^2] \le Y$ where $Y>0$, no TDS regression algorithm with finite sample complexity can achieve excess error less than $Y/4$ and probability of failure less than $1/4$ for $\gF$.
\end{proposition}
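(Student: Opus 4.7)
The plan is a standard two-instance indistinguishability lower bound adapted to TDS regression. I will exhibit two instances, call them Scenario $A$ and Scenario $B$, sharing the same training distribution $\Dtrain$ and the same test marginal $\Dtestx$, so that any algorithm sees identical labelled training and unlabelled test samples in both and therefore produces the same output distribution. Taking $d = 1$, I set $\Dtrainx = \Dtestx$ to be the point mass at $\x = 1$ (which satisfies the unit-ball marginal condition) and pick $\Dtrain$ with training labels contained in $[0,1]$. Since $\Dtrainx = \Dtestx$ in both instances, completeness forces the algorithm to accept each with probability at least $3/4$.

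The core ingredient is then a pair of test-label distributions, one per scenario, both having $\E[y^2] = Y$ but with conflicting optimal predictors at $\x = 1$. After computing $\opt = 0$ and, in each scenario, $\lambda = \min_{|\vw|\le 1}[\gL_{\Dtrain}(\vw) + \gL_{\Dtest}(\vw)]$ in closed form, the soundness bound $\gL_{\Dtest}(h) \le \opt + \lambda + \varepsilon$ reduces, for any hypothesis $h$ determined by $c := h(1)$, to an interval constraint on $c$. I would then calibrate the two test-label distributions so that the two resulting intervals are disjoint precisely when $\varepsilon < Y/4$. Combined with the indistinguishability of the inputs, a union bound over the two soundness events together with the completeness-forced acceptance yields failure probability at least $1/4$ in at least one scenario.

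The main obstacle is pinning down the test-label distributions so that the $Y/4$ gap emerges exactly. A naive symmetric deterministic choice such as $y = +\sqrt{Y}$ in $A$ and $y = -\sqrt{Y}$ in $B$ leaves the trivial hypothesis $c = 0$ sitting inside both admissible intervals and thus fails to obstruct the algorithm. The construction therefore needs to exploit the second-moment constraint $\E[y^2] \le Y$ asymmetrically, for example via calibrated two-point stochastic labels that shift the optimal predictors in the two scenarios apart by just enough to force an empty intersection of admissible hypotheses when $\varepsilon < Y/4$. The quantitative step will be a small arithmetic optimisation verifying that the slack between these admissible intervals is exactly $Y/4$, after which the indistinguishability reduction completes the argument.
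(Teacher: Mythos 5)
Your plan has a structural flaw that no amount of calibration of the test-label distribution can repair: by fixing $\Dtrainx=\Dtestx$ you place yourself in exactly the regime where the TDS benchmark $\opt+\lambda$ is \emph{achievable} by the trivial strategy ``always accept and output the training ERM.'' Indeed, let $f^*$ attain $\lambda = \gL_{\Dtrain}(f^*)+\gL_{\Dtest}(f^*)$ and let $h$ satisfy $\gL_{\Dtrain}(h)\le\opt$. Since the two marginals agree,
\[
\gL_{\Dtest}(h)\ \le\ \gL_{\Dtest}(f^*)+\|h-f^*\|_{\Dtestx}
\ =\ \gL_{\Dtest}(f^*)+\|h-f^*\|_{\Dtrainx}
\ \le\ \gL_{\Dtest}(f^*)+\gL_{\Dtrain}(f^*)+\gL_{\Dtrain}(h)
\ \le\ \lambda+\opt,
\]
so every ERM-type hypothesis satisfies soundness automatically, regardless of how you choose the two test-label distributions. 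The term $\lambda$ is defined precisely so that when there is no marginal shift it absorbs any conflict between training and test labels; this is why your symmetric $\pm\sqrt{Y}$ attempt leaves $c=0$ admissible, and the same happens for any ``asymmetric two-point'' variant. Your closing arithmetic optimisation would come up empty: the intersection of the two admissible intervals for $c=h(1)$ always contains the minimiser of the training loss.

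The paper's construction sidesteps this by letting $\Dtestx$ differ from $\Dtrainx$, but only by total variation $p$, so that with $m\ll 1/p$ samples the shift is statistically invisible and completeness still forces acceptance. Concretely $\Dtrain$ is the point mass at $(\mathbf{0},0)$, and the test distribution puts mass $1-p$ on $(\mathbf{0},0)$ and mass $p$ on $\x=\frac{\sqrt{Y}}{\sqrt{p}}\vw$ with label either $\frac{\sqrt{Y}}{\sqrt{p}}$ (consistent with $\x\mapsto\vw\cdot\x$) or $0$ (consistent with $\x\mapsto 0$). Each scenario has a linear function realising every example, so $\opt=\lambda=0$ and the benchmark gives no slack. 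The $Y/4$ (i.e.\ $\sqrt{Y}/2$) gap then comes from the two soundness constraints $p\,(h(\tfrac{\sqrt{Y}}{\sqrt{p}}\vw))^2$ versus $p\,(h(\tfrac{\sqrt{Y}}{\sqrt{p}}\vw)-\tfrac{\sqrt{Y}}{\sqrt{p}})^2$, which cannot both be small. If you want to salvage your proposal, you must move away from $\Dtrainx=\Dtestx$: the undetectable-marginal-shift idea is not an optimisation detail, it is the load-bearing ingredient.
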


The proof is based on the observation that because we cannot make any assumption on the test marginal, the test distribution could take some very large value with very small probability, while still being consistent with some linear function. The training distribution, on the other hand, gives no information about the ground truth and is information theoretically indistinguishable from the constructed test distribution. Therefore, the tester must accept and its output will have large excess error. The bound on the second moment of the labels does imply a bound on excess error, but this bound cannot be made arbitrarily small by drawing more samples.

\begin{proof}[Proof of \Cref{proposition:bounded-labels-necessary}]
    Suppose, for contradiction that we have a TDS regression algorithm for $\gF$ with excess error $\eps < Y/4$ and probability of failure $\delta<1/4$. Let $m\in \sN$ be the sample complexity of the algorithm and $p\in(0,1)$ such that $m \ll 1/p$. We consider three distributions over $\R^d\times \R$. First $\gD^{(1)}$ outputs $(0,0)$ with probability $1$. Second, $\gD^{(2)}$ outputs $(0,0)$ with probability $1-p$ and $(\frac{\sqrt{Y}}{\sqrt{p}}\vw, \frac{\sqrt{Y}}{\sqrt{p}})$ with probability $p$, for some $\vw\in\R^d$ with $\|\vw\|_2 = 1$. Third, $\gD^{(3)}$ outputs $(0,0)$ with probability $1-p$ and $(\frac{\sqrt{Y}}{\sqrt{p}}\vw, 0)$ with probability $p$.

    We consider two instances of the TDS regression problem. The first instance corresponds to the case $\Dtrain = \gD^{(1)}$ and $\Dtest = \gD^{(2)}$. The second corresponds to the case $\Dtrain = \gD^{(1)}$ and $\Dtest = \gD^{(3)}$. Note that the assumptions we asserted regarding the test distribution and the test labels are true for both instances. For $\gD^{(2)}$, in particular, we have $\E_{y\sim \gD^{(2)}_y}[y^2] = p \cdot (\sqrt{Y}/\sqrt{p})^2 = Y$. Moreover, in each of the cases, there is a hypothesis in $\gF$ that is consistent with all of the examples (either the hypothesis $\x\mapsto 0$ or $\x\mapsto \vw\cdot \x$), so $\opt:= \min_{f\in\gF}[\gL_{\Dtrain}(f)] = 0 = \min_{f'\in\gF}[\gL_{\Dtrain}(f')+\gL_{\Dtest}(f')] =: \lambda$.

    Note that the total variation distance between $\gD^{(1)}$ and $\gD^{(2)}$ is $p$ and similarly between $\gD^{(1)}$ and $\gD^{(3)}$. Therefore, by the completeness criterion, as well as the fact that sampling only increases total variation distance at a linear rate, i.e., $\tv((\gD)^{\otimes m},(\gD')^{\otimes m}) \le m \cdot \tv(\gD,\gD') \le m\cdot p$, we have that in each of the two instances, the algorithm will accept with probability at least $1-m\cdot p-\delta$ (due to the definition of total variation distance\footnote{We know that the algorithm would accept with probability at least $1-\delta$ if the set of test examples was drawn from $(\Dtrainx)^{\otimes m}$. Since $(\Dtestx)^{\otimes m}$ is $(mp)$-close to $(\Dtrainx)^{\otimes m}$, no algorithm can have different behavior if we substitute $(\Dtrainx)^{\otimes m}$ with $(\Dtestx)^{\otimes m}$ except with probability $m\cdot p$. Hence, any algorithm must accept with probability at least $1-m\cdot p-\delta$.}).

    Suppose that the algorithm accepts in both instances (which happens w.p. at least $1-2\delta-2mp$). By the soundness criterion, with overall probability at least $1-4\delta-2mp$, we have the following.
    \begin{align*}
        p\cdot (h(\x) - 0)^2 &< Y/4 \\
        p\cdot (h(\x) - \sqrt{Y}/\sqrt{p})^2 &< Y/4
    \end{align*}
    The inequalities above cannot be satisfied simultaneously, so we have arrived to a contradiction. It only remains to argue that $1-4\delta-2mp >0$, which is true if we choose $p<\frac{1-4\delta}{2m}$. Therefore, such a TDS regression algorithm cannot exist.
\end{proof}

The lower bound of \Cref{proposition:bounded-labels-necessary} demonstrates that, in the worst case, the best possible excess error scales with the second moment of the distribution of the test labels. In contrast, we show that a bound on any strictly higher moment is sufficient.

\begin{corollary}\label{corollary:label-moment-bound-assumption-suffices}
    Suppose that for any $M>0$, we have an algorithm that learns a class $\gF$ in the TDS setting up to excess error $\eps\in(0,1)$, assuming that both the training and test labels are bounded in $[-M,M]$. Let $T(M)$ and $m(M)$ be the corresponding time and sample complexity upper bounds. 
    
    Then, in the same setting, there is an algorithm that learns $\gF$ up to excess error $4\eps$ under the relaxed assumption that for both training and test labels we have $\E[y^2g(|y|)]\le Y$ for some $Y>0$ and $g$ some strictly increasing, positive-valued and unbounded function. The corresponding time and sample complexity upper bounds are $T(g^{-1}(Y/\eps^2))$ and $m(g^{-1}(Y/\eps^2))$.
\end{corollary}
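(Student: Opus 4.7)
The plan is to reduce to the bounded-label case by \emph{preprocessing the labels via clipping}. Set the threshold $M = g^{-1}(Y/\eps^2)$ (well-defined because $g$ is strictly increasing, positive-valued, and unbounded). Define the clipped labeled distributions $\widetilde\Dtrain, \widetilde\Dtest$ obtained by replacing each labeled example $(\x,y)$ with $(\x,\bar y)$, where $\bar y = \clip_M(y)$. These distributions share the feature marginals of $\Dtrain, \Dtest$ and have labels in $[-M,M]$. Since the learner observes training labels directly (and never observes test labels), it can simulate samples from $\widetilde\Dtrain$ and from the marginal of $\widetilde\Dtest$ at no cost, and then invoke the assumed bounded-label TDS algorithm as a black box.

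The key quantitative step is bounding the bias introduced by clipping. By hypothesis, $\E[y^2 g(|y|)] \le Y$ for both $\Dtrain_y$ and $\Dtest_y$. On the event $\{|y|>M\}$ we have $g(|y|) > g(M) = Y/\eps^2$, so
\begin{equation*}
\E[(y - \bar y)^2] \;\le\; \E\bigl[y^2 \ind\{|y|>M\}\bigr] \;\le\; \frac{1}{g(M)}\,\E\bigl[y^2 g(|y|)\bigr] \;\le\; \eps^2.
\end{equation*}
In particular, the $L^2$ triangle inequality applied to $y - h(\x) = (y-\bar y)+(\bar y - h(\x))$ yields $|\gL_{\gD}(h) - \gL_{\widetilde\gD}(h)| \le \eps$ for any hypothesis $h$ and any $\gD\in\{\Dtrain,\Dtest\}$, and the same inequality holds with $h$ replaced by any $f\in\gF$.

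I would then transfer soundness by chaining these clipping-bias bounds with the guarantee of the inner algorithm. Writing $\opt$, $\lambda$ (resp.\ $\widetilde\opt,\widetilde\lambda$) for the benchmark quantities on the original (resp.\ clipped) distributions, choosing the same witnesses in both problems gives $\widetilde\opt \le \opt + \eps$ and $\widetilde\lambda \le \lambda + 2\eps$. When the bounded-label algorithm accepts with its output $h$ satisfying $\gL_{\widetilde\Dtest}(h) \le \widetilde\opt + \widetilde\lambda + \eps$, one more application of the clipping bias on the left-hand side yields $\gL_{\Dtest}(h) \le \opt + \lambda + O(\eps)$. The precise $4\eps$ bound claimed in the statement is recovered by running the black-box algorithm with a slightly rescaled accuracy parameter (e.g., $\eps/5$) and the matching $M = g^{-1}(25Y/\eps^2)$. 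Completeness is immediate: if $\Dtrainx = \Dtestx$, then also $\widetilde\Dtrainx = \widetilde\Dtestx$, so the inner algorithm accepts with probability at least $1-\delta$.

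The complexity bounds follow by plugging in $M = g^{-1}(Y/\eps^2)$ into $T(M)$ and $m(M)$, since the only overhead introduced by the reduction is the constant-time clipping step. I do not anticipate a serious obstacle: the only subtlety is ensuring the moment assumption with some $g$ strictly exceeding the identity $y^2$-growth (rather than merely a second-moment bound) is actually used — this is exactly what makes $Y/g(M) \to 0$ as $M\to\infty$ and thereby drives the clipping bias to $\eps$, consistent with the lower bound of \Cref{proposition:bounded-labels-necessary} showing a bare second-moment assumption is insufficient.
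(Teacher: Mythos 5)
Your proof is correct and takes essentially the same route as the paper's own: clip the labels at $M = g^{-1}(Y/\eps^2)$, use the moment condition to bound the $L^2$ clipping bias by $\eps$, transfer the inner algorithm's guarantee term-by-term via the triangle inequality, and inherit completeness from the fact that clipping leaves the feature marginals untouched. Your careful count yielding $5\eps$ is accurate --- four triangle-inequality transfers (for $\gL_{\Dtest}(h)$, for $\opt$, and twice for $\lambda$) on top of the inner algorithm's own $\eps$ --- and the paper's stated constant of $4\eps$ seems to omit the inner $\eps$; your rescaling fix is sound, at the cost of only a harmless constant inside $g^{-1}$ in the complexity bounds.
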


The proof is based on the observation that the effect of clipping on the labels, as measured by the squared loss, can be controlled by drawing enough samples, whenever a moment that is strictly higher than the second moment is bounded.

\begin{lemma}\label{lemma:label-moment-bound-suffices}
    Let $Y>0$ and $g:(0,\infty) \to (0,\infty)$ be strictly increasing and surjective. Let $y$ be a random variable over $\R$ such that $\E[y^{2}g(|y|)] \le Y$. Then, for any $\eps\in(0,1)$, if $M \ge g^{-1}(Y/\epsilon^2)$, we have $\sqrt{\E[(y-\clip_M(y))^2]} \le \epsilon$.
\end{lemma}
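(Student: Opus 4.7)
The plan is to bound the squared clipping error by the tail contribution $\E[y^2 \mathbf{1}\{|y| > M\}]$, and then convert this into a bound involving the hypothesis $\E[y^2 g(|y|)] \le Y$ by inserting the factor $g(|y|)/g(M) \ge 1$ on the tail event.

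First, I would observe that the clipping function $\clip_M$ is the identity on $[-M,M]$, so $y - \clip_M(y) = 0$ whenever $|y| \le M$. On the complementary event $|y| > M$, we have $\clip_M(y) = M\cdot \sign(y)$, hence $|y - \clip_M(y)| = |y| - M \le |y|$, giving $(y - \clip_M(y))^2 \le y^2 \cdot \mathbf{1}\{|y|>M\}$ pointwise. Taking expectations yields
\[
\E[(y-\clip_M(y))^2] \le \E[y^2 \cdot \mathbf{1}\{|y|>M\}].
\]

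Next, on the event $\{|y|>M\}$, the fact that $g$ is strictly increasing implies $g(|y|) > g(M) > 0$, so $1 \le g(|y|)/g(M)$. Multiplying the integrand by this factor and then dropping the indicator gives
\[
\E[y^2 \cdot \mathbf{1}\{|y|>M\}] \le \frac{1}{g(M)}\E[y^2 g(|y|) \cdot \mathbf{1}\{|y|>M\}] \le \frac{\E[y^2 g(|y|)]}{g(M)} \le \frac{Y}{g(M)}.
\]
Finally, the assumption $M \ge g^{-1}(Y/\eps^2)$ combined with monotonicity of $g$ yields $g(M) \ge Y/\eps^2$, hence $Y/g(M) \le \eps^2$. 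Taking square roots closes the argument.

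There is no significant obstacle here; the only subtlety is making sure that $g$ being strictly increasing and surjective onto $(0,\infty)$ makes $g^{-1}$ well-defined and monotone, which is what licenses the last step. The lemma then feeds into \Cref{corollary:label-moment-bound-assumption-suffices} by choosing $M = g^{-1}(Y/\eps^2)$: this controls the extra error introduced when the black-box TDS algorithm is run on the clipped labels, incurring at most $\eps$ additional error on each of the training optimum, the test optimum, and the joint optimum, for a total excess of at most $4\eps$ (matching the $4\eps$ blow-up claimed in the corollary).
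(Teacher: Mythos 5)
Your proof is correct and follows essentially the same chain of inequalities as the paper's: bound the clipping error by the tail second moment, insert the factor $g(|y|)/g(M)\ge 1$ on the tail event, and invert $g$ at the end. (Your writeup is in fact slightly cleaner — the paper's intermediate justification contains a sign typo writing $(y-\clip_M(y))^2\ge y^2$ where $\le$ is meant — but the argument is the same.)
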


\begin{proof}[Proof of \Cref{lemma:label-moment-bound-suffices}]
    We have that $\E[(y-\clip_M(y))^2] \le \E[y^2 \ind\{|y|> M\}]$, because $y\ge\clip_M(y)$ and $y$, $\clip_M(y)$ always have the same sign, so $(y-\clip_M(y))^2 \ge y^2$ and also $(y-\clip_M(y))^2 = 0$ if $|y|\le M$. Since $g(|y|)$ is non-zero whenever $y>0$, we have $\E[y^2 \ind\{|y|> M\}] = \E[y^2 \cdot \frac{g(|y|)}{g(|y|)} \cdot \ind\{|y|> M\}]$. We now use the fact that $g$ is increasing to conclude that $\E[y^2 \ind\{|y|> M\}] \le \frac{\E[y^2g(|y|)]}{g(M)} \le \frac{Y}{g(M)}$. By choosing $M \ge g^{-1}(Y/\eps^2)$, we obtain the desired bound.
\end{proof}

We are now ready to prove \Cref{corollary:label-moment-bound-assumption-suffices}, by reducing TDS learning with moment-bounded labels to TDS learning with bounded labels.

\begin{proof}[Proof of \Cref{corollary:label-moment-bound-assumption-suffices}]
    The idea is to reduce the problem under the relaxed label assumptions to a corresponding bounded-label problem for $M = g^{-1}(Y/\eps^2)$. In particular, consider a new training distribution $\clip_M\circ\Dtrain$ and a new test distribution $\clip_M\circ\Dtest$, where the samples are formed by drawing a sample $(\x,y)$ from the corresponding original distribution and clipping the label $y$ to $\clip_M(y)$. Note that whenever we have access to i.i.d. examples from $\Dtrain$, we also have access to i.i.d. examples from $\clip_M\circ\Dtrain$ and similarly for $(\Dtestx, \clip_M \circ\Dtestx)$. Therefore, we may solve the corresponding TDS problem for $\clip_M\circ\Dtrain$ and $\clip_M\circ\Dtest$, to either reject or obtain some hypothesis $h$ such that \[
    \gL_{\clip_M\circ\Dtest}(h) \le \min_{f\in\gF}[\gL_{\clip_M\circ\Dtrain}(f)] + \min_{f'\in\gF}[\gL_{\clip_M\circ\Dtrain}(f')+\gL_{\clip_M\circ\Dtest}(f')] + \eps\]

    Our algorithm either rejects when the algorithm for the bounded labels case rejects or accepts and outputs $h$. It suffices to show $\gL_{\Dtest}(h) \le \min_{f\in\gF}[\gL_{\Dtrain}(f)] + \min_{f'\in\gF}[\gL_{\Dtrain}(f')+\gL_{\Dtest}(f')] + 4\eps$, because the marginal distributions do not change and completeness is, therefore, satisfied directly.

    It suffices to show that for any distribution $\Dtrain$, we have $|\gL_{\Dtrain}(h)- \gL_{\clip_M\circ\Dtrain}(h)| \le \eps$. To this end, note that $\gL_{\clip_M\circ\Dtrain}(h) = \sqrt{\E_{(\x,y)\sim\Dtrain}[(\clip_M(y) - h(\x))^2]}$. We have the following.
    \begin{align*}
        \gL_{\clip_M\circ\Dtrain}(h) &= \sqrt{\E_{(\x,y)\sim\Dtrain}[(\clip_M(y) - h(\x))^2]} \\
            &=\sqrt{\E_{(\x,y)\sim\Dtrain}[(\clip_M(y) -y+y- h(\x))^2]} \\
            &\le \sqrt{\E_{(\x,y)\sim\Dtrain}[(\clip_M(y) -y)^2]} + \sqrt{\E_{(\x,y)\sim\Dtrain}[(y-h(\x))^2]} \\
            &\le \eps + \gL_{\Dtrain}(h)
    \end{align*}
    The first inequality follows from an application of the triangle inequality for the $\gL_2$-norm and the second inequality follows from \Cref{lemma:label-moment-bound-suffices}. The other side follows analogously.
\end{proof}

\end{document}